\documentclass[10pt]{article}
\usepackage{cite}
\usepackage{subcaption}
\usepackage{amsmath, amsthm, amssymb,slashed,url}
\usepackage[misc]{ifsym}
\usepackage{euscript}
\usepackage{mathrsfs}
\usepackage{newfloat}
\DeclareFloatingEnvironment[fileext=lop]{Table}
\usepackage{caption}
\captionsetup[figure]{font=small}
\captionsetup[Table]{font=small}
\usepackage[scr=esstix, cal=boondox]{mathalfa}
\usepackage{titlesec}
 \usepackage{setspace} 
 \usepackage{xcolor}
 \usepackage{hyperref}
 \usepackage{cleveref}
 \usepackage{bbding}
\usepackage{caption, threeparttable}
\usepackage{blindtext,titlefoot}
\captionsetup{labelfont = sc}
\usepackage{ifpdf}
\ifpdf
  \usepackage[pdftex]{graphicx}
  \usepackage{epstopdf}
\usepackage{algpseudocode}
  \usepackage[]{algorithm2e}
 \usepackage{algorithmicx}
\else  
  \usepackage[dvips]{graphicx}
\fi
\textheight 8.5in \textwidth 6.5in \oddsidemargin 0in \topmargin 0in

\usepackage{setspace}
\usepackage[utf8]{inputenc}
\usepackage{leading}
\leading{12pt}

\usepackage{amsthm}
\parskip=\baselineskip

\def\Tr{{\rm Tr}}
\def\16{{\bf 16}}
\def\1{{\bf 1}}
\def\2{{\bf 2}}
\def\3{{\bf 3}}
\def\4{{\bf 4}}

\def\rf{r_{\scriptscriptstyle{f}\, }}

\def\GWBI{{\sc GWBModel-I}}
\def\GWBII{{\sc GWBModel-II}}
\def\gammaR{\gamma_{\scriptscriptstyle{R}}}

\def\hsigma{\hat{\sigma}}

\def\tr{{\mathrm{tr}}}

\def\bp{\begin{pmatrix}}
\def\ep{\end{matrix}}
\def\BM{{\scriptscriptstyle{\text{\sc BM}}}}
\def\BL{{\scriptscriptstyle{BL}}}
\def\wBLI{\wgts_{{\text{\sc BL}_{\text{I}}}}}
\def\wBLII{\wgts_{{\text{\sc BL}_{\text{II}}}}}

\def\wGWI{\wgts_{{\text{\sc GWB}_{\text{I}}}}}
\def\wGWII{\wgts_{{\text{\sc GWB}_{\text{II}}}}}

\def\pinv{{+}}
\def\pdet{{+}}

\def\CCN{{\EuScript N}}

\def\CS{{\mathcal S}}
\def\hat{\widehat}
\def\CiX{\chi}
\def\conf{t}

\def\vecX{\vec{\chi}}
\def\vecY{\vec{\xi}}

\def\meas{\boldsymbol\rho}
\def\measF{\mathcal{F}}
\def\hcov{\hat{\cov}}
\def\sigmaF{\Xi}
\font\teneurm=eurm10 \font\seveneurm=eurm7 \font\fiveeurm=eurm5
\newfam\eurmfam
\textfont\eurmfam=\teneurm \scriptfont\eurmfam=\seveneurm
\scriptscriptfont\eurmfam=\fiveeurm

\font\teneusm=eusm10 \font\seveneusm=eusm7 \font\fiveeusm=eusm5
\newfam\eusmfam
\textfont\eusmfam=\teneusm \scriptfont\eusmfam=\seveneusm
\scriptscriptfont\eusmfam=\fiveeusm

\font\tencmmib=cmmib10 \skewchar\tencmmib='177
\font\sevencmmib=cmmib7 \skewchar\sevencmmib='177
\font\fivecmmib=cmmib5 \skewchar\fivecmmib='177
\newfam\cmmibfam
\textfont\cmmibfam=\tencmmib \scriptfont\cmmibfam=\sevencmmib
\scriptscriptfont\cmmibfam=\fivecmmib

\numberwithin{equation}{section}
\definecolor{brown}{rgb}{.5, 0.16, 0.16}
\definecolor{cyan}{rgb}{0.0, 0.85, 0.8}

\def\Uni{\EuScript{U}}

\def\normpdf{\phi}

\def\[{\left[}
\def\]{\right]}
\def\dr{{\text{\sc Drift}}}
\def\covar{{\text{\sc Cov}}}

\def\Agen{{Z}}

\def\BS{\EuScript{B} (\mathcal{s})}
\def\charv{\vec{{\mathcal{v}}}}
\def\ones{\vec{\mathscr{e}}}

 \def\sym{{\mathrm{sym}}}

\def\hvRet{{\hat{\vmu}_\Rsc}}
\def\CA{{\mathcal A}}

\def\tr{{\mathrm{tr}}}

\def\CO{{\mathcal O}}
\def\CP{\mathscr{P}}

\def\CL{{\mathcal L}}
\def\BLI{\text{\sc BL}_{\text{I}}}
\def\BLII{\text{\sc BL}_{\text{II}}}

\def\GWI{\text{\sc GWB}_{\text{I}}}
\def\GWII{\text{\sc GWB}_{\text{II}}}
\def\outperftstat{\mathcal{t}}
\def\scrS{\mathscr{S}}
\def\outperf{{\Delta \scrS}}
\renewcommand{\(}{\left(}
\renewcommand{\)}{\right)}
\def \half{{1\over2}}

\def\Tr{{\mathrm{Tr}}}

\def\vecZZ{\vec{Z}}

\def\subRet{{\scriptscriptstyle{\vec{R}}}}
\def\Ret{\vec{R}}
\def\Rsc{{{ R}}}
\def\Rsc{{\scriptscriptstyle{ R}}}
\def\vRet{\vec{\mu}_{\Rsc}}
\def\vdr{{\vec{\mathcal{z}}}}
\def\vdrV{{\vec{\mathcal{y}}}}
\def\vmu{\vec{\mu}}
\def\vnu{\vec{\nu}}
\def\vmm{\vec{\mathcal{m}}}
\def\sym{\displaystyle{\text{\normalfont Sym}}}

\def\cov{{\mathcal{C}}}

\def\scW{\mathscr{W}}

\def\prior{{\scriptscriptstyle{P}}}
\def\IPPT{{W}}
\def\viewV{\EuScript{V}}
\def\priorP{{P}}
\def\Est{{\scriptscriptstyle{E}}}
\def\viewVd{\EuScript{V}_\drift}
\def\viewVR{\EuScript{V}_\Rsc}
\def\view{{\scriptscriptstyle{\EuScript{V}}}}
\def\viewd{ {\scriptscriptstyle{\EuScript{V}_{\drift\, }}}}
\def\viewR{{\scriptscriptstyle{\EuScript{V}_{\Rsc\, }}}}
\def\viewP{{\CP}}

\def\cmu{{\mathcal{m}}}
\def\update{{\scriptscriptstyle{U}}}
\def\optU{{\star}}
\def\Diss{\text{\sc{Diss}}}
\def\eye{\mathbb{I}}
\def\imi{\mathscr{i}}
\def\ie{{\it{i.e.}}}
\def\Nprior{{N_a}}
\def\WW{\mathbb{I}_\Nprior + \lambda \viewP^T \viewP}
\def\Nview{{N_v}}

\def\High{\text{High}}
\def\Low{\text{Low}}
\def\BLmodelI{BL Model-I}
\def\BLmodelII{BL Model-II}
\def\mcv{\mathcal{v}}

\newcommand\gauDis[4]{{1\over \sqrt{(2\pi)^{#3} \det {#4}}}. e^{-\half\left( (\vdr - {#1})^T {#2} (\vdr - {#1}) \right)}}
\def\vmusim{{\vmu}_{\scriptscriptstyle{\text{\sc Sim},\wp}}}
\def\covsim{{\cov}_{\scriptscriptstyle{\text{\sc Sim},\wp}}}

\def \CM{{\mathcal M}}

\def\GBI{{\scriptscriptstyle{\text{\sc GWBI}}}}
\def\GBII{{\scriptscriptstyle{\text{\sc GWBII}}}}

\def\be{\begin{equation}}
\def\ee{\end{equation}}
\def\bea{\begin{eqnarray}}
\def\eea{\end{eqnarray}}
\def\wgts{\vec{\mathscr{w}}}
\def \wgx{\vec{\mathscr{x}}}

\def\pushP{{\viewP}_{{\text{\scalebox{.8}{$\sharp$}}}}}
\def\pushPi{{\viewP}_{{\text{\scalebox{.8}{$\sharp$}}}}^{(i)}}
\def\pushF{{\measF}_{{\text{\scalebox{.8}{$\sharp$}}}}}
\def\drift{d}
\def\covs{\cov(\mathcal{s})}

\newtheorem{lemma}{Lemma}[section]
\newtheorem{prop}{Proposition}[section]
\newtheorem{theorem}{Theorem}[section]
\newtheorem{remark}{Remark}[section]
\titleformat*{\subsubsection}{\it}
\usepackage{musicography}
\newtheorem{definition}{Definition}
\newcommand*\pushstar{\hspace{-.05cm}\includegraphics[scale=.275]{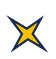}}
\newcommand*\updstar{\hspace{-.065cm}\includegraphics[scale=.275]{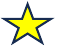}}

\makeatletter
\let\endeqno@equation\endequation
\def\endleqno@equation{\leqno \hbox{\@eqnnum}$$\@ignoretrue}
\def\lefteqnnum{\let\endequation\endleqno@equation}
\def\righteqnnum{\let\endequation\endeqno@equation}
\makeatother

\setcounter{secnumdepth}{3}
\setlength{\parskip}{.5em}

\setlength{\footnotesep}{0.25cm}
\renewcommand{\footnoterule}{\vfill\kern -3pt \hrule width 0.4\columnwidth \kern 2.6pt}

\usepackage{tocloft}



\begin{document}

\begin{titlepage}
\begin{flushright}
\today
\end{flushright}
\vskip .350in
\begin{center}
{\bf\Large{A Geometric Approach To Asset Allocation With Investor Views}}
\vskip
0.25cm {\small Alexandre V. Antonov${}^{\spadesuit}$, Koushik Balasubramanian${}^{\spadesuit}$, Alexander Lipton${}^{\spadesuit,\dagger,\clubsuit,\star}$, Marcos Lopez de Prado${}^{\spadesuit, \dagger, \musNatural, \blacklozenge}$}
\vskip
0.1cm {\small${}^{\spadesuit}$Strategy and Planning Department, ADIA, Abu Dhabi, UAE} 
\vskip
0.1cm{\small${}^{\dagger}$ADIA Lab, Abu Dhabi, UAE} 
\vskip
0.1cm{\small${}^{\clubsuit}$Khalifa University, Abu Dhabi, UAE} 
\vskip 0.1cm 
{\small${}^{\star}$MIT Connection Science, MIT Cambridge,  MA, USA} 
\vskip 0.1cm 
{\small${}^{\musNatural}$ School of Engineering, Cornell University, NY, USA}
\vskip 0.1cm
{\small$\blacklozenge$ Lawrence Berkeley National Laboratory, CA, USA}
\end{center}
\vskip 0.2in

\baselineskip 10pt
\abstract{In this article, a geometric approach to incorporating investor views in portfolio construction is presented. In particular,  the proposed approach utilizes the notion of {\it generalized Wasserstein barycenter} (GWB) to combine the statistical information about asset returns with investor views to obtain an updated estimate of the asset drifts and covariance, which are then fed into a mean-variance optimizer as inputs.
Quantitative comparisons of the proposed geometric approach with the conventional Black-Litterman model (and a closely related variant) are presented. The proposed geometric approach provides investors with more flexibility in specifying their confidence in their views than conventional Black-Litterman model-based approaches. The geometric approach also rewards the investors more for making correct decisions than conventional BL based approaches. We provide empirical and theoretical justifications for our claim.
}

\vfill
\unmarkedfntext{$^*$ Author names are listed in alphabetical order of the last names. }
\end{titlepage}
\setcounter{tocdepth}{1}
\addtocontents{toc}{\protect\thispagestyle{empty}}
\tableofcontents
\thispagestyle{empty}
\newpage
\setcounter{page}{1}
\setstretch{1.0}

\section{\sc Introduction} \label{sec:Intro}
The Black-Litterman (BL) asset allocation model uses a Bayesian approach to infer the assets' expected returns based on a prior and views specific to investors \cite{BL}. Despite the vast amount of research work on this topic \cite{He, Bertsimas,Meucci,Rachev,Kolm,Meucci2014,Fabozzi, Duraj} the BL model continues to be an area of great interest. In the present article, we present a geometric approach to incorporate investor views rather than the conventional Bayesian approach used in the traditional BL model, as well as provide a means to incorporate their confidence in the views. Before we proceed to a formal introduction of the geometric approach, we will discuss the need for an alternative approach to incorporate views.

 To understand the need for an alternative approach it seems essential to understand the fact that the investor's personal confidence and precision of the views are independent. An investor who wishes to incorporate his or her views must provide expectation on asset returns along with the ``error-bars'' (or technically, ``confidence'' intervals) for the views. In fact, the investor needs to provide the complete information about the views distribution if the views are non-Gaussian. The ``confidence'' intervals do not represent the investor's personal confidence. An investor could choose to use the ``confidence'' intervals as a measure of personal confidence but he or she might choose to use other metrics (which could be subjective) for specifying their personal confidence. We will provide a concrete example to illuminate this remark $-$ if the investor believes that the methodology used to determine the views are not technically reliable then he or she will have no confidence in the views irrespective of the precision (or the ``error-bars''). For instance, the investor will have no confidence in a set of views if he or she discovers that the views were determined using look-ahead bias or corrupt data irrespective of the precision of the views.
 
 Though this is an extreme example, it demonstrates that the investor confidence and precision of the views are independent. To provide a less extreme example, let us consider an investor who uses proprietary signals to generate views systematically and also has views generated based on analysts estimates. Let us assume that the investor chooses to use only the proprietary model to determine the views on expected returns. The precision (inverse covariance) of the views could be computed from the historical predictions produced by the proprietary model. The investor's confidence in the proprietary-model-based views can be determined from the fraction of the observation period in which the proprietary model outperformed the model based on analysts estimates. In this example, it is again clear that the investor's confidence is unrelated to the precision of the views.
 
The conventional BL model incorporates the precision of views into the allocation process while the investor's subjective confidence is not incorporated. This claim will be demonstrated with the help of a {\it gedankenexperiment} in $\S$\ref{subsec:Gedanken}. For now, we will present some heuristic arguments to support this claim. An investor wishing to incorporate his or views should have the flexibility to specify any degree of confidence for a given views distribution.{\footnote{We refrain from using the term confidence level as this can be misinterpreted as the statistical confidence interval associated the views. The degree of confidence is the investor's subjective confidence on his or her views.}} That is, if the investor has 100\% confidence in his or her views then it is desirable to have the posterior or updated distribution match with the views distribution and if he or she has 0\% confidence then the desired update should match with the prior. For degrees of confidence strictly between 0\% and 100\%, then it is desirable to have the updated distribution smoothly interpolate between the prior and views. Figure \ref{fig:Confidence} shows the ``evolution'' of the {\bf desired} posterior distribution with the degree of confidence, for a hypothetical example where the prior and views distributions are Gaussian distributions on $\mathbb{R}^2$. In the conventional BL approach, if the prior and views distribution are specified then the prior and likelihood function for the Bayesian update rule are known and the posterior is computed from the product of the prior and likelihood function obtained from the views (see for {\it e.g.}, \cite{Kolm}). Hence, it is not possible to tune the investor's confidence in the conventional BL framework as it does not even appear in the update rule. From the earlier discussion, since the precision of the views and investor confidence are independent, it is clear that tweaking the parameters that change the precision of the views is not equivalent to tuning the investor's confidence. Hence, it seems that an alternate approach is needed for incorporating the subjective confidence of the investor in the allocation model.
 
 At first sight, a mathematical model to incorporate subjective confidence in an allocation model might seem infeasible. In this paper, we will describe a rigorous geometric approach to incorporate the subjective confidence of an investor. As observed earlier in the hypothetical example, confidence is a parameter that allows us to smoothly interpolate between the prior and views distribution. Interpolating between probability distributions is a well-studied topic in optimal transport theory. Optimal transport theory is a field of study that combines ideas from geometry and measure theory. In this paper, we propose an approach for incorporating investor's views by using the notion of generalized Wasserstein barycenter (GWB) introduced in \cite{GWB}. In particular, we show that the GWB of the prior and views distribution satisfies the desired properties of a posterior discussed earlier. We derive closed form expression for the GWB of the prior and views distribution which is a generalization of the McCann interpolant \cite{McCann}. This generalization is the main result of our paper.
 
Rest of this article is organized as follows: In $\S$\ref{sec:RevBL} we present a review of the original BL model and a closely related variant proposed by Meucci \cite{Meucci}. We notice that our geometric alternative based on the proposal in \cite{Meucci} has properties that are intuitive to an investor. Hence, it is worthwhile reviewing the proposal in \cite{Meucci} along with the original BL model. In $\S$\ref{subsec:Gedanken}, we present a {\it gedankenexperiment} to demonstrate that conventional BL models cannot interpolate between the prior and views distribution. In $\S$\ref{sec:GeomIntro}, we explain how the GWB is utilized in our geometric approach. In $\S$\ref{sec:GaussGWB}, we present the optimization problem for determining the GWB of the prior and views distribution. We also explain why the geometric approach can be extended to the case when the views are degenerate in $\S$\ref{subsec:Degen}. The main result of the paper is presented in $\S$\ref{sec:mainRes} where we present a closed form expression for the optimal update (or posterior) in our geometric approach. In $\S$\ref{sec:MVO} we show how the geometric updates can be used within the mean-variance optimization (MVO) framework. In sections $\S$\ref{sec:TestI} we provide methodologies for comparing the current approaches with the conventional BL approach (and its variant). Finally we conclude the paper by presenting a summary of our findings and presenting a brief outlook on the future directions.

\begin{figure}[t]
 \captionsetup{width=.85\linewidth}
\begin{center}
\hspace{.5cm}
\subfloat[]{\includegraphics[scale=0.2]{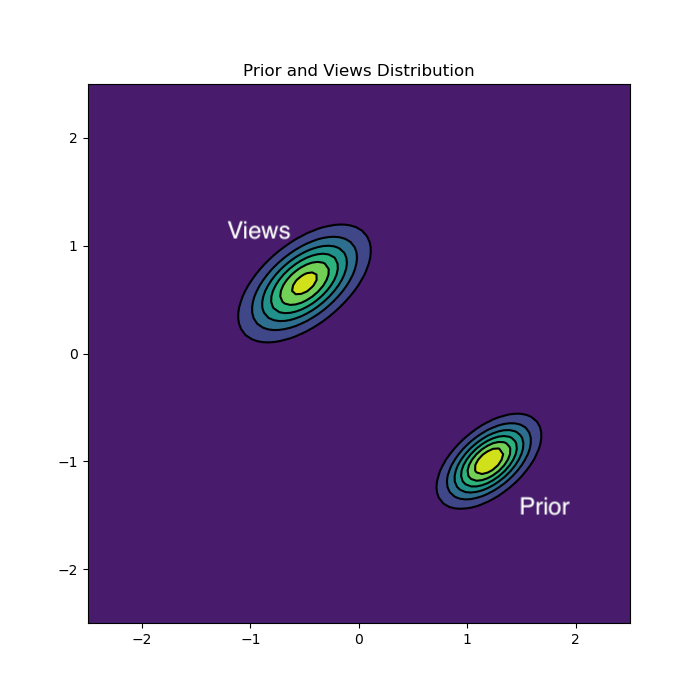}}
\subfloat[]{\includegraphics[scale=0.2]{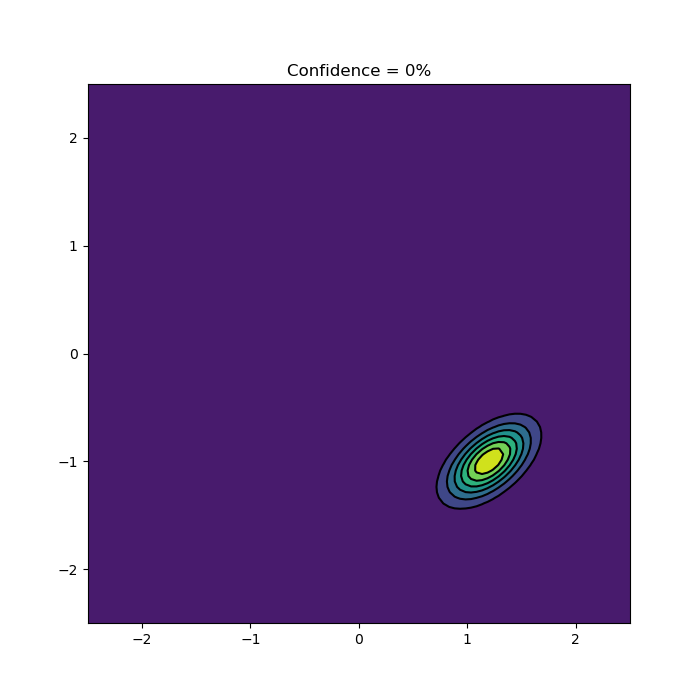}}
\subfloat[]{\includegraphics[scale=0.2]{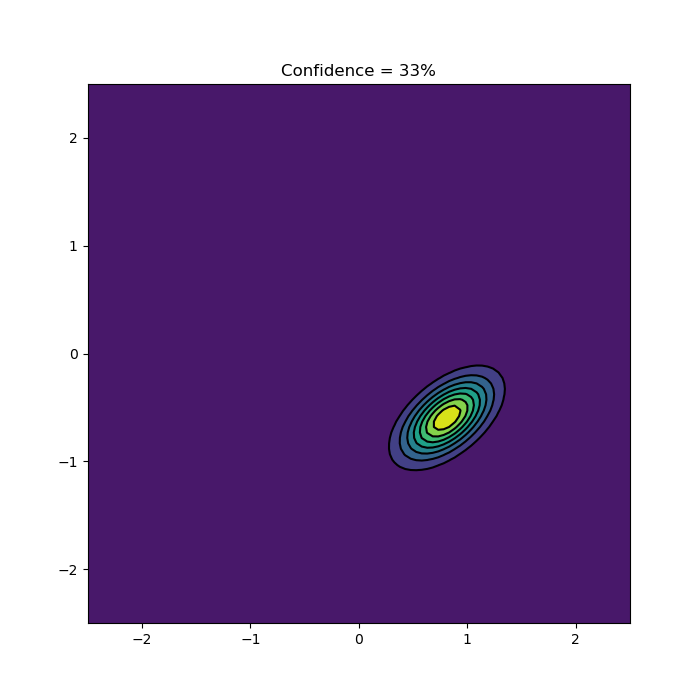}}
\end{center}
\begin{center}
\hspace{.5cm}
\subfloat[]{\includegraphics[scale=0.2]{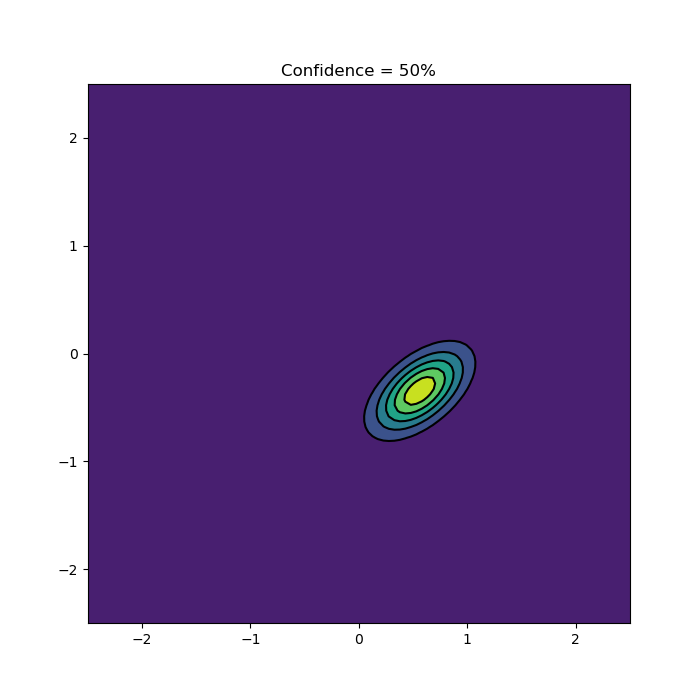}}
\subfloat[]{\includegraphics[scale=0.2]{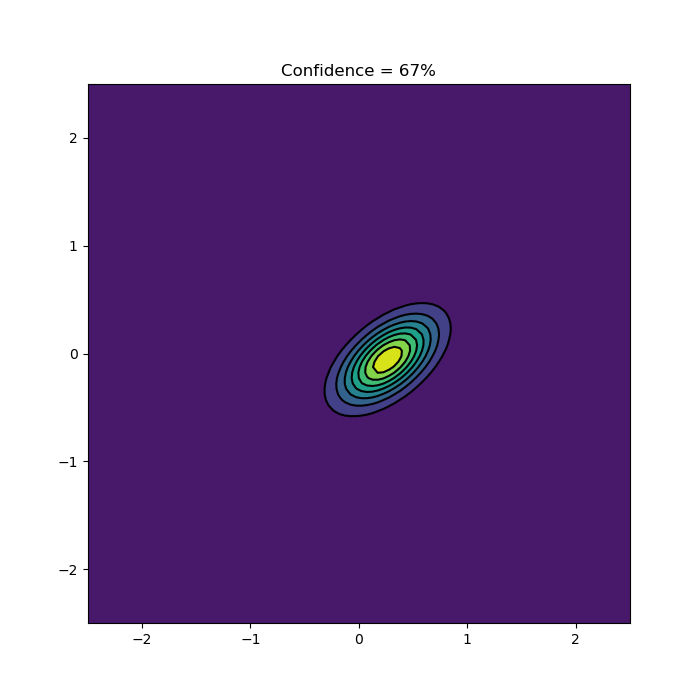}}
\subfloat[]{\includegraphics[scale=0.2]{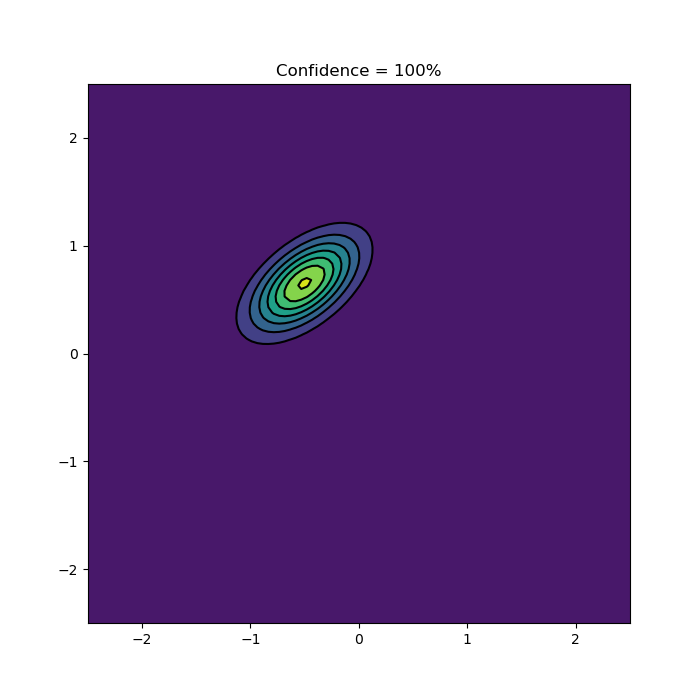}}
\end{center}
\begin{center}
\caption{(A) Shows a contour plot of hypothetical prior and views distributions. In this hypothetical example we assume there are only two assets and two views on the assets. We also assume that the distributions are normal as in the Black-Litterman model. (B)-(F) Show the desired updated distribution for different levels of investors' ``confidence''. When the investor is 100\% confident of his or her views, then it is desirable to have an updated distribution match with views distribution and when the confidence in the views is 0\%, then it is desirable to have an updated distribution match with the prior.\vspace{-.75cm}}
\label{fig:Confidence}
\end{center}
\end{figure}
\vspace{-.25cm}
\section{\sc Review of Black-Litterman Model {\it \&} a Variant}\label{sec:RevBL}

In this section, we present a lightning review of two versions of the Black-Litterman (BL) model. The review of BL model is in no way comprehensive and readers might find more elaborate reviews in the references (see for {\it e.g.}, \cite{BL} -\cite{Kolm}). In the first subsection, we will discuss the original proposal of Black and Litterman and in the second subsection we will discuss a variant proposed by Meucci. The two models differ in the way investors wish to incorporate their views. In the original BL model, the investors specify their views on the expected drift of a linear combination of assets (or drifts of certain portfolios). Subsequently, Meucci \cite{Meucci} proposed a minor modification of the model, where the prior beliefs and the investor views are directly specified on the asset returns instead of the drifts. In practice, these two approaches yield very different portfolios with different performance characteristics. 

 In this article we will refer to the conventional BL model or the original model proposed by Black and Litterman as \BLmodelI~and the variant discussed in \cite{Meucci} as \BLmodelII. 
We will now present a review of these two models.
\vspace{-.15cm}
\subsection{Original Black Litterman Model} \label{subsec:BLI}
 \vspace{-0.1cm}
A detailed discussion of the \BLmodelI~ will take us too far, however it is worthwhile reviewing the assumptions of the \BLmodelI \, and aspects of the model that are related to its underlying assumptions.
\begin{itemize}
\item {\bf Assumption 1}: The observable asset returns ($\Ret$) are assumed to follow a Gaussian distribution centered around a mean ($\vRet$) and the covariance of the returns is denoted by $\cov_\Rsc$.  Mathematically,
\be
\vec{R} \sim \CCN(\vRet, \cov_\Rsc), \qquad \vRet \in \mathbb{R}^{\Nprior} ,~ \cov_\Rsc \in {\text{Sym}}^{++}_\Nprior(\mathbb{R})
\ee
where $\vRet$ is the drift, $\cov_\Rsc$ is the covariance of returns and ${\text{Sym}}_N^{++}(\mathbb{R})$ is the set of all symmetric, real $N \times N$ positive definite matrices.  Though the assumption of Gaussianity of the asset returns is not completely corroborated by real-world data, it provides mathematical convenience and it is a relatively common assumption in the mathematical finance literature. Note that $\vRet$ and $\cov_\Rsc$ are unobserved quantities and need to be estimated. We will denote the estimate of $\vRet$ by $\hat{\vmu}_{\Rsc}$ and $\text{\sc Cov}(\Ret | \hvRet)$ by $\hat{\cov}_\Rsc$. In the original BL model, the estimate of $\hvRet$ is assumed to be uncertain and it is the next item in the list of assumptions. 

\item {\bf Assumption 2}: The estimate of the drift $\hvRet$ is assumed to be normally distributed with covariance ($ \cov_\drift$):
\be
\hvRet = \vmu_\drift + \vec{\epsilon}_\drift, \quad \text{where} ~ \vec{\epsilon}_\drift \sim \CCN(\vec{0}_\Nprior, \cov_\drift),  
\label{eq:BLIprior}
\ee
where $\vmu_\drift \in \mathbb{R}^\Nprior$ is the expected value of the estimated drift in returns, $\cov_\drift \in {\text{Sym}}^{++}_\Nprior(\mathbb{R})$ is the covariance of the estimated drift in returns, $\vec{0}_\Nprior$ denotes the zero-vector or the origin of $\mathbb{R}^\Nprior$ and $ \vec{\epsilon}_\drift$ models the noise resulting from the uncertainty in the estimation. Note that $\cov_\Rsc \neq \hat{\cov}_\Rsc \equiv \text{\sc Cov}(\Ret | \hvRet) $ as the uncertainties in $\hvRet$ also contribute to $\cov_\Rsc$  and in fact, $\cov_\Rsc = \hat{\cov}_\Rsc + \cov_\drift$ (see \cite{Kolm}, for instance).

 \hspace{0.5cm} The following example provides a simple approach for obtaining $\vmu_\drift$ and $\cov_\drift$ statistically. Historical sample mean is a simple {\it estimate} of the drift in the returns ($\hvRet$). Different estimates of the drift can be computed as the mean of multiple bootstrapped samples obtained by resampling the sample data. In this case, $\vmu_\drift$ is the bootstrap aggregated mean and $\cov_\drift$ is the bootstrap aggregation of the covariance of the drifts. However, this method of estimating the drift using historical returns cannot incorporate investor views and are often considered unsatisfactory to be used as the estimate for drift even in the absence of views  \cite{ BL}. Black {\it \&} Litterman \cite{ BL} provide an argument for estimating the drift $\vmu_\drift$ in the absence of investor specific views (that is, all investor views are identical). This argument will be discussed in the following assumption. 

\item {\bf Assumption 3}: 
If all the investors have identical views, then all investors positions align with the market (or a relevant benchmark portfolio) weights, $\vec{w}_{\BM}$. If all investors use an unconstrained mean-variance optimization with an average risk aversion parameter $\gammaR$ to determine the weights, then the 
 expected drift, $\vmu_\drift$ is obtained from the reference or benchmark weights ($\vec{w}_{\BM}$) by inverting the Markowitz optimality condition as shown below \cite{Bertsimas}
\be
\vmu_\drift = \rf \ones + \gammaR {\cov}_{\Rsc} \vec{w}_{\BM} = \rf \ones + \gammaR (\hat{\cov}_{\Rsc} + \cov_\drift) \vec{w}_{\BM}
\label{eq:BLRef}
\ee
where $\rf$ is the risk free rate. Equation (\ref{eq:BLRef}) is referred to as the ``equilibrium'' model as it explains the drift in asset returns when the market is in full-equilibrium where all participants have equal information and use the same methodology for allocation \cite{BL, Bertsimas}.
In general,  it is possible to obtain other estimates of the covariance matrix $ \hat{\cov}_{\Rsc}$ and the expected drift $\vmu_\drift$ through a reverse optimization procedure \cite{Bertsimas}, where the utility functions are different from the mean-variance based utility functions. It is also assumed that the covariance of $\hvRet $ is proportional to the conditional covariance of $\Ret$. That is,
\be
\cov_\drift = \tau \hat{\cov}_\Rsc
\ee
where $\tau$ is some scalar parameter, which has received a lot of attention from researchers \cite{Meucci}. It is worth noting that $0 \le \tau \le 1$ for $\vmu_\drift$ to be a reasonable estimate of $\hvRet$ or $\vmu_\Rsc$. This is because, the mean of expectation returns can be more accurately estimated than the mean of returns. If the equilibrium model drift is computed using the sample mean of an observation of length $T$, then we would have $\tau = 1/T$ (assuming independence). If $\tau$ is obtained based on a calibration procedure that compares the uncertainty of the equilibrium model with the sample estimator then it seems reasonable to set $\tau \approx 1/T$ \cite{Meucci, Rachev}. 
\item {\bf Assumption 4}: Investors and experts may have views ($\view_\drift$) that are not aligned with market (or the benchmark) and may wish to incorporate them in their allocation process. 
Note that the investor must also
provide a level of uncertainty by specifying $\cov_\viewd$. More generally, the investor specifies the views by specifying the distribution of expected returns, which could be non-normal. 
In the original BL model (\BLmodelI\,), the views distribution is assumed to be Gaussian.
That is,  the investors specify their views on the expected drifts (expectation on expected returns) of assets as shown below:
\be
\viewP. \hvRet  = \vnu_{\viewd}+ \vec{\eta}_{\viewd}, \quad \text{where} ~ \vec{\eta}_{\viewd} \sim \CCN(\vec{0}_\Nview, \cov_{\viewd})
\label{BLViewsd}
\ee 
where $\viewP \in \mathbb{R}^{\Nview \times \Nprior}$ is the views matrix which specifies the expected return on specific assets or some combinations of assets, 
$\vnu_{\viewd} \in \mathbb{R}^\Nview$, $\cov_\viewd \in \text{Sym}^{++}_\Nview(\mathbb{R})$ and $\Nview$
 is the number of views. Note that each row (denoted by $\vec{\mathscr{p}}_r$) of the views matrix $\viewP$ represents the weight of a portfolio $\Pi_r$ and the expectation of the expected return of this portfolio is $\nu_{\viewd,r}$ \cite{He}. The portfolio $\Pi_r$ could be a long only portfolio (even possibly with only asset) or could be a long-short portfolio. 
Note that $\viewP$ could be degenerate (in principle) due to the presence
  of multiple views (could even be conflicting) on the same assets. 
 If the views are independent, then the covariance matrix $\cov_\viewd$ associated with the views are diagonal. In a general case, it is possible to transform the views matrix $\viewP$ and the views drift $\vnu_\viewd$ in a way that makes $\cov_\view$ diagonal \cite{He}. 
However, for the purpose of this note, we will not make any assumptions about $\cov_\viewd$ and allow it to be a symmetric non-diagonal matrix. We have used the suffix $\viewVd$ for denoting the views covariance matrix, $\cov_\viewd$, to emphasize that the views are specified on the drifts. 
\end{itemize}
The BL model estimates the drift in the presence of views using a Bayesian approach where the prior distribution is given (\ref{eq:BLIprior}) with the drift parameter given by (\ref{eq:BLRef}) and the posterior distribution is obtained by computing the distribution of the expected returns given the views $\viewVd$ $\(\text{ denoted by }\mathbb{P}(\hat{\vmu}_{\Rsc}|{{\viewVd}}) \text{ in this article}\)$.  


We will now state the main result of the BL model:
Given the views ${\viewVd}$ on the drift in equation (\ref{BLViewsd}), the updated or posterior distribution of the estimated expected returns $\hvRet$ is given by
\be
\mathbb{P}(\hat{\vmu}_{\Rsc} |{\viewVd}) = \normpdf\(\hat{\vmu}_{\Rsc};\vmu_{\BL}, \cov^{(\vRet)}_{\BL}\)  
\label{BLdist}
\ee
where, $\normpdf\(\vecZZ ;\vmu, \cov\)$ is the probability distribution function of a Gaussian random variable, $\vecZZ$ $\sim$ $ \CCN\(\vmu, \cov\)$ and{\footnote{Using Woodbury identity $ \vmu_{\BL}$ and $\cov^{(\vRet)}_{\BL} $ can be written in a form that does not require the inverses of $\cov_\Rsc$ and $\cov_\viewd$ separately.}}
\be
 \vmu_{\BL} = \( \(\tau \hat{\cov}_\Rsc\)^{-1} +   \viewP^T \cov_\viewd^{-1} \viewP\)^{-1}\( \(\tau \hat{\cov}_\Rsc\)^{-1} \vmu_\drift+ \viewP^T\cov_\viewd^{-1} \vnu_{\viewd} \)
\label{BLmu}
\ee
\be
 \cov^{(\vRet)}_{\BL} = \( \(\tau \hat{\cov}_\Rsc\)^{-1} +   \viewP^T \cov_\viewd^{-1} \viewP\)^{-1}
 \label{BLcov}
\ee
The derivation of the above update equations can be found in \cite{Meucci, Kolm}. Note that the updated estimate for the distribution of asset returns is now given by,
\be
\mathbb{P}\({\Ret|\viewVd} \) = \normpdf\(\vmu_{\BL}, \hat{\cov}_{\subRet|\viewd} \),\quad {\text{where }}  \hat{\cov}_{\subRet|\viewd}  = \hat{\cov}_{\Rsc} +  \cov^{(\vRet)}_{\BL}
\label{BLCUpdate}
\ee
\subsection{A Variant of the Black-Litterman model}  \label{subsec:BLII}
 \vspace{-0.1cm}
In \cite{Meucci}, it was suggested that the expert views could be directly expressed on the raw asset returns instead of the expected returns. Meucci argues that specifying the views on the estimated drifts, as done in \BLmodelI, is often ``counterintuitive'' in limiting situations, even though the results are fully consistent. For instance, the covariance of the posterior distribution has a non-trivial dependence on $\tau$ even in the limit when the views are completely uninformative as well as the case when the views are completely correct. This dependence on $\tau$ stems from the fact that the estimated drift is uncertain which is inherent in the model assumptions.

Meucci  \cite{Meucci} proposed an alternate way of incorporating the views which have more intuitive limiting behaviors.
In this note, we will develop geometric methods that are analogous to both \BLmodelI\, and \BLmodelII\,  to check if any of the geometric methods yield ``counterintuitive'' result. Hence, it seems essential to understand the differences in the underlying assumptions of both these approaches.
\begin{itemize}
\item {\bf Assumption 1$'$}: As in the original Black-Litterman model, the observable asset returns ($\Ret$) are assumed to follow a Gaussian distribution centered around a mean ($\vRet$) and the covariance of the returns is denoted by $\cov_\Rsc$.  Mathematically,
\be
\vec{R} \sim \CCN(\vRet, \cov_\Rsc), \qquad \vRet \in \mathbb{R}^{\Nprior} ,~ \cov_\Rsc \in {\text{Sym}}^{++}_\Nprior(\mathbb{R})
\ee
Unlike the original BL model, it is assumed that $\vRet = \rf \ones + \gammaR {\cov}_{\Rsc} \vec{w}_{\BM}$ which eliminates the need for modeling it as a random variable.
\item {\bf Assumption 2$'$}: Expert views are expressed on the asset returns directly instead of the expected returns as shown below:
\be
\viewP. \vec{R}  = \vnu_{\view}+ \vec{\eta}_{\viewR}, \quad \text{where} ~ \vec{\eta}_\view \sim \CCN(\vec{0}_\Nview, \cov_\viewR)
\label{BLViews}
\ee 
\end{itemize}
We have used the suffix $\viewVR$ for denoting the views covariance matrix, $\cov_\viewR$, to emphasize that the views are specified on the asset returns directly. 
In this variant of the BL model, the distribution of the returns is updated as shown below:
\be
\mathbb{P}\(\vec{R}|\viewVR\) = \phi\(\vmu^{(\vec{R})}_{\BL'},  \cov^{(\vec{R})}_{\BL'}  \)
\label{BLMupdate}
\ee
where,
\be
 \vmu^{(\vec{R})}_{\BL'} = \( \hat{\cov}^{-1}_\Rsc +   \viewP^T \cov_\viewR^{-1} \viewP\)^{-1}\( \hat{\cov}^{-1}_\Rsc \hvRet + \viewP^T\cov_\viewR^{-1} \vnu_{\view} \)
\label{BLMmu}
\ee
\be
 \cov^{(\vec{R})}_{\BL'} = \(  \hat{\cov}^{-1}_\Rsc+   \viewP^T \cov_\viewR^{-1} \viewP\)^{-1}
 \label{BLMcov}
\ee
where $\hat{\cov}_\Rsc$ is an estimate for the covariance of returns ($\cov_\Rsc$) and $\hvRet$ is an estimate of the expected returns of the assets prior to incorporating any views. The above results can be obtained in the same manner in which the updates are derived in the Black-Litterman model. Note that the parameter $\tau$ does not appear in this model. The details of this derivation can be found in \cite{Meucci}. 

 \vspace{-0.25cm}
\subsection{A Simple {Gedankenexperiment}}\label{subsec:Gedanken}
 \vspace{-0.1cm}
Let us imagine that there is only one asset in the entire investible universe {\it i.e.}, $\Nprior=1$ in $\S$\ref{subsec:BLI} and $\S$\ref{subsec:BLII}. 
Let us also assume that the investor has a view about this asset ($\Nview =  1$) which could be a view on the expected returns of the asset (as in \BLmodelI) or the asset returns directly (as in \BLmodelII). 

First, we will present analysis of \BLmodelI\,.
We will denote the estimate of variance of the asset return ($R$) by $\hsigma^2_\Rsc$ and the variance of the expected returns of the asset ($\hat{\mu}_\Rsc$) by $\sigma^2_\drift$. In the notations of  $\S$\ref{subsec:BLI} we have, $\hcov_\Rsc = \hsigma^2_\Rsc = {\text{\sc Var}}(R|\hat{\mu}_\Rsc)$ and $\cov_\drift = \sigma^2_\drift = \tau  \hsigma^2_\Rsc$. The investor's view on the drift is denoted by $\nu_\view$ and the the corresponding variance is denoted by $\sigma^2_{\view}$. 
After incorporating the investor's view using \BLmodelI\, the updated expected return of the assets is given by,
\be
\mu_{\BL} = \({\sigma^2_\viewd\,  \hat{\mu}_\Rsc + \sigma^2_\drift\, \nu_\viewd \over \sigma^2_\viewd + \sigma^2_\drift} \)
 \equiv \({\sigma^2_\viewd\,  \hat{\mu}_\Rsc + \tau \hsigma^2_\Rsc\, \nu_\viewd \over \sigma^2_\viewd + \tau \hsigma^2_\Rsc} \)
 \label{eq:singleMu}
\ee
The above result is a direct application of equation (\ref{BLCUpdate}) for a single asset and a single view. Though the result in equation (\ref{eq:singleMu}) is sufficient for this {\it gedankenexperiment}, we will also present the updated variance of returns below (for the sake of completeness):
\be
\hsigma^2_{\subRet|\viewd}  = \hsigma^2_\Rsc + \sigma^2_{\BL}, \qquad \text{wher } \sigma^2_{\BL} = \tau \({ \hsigma^2_\Rsc \sigma^2_\viewd \over \tau \hsigma^2_\Rsc + \sigma^2_\viewd}\)
 \label{eq:singleVar}
\ee
Recall that $\CCN(\mu_{\BL}, \sigma^2_{\BL})$ is the updated distribution of the expected asset return (drift) while $\CCN(\mu_{\BL}, \hsigma^2_{\subRet|\viewd})$ is the updated distribution of the asset return. We will now present the main findings of this {\it gedankenexperiment}.

If the investor is completely confident about his or her views 
 then the intuitive expectation is that the posterior distribution will match with the investor's view distribution. 
However it is clear from equations (\ref{eq:singleMu}) and (\ref{eq:singleVar}) that \BLmodelI\, cannot produce the investor's view distribution as the updated distribution, for any value of $\tau$, since $\tau \in [0,1]$. Making the updated distribution align with the views distribution by choosing artificially high values of $\tau$ ($\tau \rightarrow \infty$) is illogical as it would imply $\sigma^2_\drift \gg \hsigma^2_\Rsc$, that is the noise in the expected returns is much greater than the noise in the returns. Therefore, it is not possible to obtain the views distribution as the posterior by tuning $\tau$ and hence $\tau$ is not a parameter that specifies an investor's personal confidence. It is in fact a parameter that specifies the ``error-bars'' for the estimates in expected return.
Since the investor provides views on the expected drift with the level of uncertainty, $\sigma^2_\viewd$ is not tunable either. In some research articles, $\sigma^2_\viewd$ by a $\tau$ dependent factor or $\tau$ could be defined in terms of the ratio of the number of observation points used for generating views to the number of observation points in the equilibrium model. In this case, taking $\tau \rightarrow \infty$ turns the view distribution into a Dirac delta. In summary, $\tau$ cannot be used to interpolate between the ``equilibrium'' distribution and the views distribution.

In the case of \BLmodelII\,,  it is clear from equations (\ref{BLMupdate})-(\ref{BLMcov}) that there are no tunable parameters. Thus it is not possible to obtain the investor's view distribution as the posterior without data dredging the views covariance matrix in \BLmodelII\, as well.

With the help of this simple {\it gedankenexperiment}, we have demonstrated that neither \BLmodelI\, nor \BLmodelII\, could reproduce the investor's views distribution as the posterior distribution without making illogical choices of parameters or data dregding. We will show that the geometric approach gives the investor flexibility to tune the confidence level so that the geometric posterior distribution will match with the investor's views distribution.
 
 \vspace{-0.25cm}
 \section{\sc{Distance Between Distributions}}\label{sec:GeomIntro}
 \vspace{-0.25cm}
 In this note, we will provide an alternate approach for incorporating investor views. In particular, we obtain the distribution of estimated drift (or returns) in the presence of views, as the generalized Wasserstein barycentre (GWB)  of
the views and reference distribution.  The focus of this section is to introduce the notion of GWB and discuss its relevance for asset allocation.

 In the following, the prior distribution could refer to the distribution of estimated drift or the distribution of the asset returns. 
  If the prior is assumed to be the estimated drift, then the views are expressed on the drift and in the other model, the views are directly expressed on the asset returns. We can then derive geometric methods that are analogous to \BLmodelI \, and \BLmodelII\, by a simple mapping and renaming of variables (discussed in remarks \ref{rem:BLI} and \ref{rem:BLII} of $\S$\ref{sec:mainRes}).

 We are interested in finding a target or updated distribution $f_\update$ that is as ``close" as possible to the prior (or reference) distribution, $f_{\prior}$, while staying in the ``proximity'' of the views.
``Proximity'' between distributions can be defined by introducing the notion of dissimilarity between distributions. The goal of the current approach can then be restated mathematically as follows:
\bea
\label{eq:OptUpdate}
(\cmu_{\optU},\cov_{\optU}) &= &\mathop{\text{argmin}}_{\cmu_\update, \cov_\update}  \Diss(f_\update , f_\prior) \\
\nonumber
\text{subject to} \qquad \qquad& &\\
\label{viewIneq}
\Diss(\pushP [f_\update] , f_\view)  &\le &d_0
\eea
where $\Diss(A,B)$ denotes a generic measure of dissimilarity between the distributions $A$ and $B$;  $\pushP [f_\update]$ denotes the {\it push-forward} of the ``update'' measure onto the views space along the map, $\viewP$. A formal definition of a push-forward measure can be found in Appendix \ref{app:push}. 

We can modify the optimization problem  in the constrained form to a Lagrangian form as shown below:
\bea
(\cmu_{\optU},\cov_{\optU}) &= &\mathop{\text{argmin}}_{\cmu_\update, \cov_\update} \left[ \(\Diss(f_\update , f_\prior)  + \lambda \Diss(\pushP [f_\update] , f_\view) \) \right]
\label{eq:LagForm}
\eea
where $\lambda$ is a Lagrange multiplier that plays the role of a tuning parameter that turns the constraint in (\ref{viewIneq}) into a term in the cost function. We would like to point out that dissimilarity or distance based approaches to BL models have appeared before in \cite{Meucci2014} and \cite{Duraj}.

 The optimization problem specified in equation ( \ref{eq:LagForm}) is in the Lagrangian form while the problem in equations ( \ref{eq:OptUpdate}) and (\ref{viewIneq}) is a constrained optimization problem (COP). The equivalence between the Lagrangian form and the COP form can be guaranteed by choosing a dissimilarity metric $\Diss$ such that Slater conditions are satisfied for all $d_0 > 0$ and $\Diss(A, B) \ge 0$ for any distributions $A$ and $B$ for which the dissimilarity is defined. 
Proposition \ref{prop:LagForm} in Appendix  \ref{app:LagForm} provides the precise details of the equivalence between the constrained optimization problem specified by equations (\ref{eq:OptUpdate})-(\ref{viewIneq}) and the optimization problem in equation (\ref{eq:LagForm}). 

 
The problem in (\ref{eq:LagForm}) is quite abstract since the dissimilarity measure has not been specified yet.
In this note, we will consider the Fr\'ech\'et or $L_2-$Wasserstein distance as the measure of dissimilarity. The definition of $L_2-$Wasserstein distance can be found in appendix B. The $L_2-$Wasserstein distance induces a metric on the space of probability measures. Note that the problem in equation (\ref{eq:LagForm}) can be written as the minimization of the following Lagrangian:
\be
\CL_{GWB} = \(\EuScript{D}_{WD}(f_\update , f_\prior)  +\lambda \EuScript{D}_{WD}(\pushP [f_\update] , f_\view)\)\ee
The above minimization problem is the same as computing the generalized Wasserstein barycenters (GWB) for two centers \cite{GWB} after writing $\lambda = t/(1-t)$ and multiplying $\CL_{GWB}$ by $(1-t)$ for $t \in [0,1)$. 
In \cite{GWB}, the authors consider the problem of finding the GWB when there are more than two centers. It is possible to obtain an analytical expression for the GWB of two Gaussian distributions and we will show that it is a generalization of McCann interpolant \cite{McCann}.  
The  problem in (\ref{eq:WDDist}) can be generalized to other distribution of returns and views. The views could also be prescribed through an arbitrary map $\viewP$ which need not be linear. 
However, we could only derive the analytical solution for the Gaussian case when $\viewP$ is linear. In other cases, the problem needs a numerical approach. 

In the next section we present the problem specialized to Gaussian distributions.

\begin{figure}[h]
\captionsetup{width=.85\linewidth}
\begin{center}
\includegraphics[scale=.25]{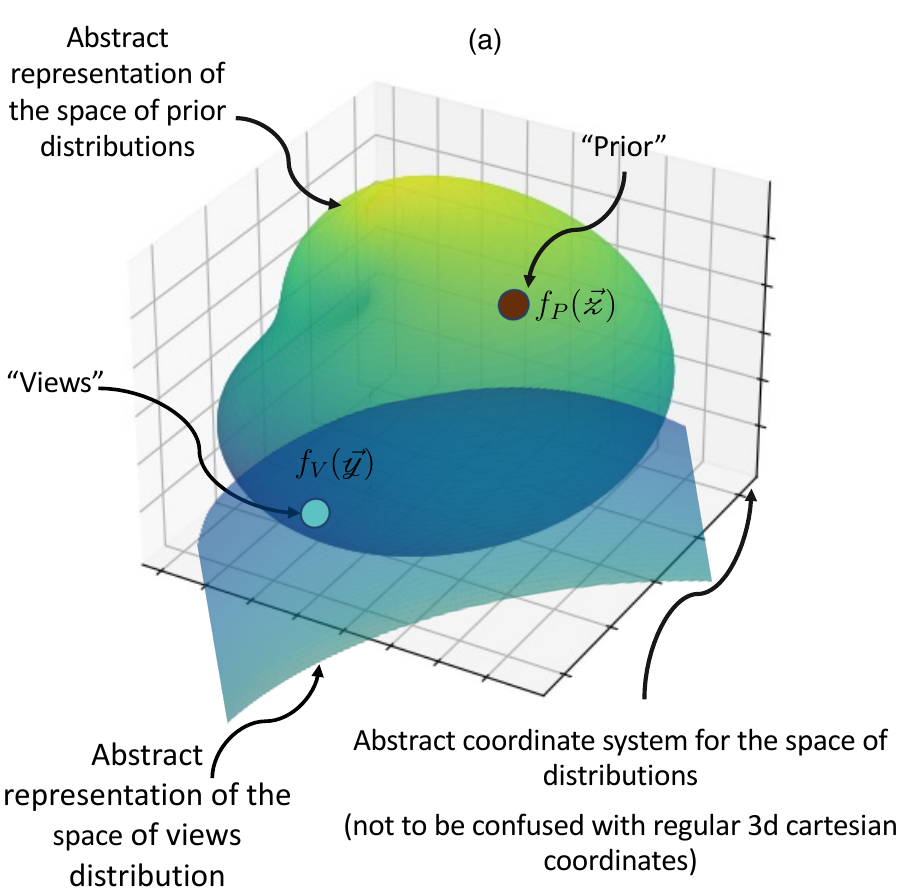}
\includegraphics[scale=.252]{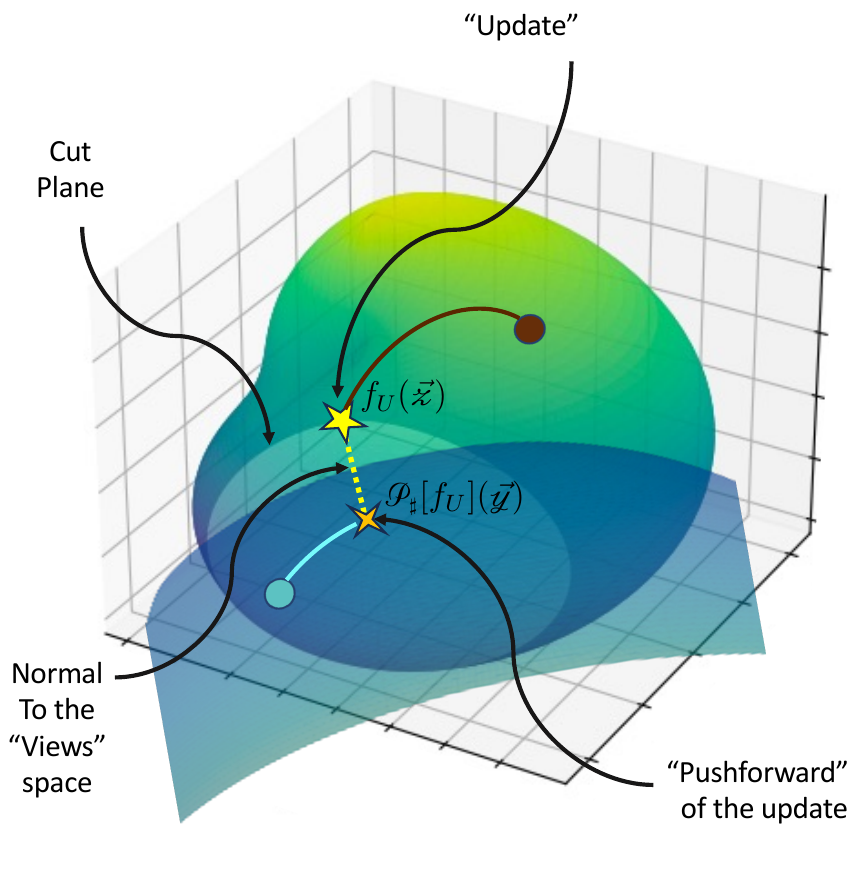}
\end{center}
\caption{{\bf (Left)} Shows an abstract representation of the space of probability measures containing the prior distribution $f_\prior$ and the space of measures containing the views distribution $f_\view$. In the space of probability measures, distributions are points, The point corresponding to $f_\prior$ (in the space of prior distribution) is represented by {\color{brown}\tiny \CircleSolid } (solid brown circle) and the point corresponding to $f_\view$ (in the space of views distribution) is represented by {\color{cyan}\tiny \CircleSolid } (solid cyan colored circle). {\bf (Right)} The {\it push-forward} of $f_\update$ on to the views space is denoted by \pushstar  (orange cross) and the update distribution ($f_\update$)  is represented by \updstar (yellow star).}
\end{figure}

 
 \section{{\sc GWB for Gaussian Prior {\it \&} Views}}\label{sec:GaussGWB}
 As mentioned earlier, the geometric method provides models that are analogous to \BLmodelI \, and \BLmodelII, which will be discussed in remarks \ref{rem:BLI} and \ref{rem:BLII} of $\S$\ref{sec:mainRes}. 
As in the BL model and its variants, we will assume that the prior and view distributions are Gaussian and have the following probability distribution functions (PDFs):
 \begin{gather}
f_\prior(\vdr) = {{1\over \sqrt{(2\pi)^{\Nprior} \det {\cov_\prior}}}. e^{ -\half(\vdr - \vmu_\prior)^T {\cov^{-1}_\prior}(\vdr - \vmu_\prior) }}, \\
f_\view(\vdrV) = {{1\over \sqrt{(2\pi)^{\Nview} \det {\cov_\view}}}. e^{ -\half(\vdrV - \vnu_{\view})^T {\cov^{-1}_\view}(\vdrV - \vnu_{\view}) }} 
\end{gather}
where  $\vdr,  \vmu_\prior  \in \mathbb{R}^\Nprior$, $\vdrV, \vnu_{\view} \in \mathbb{R}^\Nview$, $\cov_\prior \in \text{Sym}^{++}_\Nprior(\mathbb{R})$ and $\cov_\view \in \text{Sym}^{++}_\Nview(\mathbb{R})$. For convenience, we will refer to the subspace in which $\vdrV$ resides as the ``views'' subspace.
We will also assume that the target distribution $f_\update$ is Gaussian. Note that in the original BL model and in its variant, the updated distribution is Gaussian. 
Hence, we are justified in seeking a target or updated distribution that is also Gaussian.
\be
f_{\update}(\vdr) = \gauDis{\vmm_\update}{\cov^{-1}_\update}{N}{\cov_\update} 
\ee
where $\vmm_\update  \in \mathbb{R}^\Nprior$ and $\cov_\update \in \text{Sym}^{++}_\Nprior(\mathbb{R})$. To define the proximity to the views distribution it seems essential to define the distribution of $\viewP\vdr$ (which resides in the views subspace). 
However, the existence of such a distribution might be thwarted by the degeneracy of the views matrix $\viewP$ (for instance, identical rows in $\viewP$). 

\subsection{{Non-degenerate Views Matrix }}\label{sec:NonDegen}
Before we proceed to handle degeneracies in the view matrix, we will discuss the case where the distribution of $\viewP\vdr$ is well-defined and it is given by,
\be
{\pushP}[f_{\update}](\vdrV) =  {{1\over \sqrt{(2\pi)^{V} \det ({\viewP \cov_\update \viewP^T})}}. e^{ -\half(\vdrV - \viewP \vmm_\update)^T \(\viewP {\cov_\update} \viewP^T\)^{-1}(\vdrV - \viewP \vmm_\update) }} 
\label{eq:pushforward}
\ee
Push-forward of a Gaussian distribution along a linear map can be computed quite easily using the 
fact that $\viewP \vdr$ is a normal distribution. Hence, it is sufficient to compute $\mathbb{E}[\viewP \vdr]$ and  $\text{\sc Var}[\viewP \vdr]$. Note that $\mathbb{E}[\viewP \vdr]  = \viewP \vmm_\update$ and  $\text{\sc Var}[\viewP \vdr] = \viewP \cov_\prior \viewP^T$. We also provide a longer derivation of the result in equation (\ref{eq:pushforward}) using the formal definition of a push-forward measure in Appendix {\ref{app:push}. The computations in Appendix {\ref{app:push} can be extended to more general maps and distributions.

Note that when the views are degenerate the determinant in the denominator could vanish resulting in an ill-defined distribution. In this subsection we will assume that 
$\pushP[f_{\update}](\vdrV) $ exists, in which case it is possible to introduce notions of ``proximity'' between distributions. In the next subsection, a method for handling degenerate views will be presented. 
As mentioned  earlier, we will only discuss the case of non-degenerate views in this subsection. The $L_2-$Wasserstein distance between two Gaussian measures can be computed analytically (see for {\it e.g.}, \cite{WD,Dowson, Givens}). The details of the computation are presented in Appendix {\ref{app:WassDist}}.
Using equation (\ref{eq:appWassDist}) in Appendix {\ref{app:WassDist}} we get,
\begin{multline}
\label{eq:WDDist}
\qquad \qquad \CL_{GWB} = \| \vmm_\update - \vmu_\prior\|^2 + \tr\( \cov_\prior + \cov_\update - 2 \(\cov^\half_\prior \cov_\update \cov^\half_\prior \)^\half\) + \\
\lambda\(\| \viewP \vmm_\update - \vnu_{\view}\|^2 + \tr\( \cov_\view +\viewP \cov_\update\viewP^T - 2 \(\cov^\half_\view \viewP \cov_\update \viewP^T \cov^\half_\view \)^\half\)\)
\end{multline}
The expression in (\ref{eq:WDDist}) is well-defined even when the views matrix $\viewP$ is degenerate. Hence, the above cost function can be used for finding a target distribution that lies in the ``proximity'' of the prior and views, even when the views matrix is degenerate. The case of a degenerate distribution will be discussed in more details in a subsequent part of this note.  

\vspace{-.25cm}
\subsection{{Degenerate Views Matrix}}\label{subsec:Degen}

In this section, we will show how the geometric approach extends to the degenerate case. We will begin by starting with a formal definition of a multivariate normal (MVN) distribution and utilize this definition to the generalize the geometric approach to include degenerate views.
\begin{definition}
{\it{
A random vector $\vecX = \[ \CiX_1, \CiX_2, \dots \CiX_k\]^T$ has a multivariate normal distribution if $\vec{a}^T \vecX$ is a univariate random distribution for any $\vec{a} \in \mathbb{R}^k $. Note that a univariate normal distribution with zero variance is a Dirac delta distribution located at the mean of the distribution.
}}
\end{definition}
\vspace{-.25cm}
\noindent
The above definition is applicable even when the ``naive'' probability of $\vecX$ is degenerate \ie, when the covariance of $\vecX$ is not invertible. Alternatively, we could define the MVN distribution in terms of its characteristic function, $\varphi_{\vecX}(\vec{{\mathcal{v}}})$, of $\vecX$ as follows: 
A random vector $\vecX = \[ \CiX_1,\CiX_2, \dots \CiX_k\]^T$ has a multivariate normal distribution if the characteristic function, $\varphi_{\vecX}(\vec{{\mathcal{v}}})$, of $\vecX$ ,
\be
\varphi_{\vecX}(\vec{{\mathcal{v}}}) \equiv \mathbb{E}_{\vecX}\left[ e^{\imi \charv^T \vecX}  \right] = \exp \(\imi \charv^T \vmu  - \half \charv^T \cov \charv  \), \qquad \imi \equiv \sqrt{-1}
\label{eq:charfun}
\ee
for some $\vmu \in \mathbb{R}^k$ and $\cov \in \text{Sym}^{+}_k(\mathbb{R})$, where ${\text{Sym}}_k^{+}(\mathbb{R})$ is the set of all symmetric and real $k \times k$ positive {\it semi-definite} matrices. The probability mass function or the probability distribution function can be obtained as the Fourier transform of the characteristic function as shown below:
\be
\label{eq:GaussianChar}
\int {d^N \charv \over \( 2 \pi\)^N } \exp \(- \imi \charv^T \vecX + \imi \charv^T \vmu  - \half \charv^T \cov \charv  \) =  \({ 1\over  (2\pi)^{N} \det \cov}\)^{\half} \cdot \exp\( - {1\over 2} (\vecX - \vmu)^T \cov^{-1}\, (\vecX - \vmu) \)
\ee

In order to handle degenerate distributions, it is essential to define  pseudoinverse and pseudodeterminant of a matrix. We will show that a degenerate Gaussian distribution can be defined by replacing $\cov^{-1}$ in equation (\ref{eq:GaussianChar}) with the pseudoinverse of $\cov$ and $\det(\cov)$ with the pseudodeterminant of $\cov$, when the covariance matrix $\cov$ has zero eigenvalues.  
 The Moore-Penrose pseudoinverse (denoted by the superscript $\pinv$) and the pseudodeterminant (denoted by subscript $\pdet$) of a matrix $\Agen$ can be obtained using the following limiting procedure
\begin{gather}
\Agen^{\pinv} = \mathop{\lim}_{\delta \rightarrow 0} ( \Agen^T \Agen + \delta^2 \eye)^{-1} \Agen^T  = \mathop{\lim}_{\delta \rightarrow 0} \Agen^T ( \Agen \Agen^T + \delta^2 \eye)^{-1}  \\
{\det}_{\pdet}(\Agen) = \mathop{\lim}_{\delta \rightarrow 0} {1\over \delta^{2(N - \text{rank}(A))}} \det \( \Agen + \delta^2 \mathbb{I} \)
\end{gather}
Introducing a regularization parameter $\delta$ for the covariance in equation (\ref{eq:GaussianChar}), the normal distribution with a degenerate covariance can then be defined as follows:
\be
f_{\text{Degen}}(\vecX) = \({ 1 \over (2 \pi )^N {\det}_{\pdet}(\cov)}\)^{\half} \cdot \exp\( - {1\over 2} \(\vecX - \vmu\)^T \cov^{\pinv}\, \(\vecX - \vmu\) \)
\ee
The characteristic function of a degenerate distribution is still given by equation (\ref{eq:charfun}) which is well defined. In other words, the degenerate distribution can be defined as the inverse Fourier transform of the characteristic function (with appropriate regularization). The covariance can then be obtained by taking the second derivative of the characteristic function. To compute the Wasserstein distance between two Gaussian distributions it is sufficient that the second derivatives of characteristic functions of the two distributions are well-defined.
In Appendix \ref{app:WassDist}, we show that the Wasserstein distance is well-defined even when the covariance matrices of interest are degenerate.

In the next section, we will present the optimal updates for $\vmm_\update$ and $\cov_\update$.

\section{\sc{Main Result: Optimal Update}\label{sec:mainRes}}

\begin{theorem} \label{thm:Main}
 { $\CL_{GWB}$ is minimized when,
\be
\label{eq:optDrift}
\vmm_\update  = \vmm_{\optU} = \IPPT \( \vmu_\prior + {\lambda} \viewP^T  \vnu_{\view} \) \quad {\text{with }}~\IPPT = \(\mathbb{I}_{\Nprior} + {\lambda} \viewP^T \viewP\)^{-1}   = \IPPT^T
\ee
\be
\label{eq:covUpdateMain}
\cov_\update  = \cov_{\optU} =  \(\IPPT+ \EuScript{B}\)\cov_\prior\(\IPPT+ \EuScript{B}\)
\ee
where, $\EuScript{B} = \EuScript{B}^T $ and it is given by
\be
\EuScript{B} = \lambda \IPPT A^{-\half} \(A^{\half} \viewP^T \cov_\view \viewP A^{\half}\)^{\half} A^{-\half}  W , \qquad
A = \IPPT \cov_\prior \IPPT
\label{eq:geomCovA}
\ee
}
\end{theorem}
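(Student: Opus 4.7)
The plan is to minimize $\CL_{GWB}$ from equation (\ref{eq:WDDist}) by exploiting the fact that the cost splits cleanly into a mean-only part and a covariance-only part, so the two optimizations decouple entirely:
\begin{align*}
\CL_{GWB}(\vmm_\update, \cov_\update) &= \underbrace{\|\vmm_\update - \vmu_\prior\|^2 + \lambda\|\viewP\vmm_\update - \vnu_\view\|^2}_{\text{mean part}} \\
&\quad + \underbrace{\tr(\cov_\prior + \cov_\update - 2(\cov_\prior^{1/2}\cov_\update\cov_\prior^{1/2})^{1/2})}_{\text{covariance part}} \\
&\quad + \lambda \tr(\cov_\view + \viewP\cov_\update\viewP^T - 2(\cov_\view^{1/2}\viewP\cov_\update\viewP^T\cov_\view^{1/2})^{1/2}).
\end{align*}
The mean part is a strictly convex quadratic in $\vmm_\update$; setting its gradient to zero gives $(\mathbb{I}_{\Nprior} + \lambda\viewP^T\viewP)\vmm_\update = \vmu_\prior + \lambda\viewP^T\vnu_\view$, which is exactly formula (\ref{eq:optDrift}) after identifying $\IPPT = (\mathbb{I}_{\Nprior} + \lambda\viewP^T\viewP)^{-1}$ and noting symmetry.

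For the covariance part, I would compute the matrix gradient with respect to $\cov_\update$. Using the identity $\nabla_{\cov_\update}\tr(\cov_\prior^{1/2}\cov_\update\cov_\prior^{1/2})^{1/2} = \tfrac{1}{2}\cov_\prior^{1/2}(\cov_\prior^{1/2}\cov_\update\cov_\prior^{1/2})^{-1/2}\cov_\prior^{1/2}$ (and the analogous identity with $\viewP$ inserted), the first-order condition becomes the Wasserstein-barycenter fixed-point equation
\begin{equation*}
\IPPT^{-1} \;=\; \cov_\prior^{1/2}\bigl(\cov_\prior^{1/2}\cov_\update\cov_\prior^{1/2}\bigr)^{-1/2}\cov_\prior^{1/2} + \lambda\,\viewP^T\cov_\view^{1/2}\bigl(\cov_\view^{1/2}\viewP\cov_\update\viewP^T\cov_\view^{1/2}\bigr)^{-1/2}\cov_\view^{1/2}\viewP.
\end{equation*}
Rather than solving this equation directly, I would make the symmetric ansatz $\cov_\update = S\,\cov_\prior\,S$ with $S = \IPPT + \EuScript{B}$, motivated by the observation that when $\lambda = 0$ the optimum is $\cov_\update = \cov_\prior$ (so $S = \mathbb{I}$), and in the $\lambda\to\infty$ limit the push-forward $\viewP\cov_\update\viewP^T$ is driven to $\cov_\view$. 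Plugging this ansatz into the first-order condition and using $A = \IPPT\cov_\prior\IPPT$ converts the equation into one linear in $\EuScript{B}$ once the matrix square root $(A^{1/2}\viewP^T\cov_\view\viewP A^{1/2})^{1/2}$ is recognized as the natural object that controls $\viewP\cov_\update\viewP^T$ in the sense of Wasserstein optimal transport. Solving for $\EuScript{B}$ yields formula (\ref{eq:geomCovA}); symmetry of $\EuScript{B}$ follows because $A^{-1/2}(A^{1/2}\viewP^T\cov_\view\viewP A^{1/2})^{1/2}A^{-1/2}$ is symmetric by construction.

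The main obstacle I anticipate is verifying that the closed-form $\cov_{\optU}$ indeed satisfies the fixed-point equation, because this requires correctly identifying the two matrix square roots $(\cov_\prior^{1/2}\cov_{\optU}\cov_\prior^{1/2})^{1/2}$ and $(\cov_\view^{1/2}\viewP\cov_{\optU}\viewP^T\cov_\view^{1/2})^{1/2}$ when $\cov_{\optU}$ has the product structure $(\IPPT+\EuScript{B})\cov_\prior(\IPPT+\EuScript{B})$ and the factors do not commute. I would manage this by working throughout in terms of the similarity-invariant quantity $A$, using the identity $(XYX)^{1/2} = X(X^{-1}(XYX)X^{-1})^{1/2}X$ style manipulations to reduce everything to square roots of $A^{1/2}\viewP^T\cov_\view\viewP A^{1/2}$, and then checking that the linear equation for $\EuScript{B}$ is satisfied. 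Convexity of the Wasserstein-barycenter functional on $\mathrm{Sym}^{++}_\Nprior(\mathbb{R})$ (known from the GWB literature cited in \cite{GWB}) guarantees that any critical point is the unique global minimum, so verification of the first-order condition suffices.
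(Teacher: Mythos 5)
Your overall strategy is sound and genuinely different from the paper's. You work directly with the unconstrained first-order condition $\IPPT^{-1}=\cov_\prior^{\half}\(\cov_\prior^{\half}\cov_\update\cov_\prior^{\half}\)^{-\half}\cov_\prior^{\half}+\lambda\viewP^T\cov_\view^{\half}\(\cov_\view^{\half}\viewP\cov_\update\viewP^T\cov_\view^{\half}\)^{-\half}\cov_\view^{\half}\viewP$ and then verify an ansatz, invoking convexity of the Bures--Wasserstein cost (which does hold: $M\mapsto\tr M^{\half}$ is concave on the PSD cone and $\cov_\update\mapsto\cov_\prior^{\half}\cov_\update\cov_\prior^{\half}$ and $\cov_\update\mapsto\viewP\cov_\update\viewP^T$ are linear) to promote the critical point to a global minimum. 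The paper instead changes variables to $X=(\cov_\prior^{\half}\cov_\update\cov_\prior^{\half})^{\half}$, $Y=(\cov_\view^{\half}\viewP\cov_\update\viewP^T\cov_\view^{\half})^{\half}$, treats the push-forward relation as an explicit constraint with a matrix Lagrange multiplier $\CM$, solves the resulting Lyapunov equations, and is forced to resolve a sign ambiguity $\mathcal{s}=\pm1$ by comparing cost values (Appendix \ref{app:PropMin}). Your convexity argument, if the verification were completed, would absorb that sign comparison: only one branch of the square roots can satisfy the stationarity condition. Your route also sidesteps the unknown matrix $\mathcal{U}$ that the paper introduces and eliminates. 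The trade-off is that the paper's method \emph{derives} the form of the answer, while yours must guess it; and the paper's Lyapunov-equation route is engineered to survive degeneracy of $\viewP$, whereas your fixed-point equation presupposes $\viewP\cov_\update\viewP^T\succ 0$.

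The genuine gap is in the verification step, which is precisely where all the work lives. The claim that substituting $\cov_\update=(\IPPT+\EuScript{B})\cov_\prior(\IPPT+\EuScript{B})$ ``converts the equation into one linear in $\EuScript{B}$'' is not correct as stated: the first term of the stationarity condition becomes $(\IPPT+\EuScript{B})^{-1}$, and the second involves the square root of an expression quadratic in $\EuScript{B}$. To evaluate these you must first establish $\IPPT+\EuScript{B}\succ0$, so that $\(\cov_\prior^{\half}\cov_{\optU}\cov_\prior^{\half}\)^{\half}=\cov_\prior^{\half}(\IPPT+\EuScript{B})\cov_\prior^{\half}$ is the correct (positive) root, and the analogous positivity $\mathcal{s}\Gamma+\lambda\IPPT\succeq0$ on the views side, where $\Gamma=\IPPT\cov_\prior^{\half}\(\cov_\prior^{\half}\IPPT\viewP^T\cov_\view\viewP\IPPT\cov_\prior^{\half}\)^{-\half}\cov_\prior^{\half}\IPPT$ satisfies $\EuScript{B}=\lambda\IPPT\Gamma^{-1}\IPPT$ and $\cov_{\optU}=(\Gamma+\lambda\IPPT)\viewP^T\cov_\view\viewP(\Gamma+\lambda\IPPT)$. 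These positivity facts are exactly equations (\ref{eq:posWBS}) and (\ref{eq:posGamma}) of the paper, and without them the sign ambiguity you are implicitly discarding (the $\mathcal{s}=-1$ branch, which is also ``symmetric by construction'') re-enters. So the proposal is a viable alternative skeleton, but the square-root identification and the two positivity lemmas must be supplied before it constitutes a proof.
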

\vspace{-.5cm}
\begin{proof}
A detailed proof of this theorem is presented in Appendix \ref{app:OptUpdate}.
\end{proof}
\vspace{-.35cm}
\noindent
In equation (\ref{eq:geomCovA}),  $A^{\half}$ denotes matrix square root as usual and its existence is guaranteed by the spectral theorem. The result in Theorem (\ref{thm:Main}) is a generalization of the McCann interpolant for two Gaussian distributions living on (sub)spaces of different dimensions.
To our knowledge the result in Theorem (\ref{thm:Main}) and its proof in Appendix \ref{app:OptUpdate} have not appeared in the literature. Theorem \ref{thm:Main} is the main result of this article and in the following we present comments and checks of this result. 

\begin{remark}\normalfont
 The optimal update for the drift has no dependence on the prior or view covariance matrices.{\footnote{This expression for the drift update has a lot of resemblance to the drift update proposed by Doust \cite{Doust}. However, there are many crucial differences and the resemblance might just be a coincidence.}}  In particular, if $\Nview = \Nprior$ and $\viewP = \mathbb{I}_\Nprior$ (that is, the investor has absolute view on every single asset) then the update drift is simply a weighted average of $\vmu_\prior$ and $\vnu_{\view}$. 
\end{remark}
\begin{remark}\normalfont
In the case, when $\viewP^T \cov_\view \viewP$ is invertible, using Lemma \ref{lem:lemII} in Appendix \ref{sec:Lemma} repeatedly we get,
\bea
\label{eq:Asuka}
\cov_{\optU} &=& (\lambda W + \Gamma) \viewP^T \cov_\view \viewP (\lambda W + \Gamma), \qquad{\text{where }} \Gamma = W\cov^\half_\prior \(\cov^\half_\prior W \viewP^T \cov_\view \viewP W \cov^\half_\prior\)^{-\half}\cov^\half_\prior W\\
\label{eq:Bhatia}
{} &=& A + \lambda^2 W \viewP^T \cov_\view \viewP W +  \lambda (A  \viewP^T \cov_\view \viewP )^{\half} W +   \lambda W ( \viewP^T \cov_\view \viewP A)^{\half} 
\eea
To obtain the expression in equation (\ref{eq:Asuka}), we have used the definition of $\Gamma$ in equation (\ref{eq:Gamma}) of Appendix \ref{app:PropMin}.
When the views matrix $\viewP = \mathbb{I}$, equation (\ref{eq:Asuka}) reduces to the McCann interpolant (see Example 1.7 of \cite{McCann} or Lemma 2.3 of \cite{Asuka}) with the identification $t \rightarrow {1\over 1 + \lambda}$. Similarly equation (\ref{eq:Bhatia}) reduces to equation (39) and (63) in \cite{bhatia} when  $\viewP = \mathbb{I}$. This implies that $\cov_\optU$ is a point on the geodesic connecting the two points corresponding to $\cov_\prior$ and $\cov_\view$ on the Bures-Wasserstein maifold (when  $\viewP = \mathbb{I}$). 
The parameter $t$ controls the distance of $\cov_\optU$ from $\cov_\prior$ and in the financial context, the parameter $\lambda$ is used to control the confidence in the views of the investor. When the investor has complete confidence in the views then $\lambda \rightarrow \infty$; if the investor has very low confidence in the views then $\lambda \rightarrow 0$. 

In the case when the matrices $\cov_\view$ and $\cov_\prior$ are diagonal matrices and the views matrix $\viewP = \mathbb{I}$, we get the following simple expression for the updated volatility:
\be
\sigma_{\optU, i} = {\sigma_{\prior, i} + \lambda \sigma_{\view, i} \over 1 + \lambda} , \qquad {\text{where}}~\cov_{\circ} = {\text{\sc Diag}} \(\sigma^2_{\circ,i}\) \implies \sigma^2_{\circ, i} = \(\cov_{\circ}\)_{ii}, \quad \circ \in \{\optU, \priorP, \viewV \}
\ee
\end{remark}
\begin{remark}\label{rem:BLI}\normalfont
When an investor provides views on the expected returns, we set $\cov_\prior = \cov_\drift = \tau \hcov_\Rsc$,   $\cov_\view = \cov_\viewd$, $\vmu_\prior = \vmu_\drift$ and $\vnu_\view = \vnu_\viewd$. In this case, the updated distribution for the returns (in the geometric approach) is given by,
\be
\mathbb{P}_{\viewd, \optU}(\Ret) = \normpdf(\vmm_{\GBI}\,, \cov_\GBI) 
\ee
where, $\mathbb{P}_{\viewd, \optU}(\Ret) $ denotes the distribution of returns obtained from the optimal updates for the expected returns, $\vmm_{\GBI}$ and $\cov_\GBI$ are given by,
\bea
\label{eq:muGWBI}
\vmm_{\GBI} &=& \IPPT \( \vmu_\drift + {\lambda} \viewP^T  \vnu_{\viewd} \)\\
\label{eq:covGWBI}
 \cov_\GBI &=& \hcov_\Rsc + \tau\(\IPPT+ \EuScript{B}_{\viewd}\)\hcov_\Rsc\(\IPPT+ \EuScript{B}_{\viewd}\) \\
 \label{eq:BGWBI}
 \EuScript{B}_\viewd &= &\lambda \IPPT A_\drift^{-\half} \(A_\drift^{\half} \viewP^T \cov_\viewd \viewP A_\drift^{\half}\)^{\half} A_\drift^{-\half}  W\\
  \label{eq:AGWBI}
 A_\drift &=& \IPPT \cov_\drift \IPPT = \tau \IPPT \hcov_\Rsc \IPPT
  \eea
 Note that $\mathbb{P}_{\viewd, \optU}(\Ret)$ is not a conditional distribution.
We will refer to the model that uses the geometric approach to incorporate views on the {\bf expected returns} as the {\GWBI}\,.
\end{remark}
\begin{remark}\label{rem:BLII}\normalfont
When an investor provides views on the asset returns (as in \BLmodelII\,), we set $\cov_\prior = \hcov_\Rsc$,   $\cov_\view = \cov_\viewR$, 
$\vmu_\prior = \hvRet$ and $\vnu_\view = \vnu_\viewR$. In this case, the updated distribution for the returns (in the geometric approach) is given by,
\be
\mathbb{P}_{\viewR, \optU}(\Ret) = \normpdf(\vmm_{\GBII}\,, \cov_\GBII) 
\ee
where, $\mathbb{P}_{\viewR, \optU}(\Ret) $ denotes the distribution of returns obtained from the optimal updates for the expected returns, $\vmm_{\GBII}$ and $\cov_\GBII$ are given by,
\bea
\label{eq:muGWBII}
\vmm_{\GBII} &=& \IPPT \( \hvRet + {\lambda} \viewP^T  \vnu_{\viewR} \)\\
\label{eq:covGWBII}
 \cov_\GBII &=& \(\IPPT+ \EuScript{B}_{\viewR}\)\hcov_\Rsc\(\IPPT+ \EuScript{B}_{\viewR}\) \\
  \label{eq:BGWB|I}
 \EuScript{B}_\viewR &= &\lambda \IPPT A_\Rsc^{-\half} \(A_\Rsc^{\half} \viewP^T \cov_\viewR \viewP A_\Rsc^{\half}\)^{\half} A_\Rsc^{-\half}  W\\
   \label{eq:AGWBII}
 A_\Rsc &=& \IPPT \hcov_\Rsc \IPPT 
  \eea
We will refer to the model that uses the geometric approach to incorporate views on the {\bf asset returns} as the {\GWBII}\,.
\end{remark}
\begin{remark}\normalfont
When $\lambda =0$, we have $(\vmm_\optU, \cov_\optU) = (\vmu_\prior, \cov_\prior)$ and when $\lambda \rightarrow \infty$, we have $\viewP \vmm_\optU = \vnu_{\view}$ and $\viewP \cov_\update \viewP^T = \cov_\view$. Hence, $t = 1/(1+\lambda)$ plays the role of investor confidence, as it allows us to interpolate smoothly between the prior and views distribution (as described in $\S$\ref{sec:Intro}). The parameter $\lambda$ has no counterpart in the conventional BL model.
Note that in \GWBI which is the geometric analog of \BLmodelI\,, the updated drift of the returns align with the views drift when $\lambda \rightarrow \infty$, however the the updated covariance of returns is $\cov_\GBI = \hcov_\Rsc +  \viewP^T \cov_\viewd \viewP$ which depends on $\hcov_\Rsc$. This is counterintuitive as the updates $\vmm_\optU$ and $\cov_\optU$ match the views distribution . This is an artifact of the model that stems from the fact that the views are specified on the expected returns and not on the returns itself. This is also a feature of \BLmodelI\, which was pointed out by Meucci in \cite{Meucci}. The fact that the posterior drift of \GWBI\, match the views drift when the investor confidence is 100\% is a desirable feature - an investor is rewarded for being confident on correct views. 
In \GWBII\, which is the geometric analog of \BLmodelII\, the updated drift and covariance of the returns match the views distribution as the views are expressed directly on the returns. As explained in $\S$\ref{subsec:Gedanken}, neither \BLmodelI\, nor \BLmodelII\, can produce a posterior distribution that matches with the views distribution when the investor is 100\% confident on his or her views.

\end{remark}
\begin{remark}\normalfont  
An additional point worth mentioning is that the inverse of the update covariance matrix does not involve inverting $\cov_\prior$ and can also be written as follows:
\be
\cov^{-1}_\optU = W^{-1} A^{\half} \(A^{\half} W^{-1} A^{\half}  + \lambda \(A^{\half} \viewP^T \cov_\view \viewP A^{\half}\)^{\half} \)^{-2} A^{\half}  W^{-1}
\ee
It would be interesting to check if portfolios constructed using the above estimation for covariance matrix and drift are less sensitive to estimation errors. This note will not address questions surrounding sensitivity of portfolios constructed using the approach described here. We will however present an approach for comparing portfolios constructed using the approach described here and the traditional Black-Litterman approach. In the next section, we will briefly describe the portfolio construction methodology.
\end{remark}

\section{\sc{Incorporating Investor Views in Mean Variance Portfolio}\label{sec:MVO}}
In this note, we will compare the efficacy of incorporating investor views in
 a simple allocation model where the weights are computed by solving the following mean-variance optimization (MVO) problem:
\bea
\nonumber
{\text{MVO}}[\vmm_\Est, \cov_\Est; \gammaR, \rf]: & &\\
\wgts &=& \mathop{\text{argmax}}_{\wgx} \[ \({{\vmm}_\Est} - \rf \ones \, \)^T \wgx - {\gammaR \over 2 } \wgx^T  {\cov}_\Est \wgx    \]\\
\label{eq:MVOwgts}
\nonumber
\text{subject to} & & {}\\
\ones^T \wgx &=& 1\\
\label{const:MVOwgtsI}
x_i &  \ge  & 0, \quad \forall~ i \in \{1,2,\dots, N_a\}
\label{const:MVOwgtsII}
\eea
In the optimization problem specified above, $\rf$ is the risk free rate, $ \ones^T = [1,1,\dots,1]_{\Nprior} \equiv \vec{\mathbf{1}}_{\Nprior}$, $\Nprior$ is the total number of assets, $\gammaR $ is a risk aversion parameter (positive), $\vmm_\Est$ is an estimate for the drift and $\cov_\Est$ is an estimate of the covariance matrix.    In the rest of this article, we will assume $\rf = 0$. The optimization problem, ${\text{MVO}}[\vmm_\Est, \cov_\Est; \gammaR] \equiv {\text{MVO}}[\vmm_\Est, \cov_\Est; \gammaR, \rf = 0] $, is solved using {\sc CvxPy} \cite{cvxpy}.

 We have not included the impacts of transaction costs, holding or borrowing costs, slippage, {\it etc.},
 in our present analysis as the primary goal of this study is to compare the efficacy of the drift and covariance corrections. In realistic investment (or trading) processes it is often essential to enforce constraints on factor exposures and other trading constraints. In \cite{Enzo}, the authors provide a formulation that is capable of incorporating realistic cost models, constraints that are convex and certain risk measures that are different from the risk metric considered in Markowitz's original proposal. The analysis of \cite{Enzo} can be extended to incorporate investor views using the updated covariance and drift. However, we will not present such a study here as it will take us too far from the objective of this paper.

 We evaluate the efficacy of the two approaches by solving ${\text{MVO}}[\vmm_\Est, \cov_\Est; \gammaR]$ for the following four methods of estimating the drift and covariance:

  \begin{itemize}
 \item[$\bullet$]  $\BLI$ {\bf Allocation Methodology}: In this methodology, the views are specified on the expected returns (or the drift in returns). 
 The reference or prior model specifies the distribution of the expected drift and the updates are computed using \BLmodelI. The drifts and covariance 
 appearing in the updated distribution
  in equation (\ref{BLCUpdate}) are used as inputs to the mean variance optimization problem {\sc MVO} specified in
  equations (\ref{eq:MVOwgts}) - (\ref{const:MVOwgtsII}).
 A description of the methodology can be found in Appendix \ \ref{app:BLImethod}.
 
 \item[$\bullet$] $\BLII$ {\bf Allocation Methodology}: In this methodology, the views are specified on the asset returns directly. 
 The prior model specifies the distribution of the asset returns and the updates are computed using \BLmodelII. The drifts and
  covariance 
 appearing in the updated distribution
  in equation (\ref{BLMupdate}) are used as inputs to the mean variance optimization problem {\sc MVO} specified in
  equations (\ref{eq:MVOwgts}) - (\ref{const:MVOwgtsII}).
 A description of the methodology can be found in Appendix  \ref{app:BLIImethod}.

\item[$\bullet$] $\GWI$ {\bf Allocation Methodology}: In this methodology, the views are specified on the asset returns directly. 
 The prior model specifies the distribution of the asset returns and the updates are computed using \GWBI. The drifts and
  covariance 
 appearing in the updated distribution
  in equations (\ref{eq:muGWBI}) and (\ref{eq:covGWBI} - \ref{eq:AGWBI}) are used as inputs to the mean variance optimization problem {\sc MVO} specified in
  equations (\ref{eq:MVOwgts}) - (\ref{const:MVOwgtsII}).
 A description of the methodology can be found in Appendix \ref{app:GWBImethod}.

\item[$\bullet$] $\GWII$ {\bf Allocation Methodology}: In this methodology, the views are specified on the asset returns directly. 
 The prior model specifies the distribution of the asset returns and the updates are computed using \GWBII. The drifts and
  covariance 
 appearing in the updated distribution
  in equations (\ref{eq:muGWBII}) and (\ref{eq:covGWBII} - \ref{eq:AGWBII}) are used as inputs to the mean variance optimization problem {\sc MVO} specified in
  equations (\ref{eq:MVOwgts}) - (\ref{const:MVOwgtsII}).
 A description of the methodology can be found in Appendix  \ref{app:GWBIImethod}
 
  \end{itemize}

 \section{{\sc Testing \& Evaluation Methodology}\label{sec:TestI}}
 
In this section, we present a methodology for comparing the efficacy of the various methods of incorporating views in asset allocation discussed earlier. 
The testing or evaluation methodology consists of the following two components:
\begin{itemize}
\item[(i)] An evaluation where the 
inputs to the allocation methodologies can be controlled. We use simulated data (Gaussian) for this test in order to respect the assumptions of the allocation methodologies. This stage of testing will be called preliminary evaluation as it is designed in such a way that the backtesting principles are violated. This violation is required at this stage of testing in order to generate controlled views as the inputs to the allocation methodologies. If a methodology fails this stage of testing, it implies that the methodology does not work as expected.
The precise details of the preliminary evaluation procedure will be discussed later in this section.
\item[(ii)] In the second stage, we use ``walk-forward'' backtesting to evaluate the allocation methodologies. Backtesting at best only estimates the efficacy of an investment strategy on ``one single realization'' of the process that describes the market dynamics. Making decisions purely based on the backtested results on a ``single realization'' leads to overfitted strategies \cite{MLDP}. 
 Backtesting on synthetic paths that capture the stylized facts in historical market data is a reasonable alternative. However, the methodology for generating synthetic data and evaluation of the quality of synthetic data requires caution. The topic of generating realistic synthetic data generation is interesting in its own right and unfortunately a detailed discussion on this topic is beyond the scope of this paper. 
 In the present paper, we will present a simpler alternative to reducing the risks of backtest overfitting. This alternative approach will be discussed in $\S$\ref{subsec:TestII}.
\end{itemize}
 We will now present the details of the two stages of our testing methodology. 
 
\subsection{{\sc Stage I Testing: Simulated Data}}
 The goal of this stage of evaluation is to compare the different allocation methodologies in the three situations when the views are (i) ``correct'' (ii) ``ambiguous'' (iii) ``incorrect''. In the following, we provide a brief explanation of these three situations and the motivation to evaluate the methodologies in these three situations:
 \begin{itemize}
 \item[(a)] {\it Correct Views}: We say a view is ``correct'' when it aligns with the future realization of the returns or expected returns. In real trading, it is highly unlikely there is an investor who is correct about his or views consistently throughout history.{\footnote{In other words, we believe no investor has a ``clairvoyant crystal ball'' or the existence of one. If clairvoyant crystal balls exist, the authors would be searching for one instead of writing this paper.}} However, for the purpose of the preliminary evaluation we are interested in testing if the proposed allocation methodology can outperform the conventional method if an investor uses ``consistently correct'' views with high confidence. As emphasized earlier, an ideal allocation methodology should give an investor the flexibility to incorporate his or views with the desired degree of subjective confidence. In addition, it is desirable to have a methodology that rewards the investor for choosing the right level of confidence on his or her correct views.
 \item[(b)] {\it Ambiguous views}: An ``ambiguous view'' is a view that is uncorrelated with the future realization of the returns or expected returns. Though, no investor intentionally picks ``ambiguous views'', the market can behave erratically making the views look ambiguous. An investor can make an informed decision about his or her confidence in a view, if an allocation methodology underperforms when the views are ambiguous in comparison to ``correct views''.{\footnote{For example, if an investor makes more money from lottery winnings rather than his or her investment decisions, then he or she might be tempted to invest in lottery tickets rather than his or her investment ideas.}}
  \item[(c)] {\it Incorrect Views}: A view is ``incorrect'' when the future realization of returns or expected returns are negatively aligned with the view. Again it is highly unlikely that an investor is incorrect consistently, however it is desirable to have an allocation methodology that can penalize more for having more confidence in incorrect views. For instance, let us consider an investor who wishes to calibrate the confidence parameter (associated with a set of views) using backtested results on simulated or synthetic data. If the allocation methodology underperforms more often when confidence associated with incorrect views is high, then the calibration (or ``hyperparameter tuning'') methodology is more likely to assign lower confidence to incorrect views. 
 \end{itemize}
 
 \noindent
 We have not yet specified the procedure for generating views that can be classified as correct, ambiguous or incorrect. The precise methodology for views generation used in our preliminary evaluation and other details of the testing procedure are described below:
 
 \begin{itemize}
 \item For the purpose of the preliminary evaluation we use simulated returns data. In particular, we generate multiple samples of ($N_\wp$) daily returns time series of length $T$ for $\Nprior$ assets as follows: For each $\wp \in \{1,2,\dots, N_\wp\}$, we sample $T$ independent identically distributed random variables from a multivariate normal distribution $\CCN\(\vmusim, \covsim\)$ where $\vmusim \in \mathbb{R}^{\Nprior}$, $\covsim \in \text{Sym}_{\Nprior}(\mathbb{R})$. Note that for each $\wp$, the daily return series is in the form of a panel data with $T$ rows and $\Nprior$ columns. Note that the $\wp^{th}$ return series can also be represented as a path in $\Nprior-$dimensional space and we will refer to such a path as $\Nprior-$path. Hence, each simulated returns time series is a single sample from the space of all $\Nprior-$paths and we generate $N_\wp$ samples. In the our testing methodology we choose $T$ to be more than ten years, the number of assets ($\Nprior$) to be fifty and $N_\wp \sim 250$.
 \item For each $\wp$, we use each of the allocation methodologies $\BLI$, $\BLII$, $\GWI$ and $\GWII$ to construct portfolios of the $\Nprior$ ``simulated'' assets. In the following we describe the inputs to the allocation methodologies and the rebalancing details:
 \begin{itemize}
 \item The portfolios are rebalanced every quarter. We would like to emphasize that the rebalancing procedure used for the preliminary evaluation is not a realistic rebalancing as the views gener ating methodology are artificially tuned to align or misalign with the realized returns in the future.
 \item The covariance matrix of the prior distribution
 is estimated using the historical data using a look-back window of length $\ell_b$ (six months) ending on the rebalance day.  The drift of the prior distribution, $\vmu_\prior$, is computed using the reference model in equation (\ref{eq:BLRef})  assuming that the benchmark weights are all equal and sum up to 1. That is,
 \be
 \vmu_\prior = \gammaR \cov_\prior \wgts_{{}_{\BM}}, \qquad \text{where,}~ \wgts_{{}_{\BM}} = {1\over N_a} \vec{\mathbf{e}}
 \label{eq:BenchBLWD}
 \ee
 In the above equations, $\vmu_\prior = \vmu_\drift$, $\cov_\prior = \cov_\drift$, $\vnu_\view = \vnu_\viewd$ and $\cov_\view = \cov_\viewd$ for the $\BLI$ and $\GWI$ allocation methodologies, while for $\BLII$ and $\GWII$ allocation methodologies $\vmu_\prior = \vmu_\Rsc$, $\cov_\prior = \cov_\Rsc$, $\vnu_\view = \vnu_\viewR$ and $\cov_\view = \cov_\viewR$.
\item For the purpose of testing we choose $\viewP = \eye_\Nprior$, however the analysis in the rest of the text holds good for any general views matrix $\viewP$.
 \item We will now discuss the views generating process. In the preliminary evaluation, we use a {\emph{forward looking window}} ($\mathbb{F}_W$) of length $\ell_f$, starting from the date of rebalance. In this paper, we set $\ell_f$ to three years. For each method, we conduct experiments with the three different types of views mentioned before $-$
 \begin{itemize}
 \item[(a)] {\it Correct (but ``blurred'') views}: 
As mentioned earlier, we say the views are correct when the investor views align with future returns. That is, expected return and covariance of the views match with the expected returns and covariance of the returns in $\mathbb{F}_W$. We can get unreasonably good results if we assume that the views are perfectly match with the future returns. Hence, we ``blur'' the views as to make the views align with the future only approximately by introducing some uncertainty. That is we sample the $\cov_\view$ from a Wishart distribution and $\vnu_\view$ from a multi-variate normal distribution as shown below:
\begin{gather}
 \vnu_\view \sim \CCN\(\viewP \vmu_{\prior,\mathbb{F}_W}, \cov_\view\), \\
  \cov_{\view} = \ell^{-1}_f{\mathfrak{S}},  \quad \text{where } \mathfrak{S} \sim {\EuScript{W}}\(\ell_f, \viewP \cov_{P,\mathbb{F}_W} \viewP^T\)
\end{gather}
 where $\vmu_{\prior,\mathbb{F}_W}$ and $\cov_{\prior,\mathbb{F}_W}$ are the drift and covariance of the prior distribution estimated from the forward looking window. Note that expected 
 $\nu_\view$ is $\viewP \vmu_{\prior,\mathbb{F}_W}$ and the expected covariance is $\viewP \cov_{P,\mathbb{F}_W} \viewP^T$. This ensures that the views are aligned with the future returns.
  \item[(b)] {\it Ambiguous view}: When the views are ambiguous, $\vnu_\view$ has no positive or negative alignment with the future returns. Hence, we model ambiguous views as shown below:
  \begin{gather}
\vnu_\view \sim \CCN\(\vec{0}_{\Nview}, \cov_\view\), \\
  \cov_{\view} = \ell^{-1}_f{\mathfrak{S}},  \quad \text{where } \mathfrak{S} \sim {\EuScript{W}}\(\ell_f, \viewP \cov_{P,\mathbb{F}_W} \viewP^T\)
 \end{gather}
   \item[(c)] {\it Incorrect (but ``blurred'') views}: Incorrect views are modeled like correct views except that the drifts are drift of the views are negative aligned with the future returns as shown below:
  \begin{gather}
 \vnu_\view \sim \CCN\(-\viewP \vmu_{\prior,\mathbb{F}_W}, \cov_\view\), \\
  \cov_{\view} = \ell^{-1}_f{\mathfrak{S}},  \quad \text{where } \mathfrak{S} \sim {\EuScript{W}}\(\ell_f, \viewP \cov_{P,\mathbb{F}_W} \viewP^T\)
\end{gather}
Note that the drift of the views are exactly the opposite of correct views. 
 \end{itemize}

  \item Using the above methodology for estimating for prior and views and equations  (\ref{eq:BL1muE}), (\ref{eq:BLIImuE}), (\ref{eq:GWImuE}) and (\ref{eq:GWIImuE}), we compute $\(\vmm^{\BLI}_\Est, \vmm^{\BLII}_\Est, \vmm^{\GWI}_\Est, \vmm^{\GWII}_\Est\)$. Similarly we compute $\(\cov^{\BLI}_\Est,\right.$$\left. \cov^{\BLII}_\Est\,,\right.$ $\left. \cov^{\GWI}_\Est,\right.$$\left. \cov^{\GWII}_\Est\)$ using equations  (\ref{eq:BL1covE}), (\ref{eq:BLIIcovE}), (\ref{eq:GWIcovE}) and (\ref{eq:GWIIcovE}) and the estimation for prior and views obtained using the methodology described in the earlier points.
  \item  We define {\bf back-validation} as the procedure for evaluating how a strategy would play out on historical data if the future information required for validating the strategy was made available.{\footnote{The purpose of this definition is to distinguish the first stage of our testing methodology from regular backtesting.}} For example, in our paper we are interested in playing out the strategy when we provide correct or incorrect views and it is not possible to determine the correctness of a view without using future information. It is preferable to use the back-validation procedure on synthetic or simulated data that respects the assumptions of the model underlying the strategy. 
  
 \hspace{0.5cm} Using the weights allocation procedure described in Appendix \ref{app:AllocMethod}, we ``back-validate'' the four methodologies to compute the portfolios' returns and performance characteristics. We use a quarterly rebalancing schedule for all the four allocation methodologies. 
      \end{itemize}
   
   \item[$\bullet$] For every path $\wp$, the Sharpe ratios $\scrS_{\BLI}(\wp)$, $\scrS_{\BLI}(\wp)$, $\scrS_{\GWI}(\wp)
   $, $\scrS_{\GWII}(\wp)$ are computed. We also compute the Sharpe ratio for the benchmark allocation methodology (specified by $\wgts_{{}_{\BM}}$). The Sharpe ratio of the benchmark is denoted by $\scrS_{\BM}(\wp)$.{\footnote{Recall that we have set the risk-free rate to zero.}}
     \item[$\bullet$] We measure two methodologies using Sharpe ratio as the evaluation metric. The outperformance metric $\outperf(A,B)$ is defined as the difference in the expected Sharpe ratios of methodology $A$ and $B$. More precisely, the  outperformance metric $\outperf(A,B)$ is
     \be
     \label{eq:Outperf}
     \outperf(A,B) = {\mathbb{E}_\wp\[ \scrS_A(\wp) - \scrS_B(\wp)\] }
     \ee
     We can also measure outperformance using the difference in other performance characteristics such as Sortino ratio, Calmar, ratio, Omega ratio, {\it etc.},  however in the present paper we will use the difference Sharpe ratio as the metric. If $\outperf(A,B)$ is statistically significant, then we can infer that $A$ outperforms $B$. The outperformance is considered statistically significant if the following test statistic is above a critical threshold $\outperftstat_c$:
     \be
     \label{eq:Outperftstat}
     \outperftstat(A,B) = N^{\half}_\wp {\mathbb{E}_\wp\[ \scrS_A(\wp) - \scrS_B(\wp)\] \over \sqrt{{\text{\sc Var}}_\wp\[ \scrS_A(\wp) - \scrS_B(\wp)\]}}
     \ee
     where $N_\wp$ is the number of paths.
 \end{itemize}
 
 \subsection{\sc{Results of Stage I Testing}}
 For the purpose of the numerical study we chose $\gammaR = 2.5$, $\ell_b = 125$, $\ell_f = 750$, $\tau = \ell^{-1}_b$, $\Nprior = 50$, $\Nview = \Nprior$, $T = 4000$ and $N_{\wp} = 250$.{\footnote{Note that $T=4000$ corresponds to around fifteen years of daily returns.}} We present the findings for two different values of the confidence parameter $\conf$ defined as follows:
  $$\conf = {\lambda \over 1+ \lambda} $$
Note that $0 \le \conf \le 1$. In principle, $\conf$ can be tuned dynamically or determined through a hyperparameter tuning methodology. 

In our analysis, we examine the results of the methodology for two different values of the confidence parameter: $\conf = 95\%$ for high confidence and $\conf = 5\%$ for low confidence.  We would like to re-emphasize that $\conf$ is the investor's subjective confidence and not the confidence interval determined by the covariance or precision.  We will denote the geometric allocation methodologies with $\conf = 95\%$ by $\GWI\text{ (High)}$ and $\GWII\text{ (High)}$. Similarly, we denote those with $\conf = 5\%$ by $\GWI\text{ (Low)}$ and $\GWII\text{ (Low)}$.

In the following, we present our findings of the preliminary evaluation in three scenarios when the views are (a) correct, (b) ambiguous and (c) incorrect. The outperformance metric $\outperf$ is used for comparing $\GWI$ and $\GWII$ allocation methodologies (with confidence parameters $\conf = 95\%$ and $\conf = 5\%$)  with the benchmark, $\BLI$ and $\BLII$ methodologies. We choose a threshold of $\outperftstat_{c} = 3.125$ for the test statistic $\outperftstat$. This value of $\outperftstat_{c}$ corresponds to a significance level or $p-$value threshold of $0.001$ with the $N_{\wp}-1$ as the degree of freedom .{\footnote{We are only interested in one-sided tail.}}
 \subsubsection{Performance With Correct Views}
 Figure (\ref{fig:HCCV}) shows the distribution of Sharpe ratios for the different allocation methodologies when the investor's views are correct.
 \begin{figure}[htpb]
  \captionsetup{width=.85\linewidth}
\begin{center}
\includegraphics[scale=0.65]{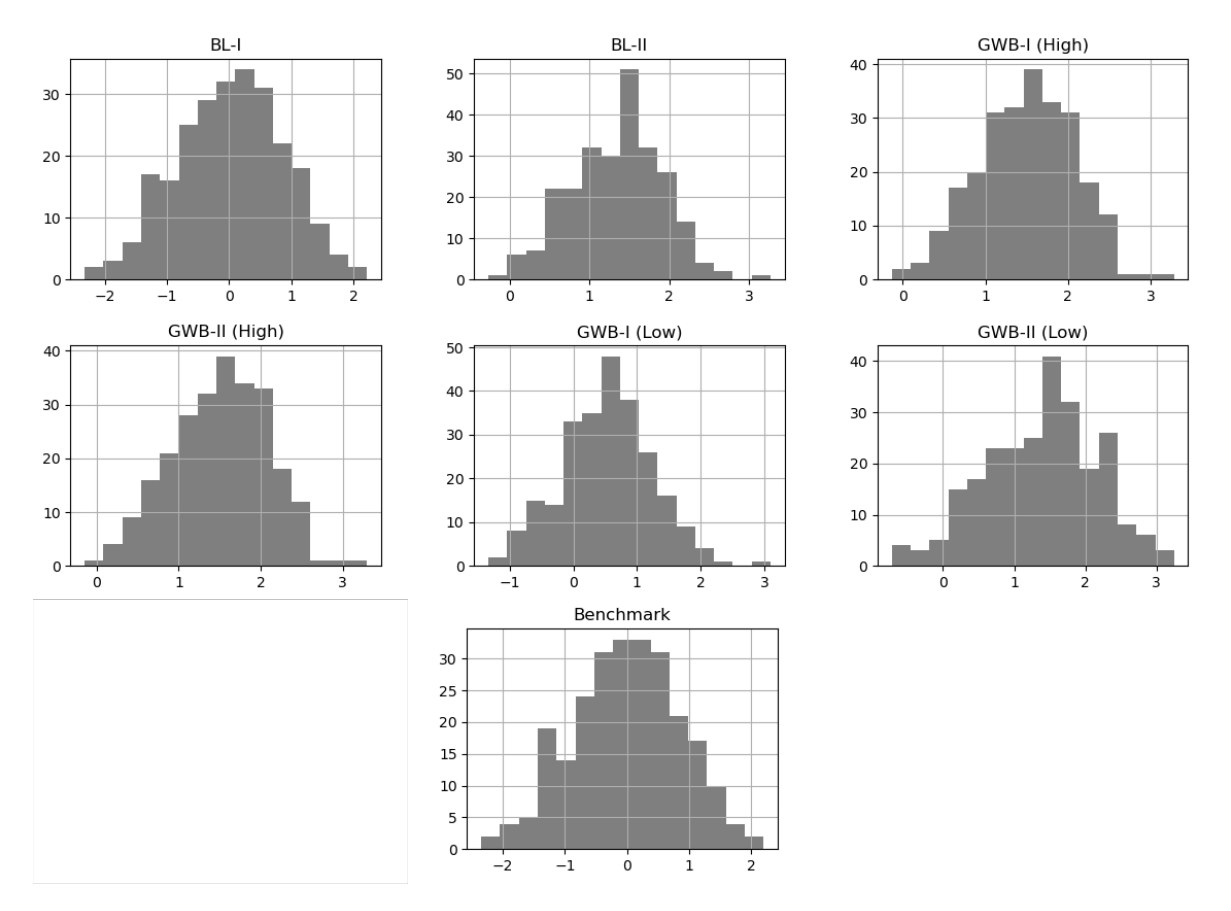} 
\caption{Shows the distribution of Sharpe ratios for the benchmark, $\BLI,\, \BLII,$ $\GWI,\, \GWII$ (High and Low) allocation methodologies.}\label{fig:HCCV}
\end{center}
\end{figure}
 It can be inferred from the location of the peaks of the histograms that the geometric approaches outperform the BL models. This can also be inferred quite directly from \autoref{table:Correct} (Top) which shows the outperformance metric $\outperf(A,B)$ for  $A \in \{\GWI,\, \GWII \}$ and $B \in \{\text{\sc BM},$ $\BLI,$ $\BLII,$ $\GWI,\, \GWII\}$.
 \begin{Table}[htpb] 
 \captionsetup{width=.85\linewidth}
 \vspace{-.25cm}
\begin{center}
\includegraphics[scale=0.75]{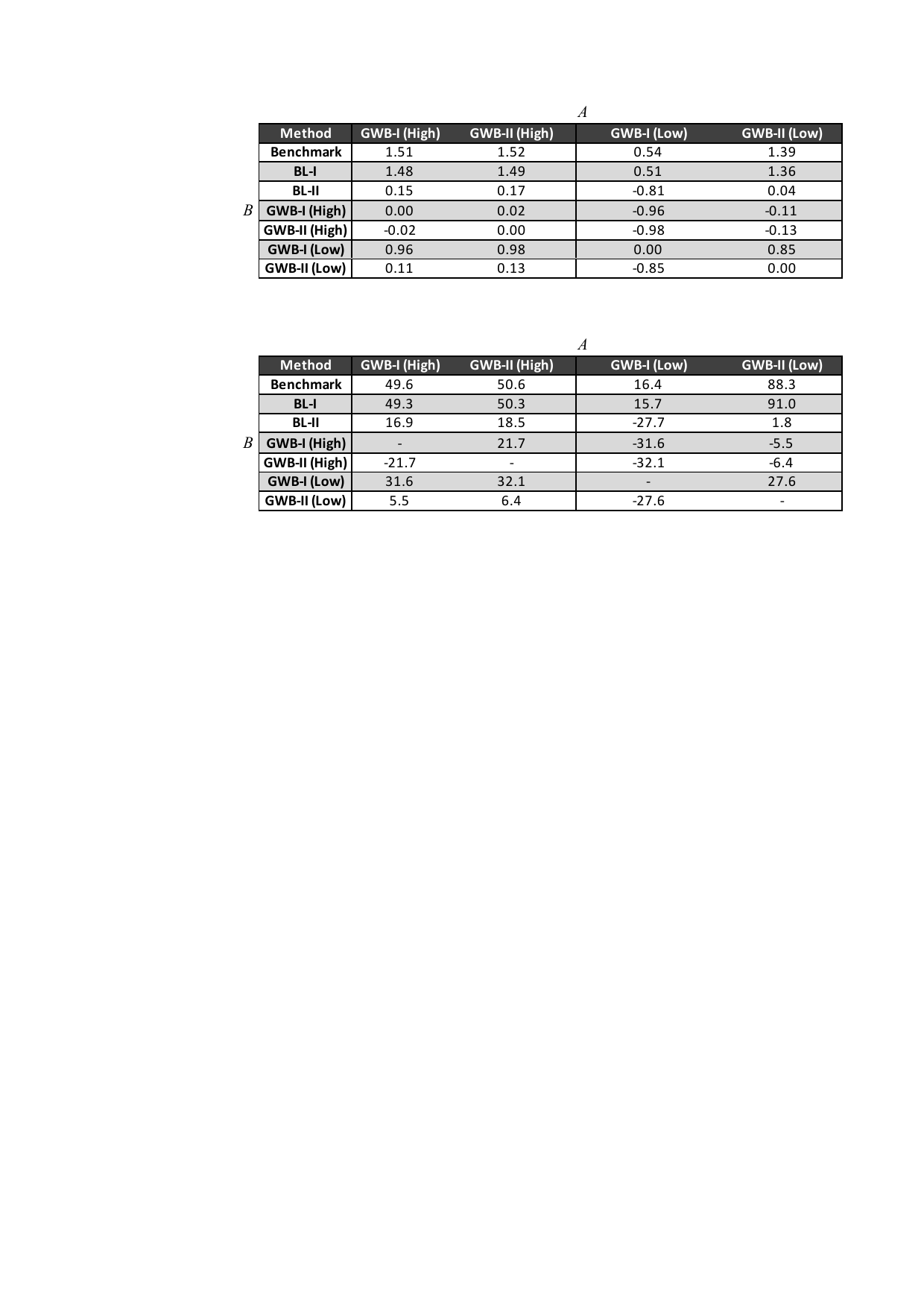}
\includegraphics[scale=0.75]{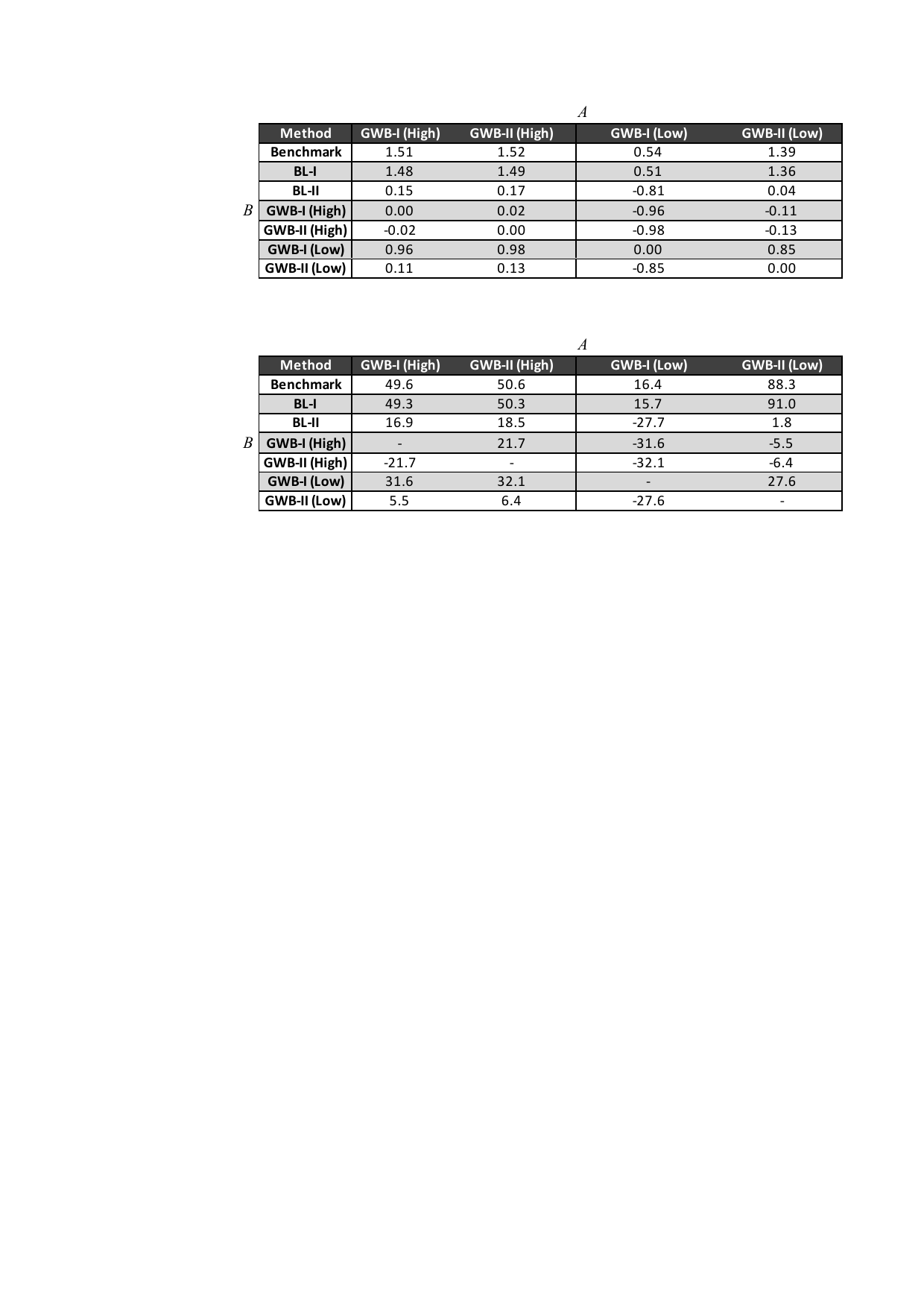}
\caption{{\bf(Top)} Shows the outperformance metric $\outperf(A,B)$ for  $A \in \{\GWI (\High),$ $\GWII (\High),$ $\GWI (\Low),$ $\GWII (\Low)\} $ and $B \in \{\text{\sc BM},$ $\BLI,$ $\BLII,$ $\GWI(\High),$ $\GWII(\High),$ $\GWI(\Low),$ $\GWII(\Low)\}$. {\bf(Bottom)} Shows the corresponding test statistic $\outperftstat(A,B)$. If  $\outperftstat(A,B)$ is lower than $\outperftstat_{c}$ is statistically insignificant. If $\outperftstat(B,A)$ is greater than $\outperftstat_{c}$ then the underperformance of $A$ compared to $B$ is statistically significant. }\label{table:Correct}
\end{center}
\vspace{-.50cm}
\end{Table}
Clearly both the geometric approaches based on GWB outperform the benchmark and both the Black-Litterman model when the views are ``correct'' and the investor has high confidence in the views.  The corresponding test static is shown in \autoref{table:Correct} (Bottom) and we conclude that the outperformance is significant. 

However, if the investor has low confidence on consistently ``correct views'', then he or she can only outperform the benchmark and $\BLI$ model using the geometric approaches.
This is because the geometric approaches are extremely close to the benchmark allocation methodology when the confidence is low. Since the $\BLII$ methodology clearly outperforms the benchmark (see Figure \ref{fig:HCCV}), it outperforms the geometric methods if the investor specifies low confidence. Interestingly, the $\GWII$ method also outperforms the $\GWI$ method irrespective of the degree of confidence.

 The geometric approach rewards the investor for having higher confidence in ``correct'' views. This seemingly ``qualitative'' statement is based on the empirical observation (from the top panel of \autoref{table:Correct}) that $\GWI(\High)$ and $\GWII (\High)$ outperform $\GWI (\Low)$ and $\GWII (\Low)$. This outperformance is statistically significant which is made clear in the bottom panel of \autoref{table:Correct}. In particular, we note that the test statistic $\outperftstat$ satisfies 
$$\outperftstat\(\GWI(\High),\GWI (\Low)\) > \outperftstat_c$$ 
$$\outperftstat\(\GWII(\High),\GWII (\Low)\) > \outperftstat_c$$ 
\subsubsection{Ambiguous View}
We now present the results for the case where the investor views are ambiguous. 
\begin{Table}[htbp]
\captionsetup{width=.85\linewidth}
\begin{center}
\includegraphics[scale=0.65]{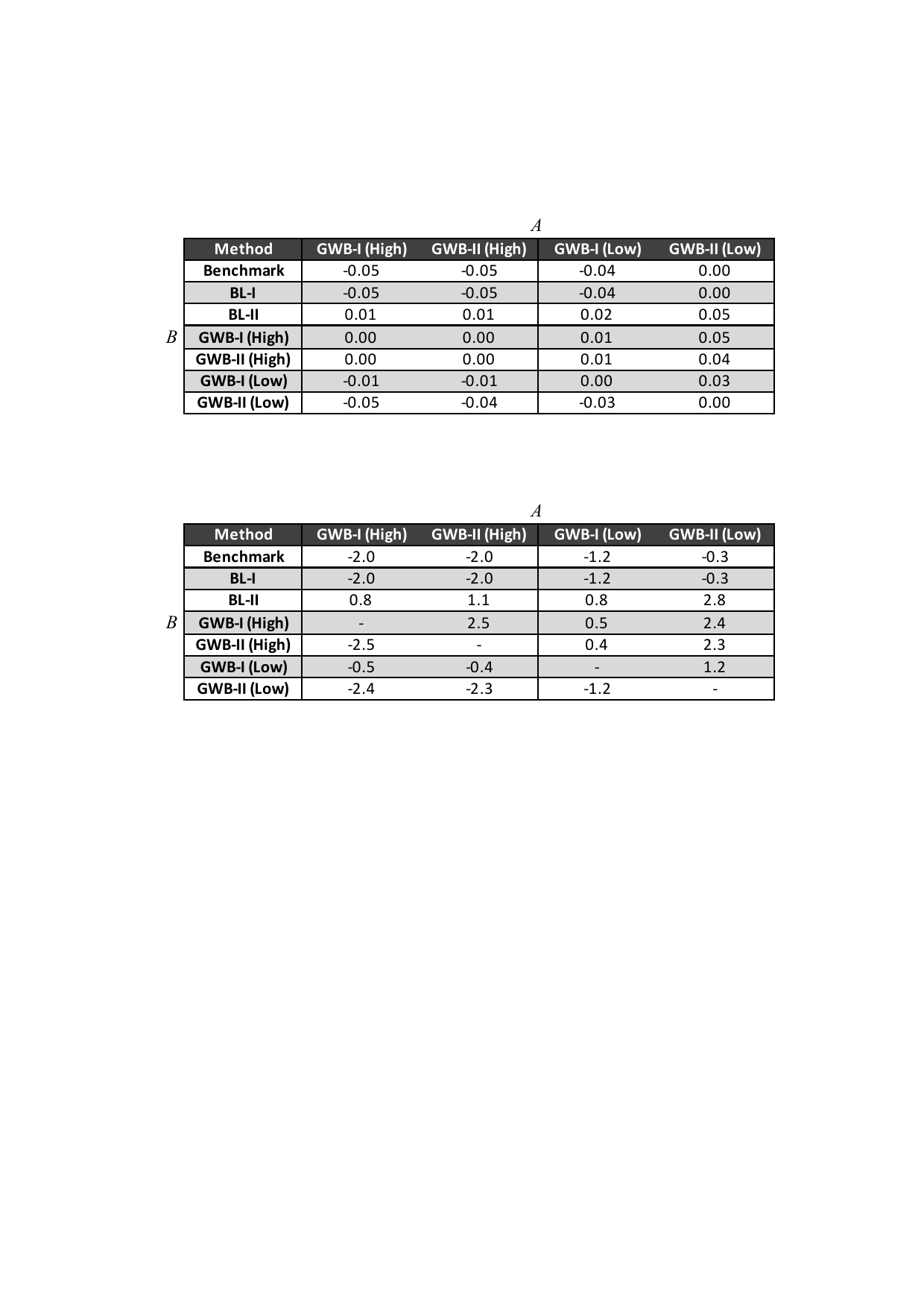}
\includegraphics[scale=0.65]{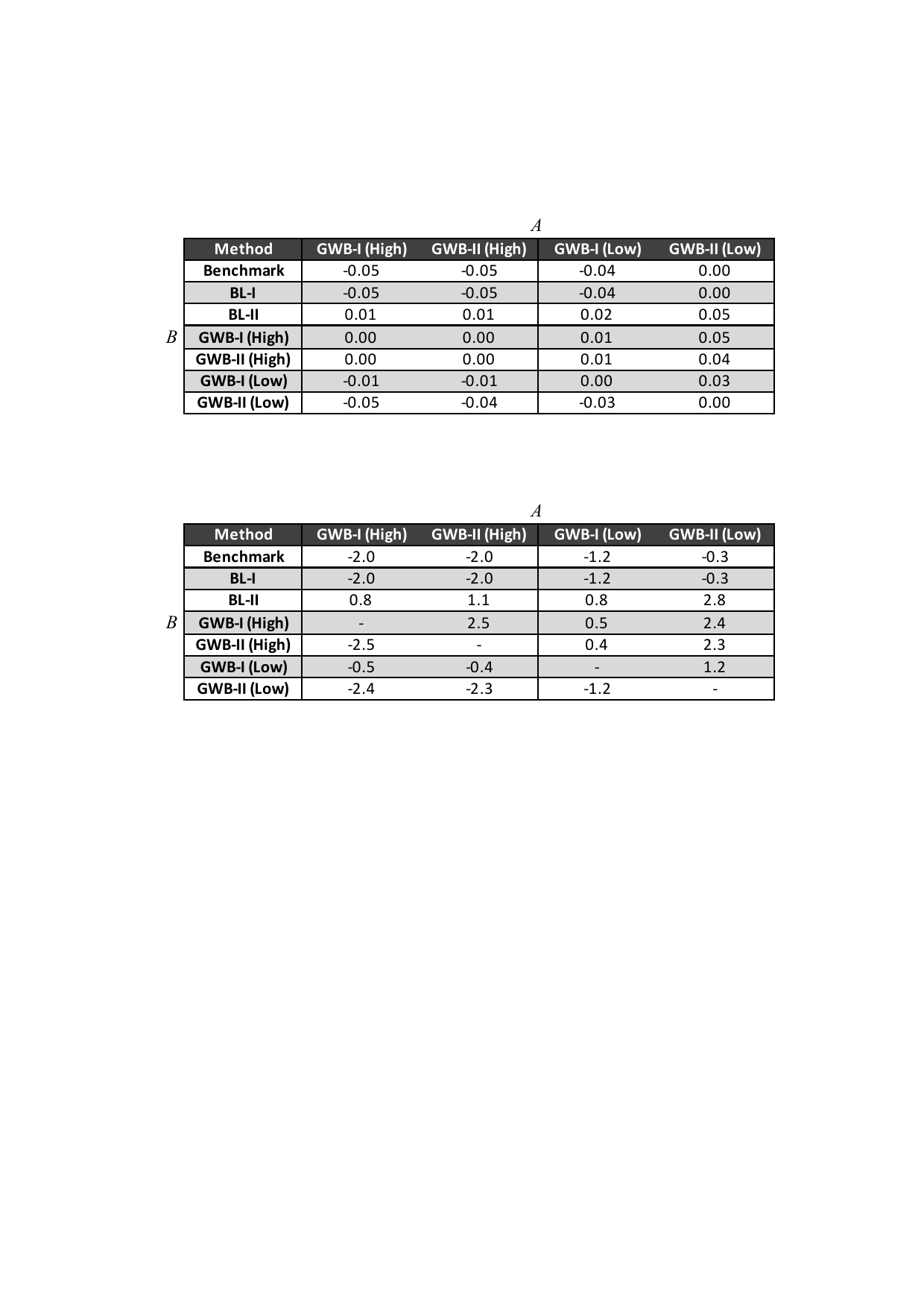}
\caption{{\bf{(Top)}} Shows the outperformance metric $\outperf(A,B)$ for  $A \in \{\GWI,\, \GWII\}$ and $B \in \{\text{\sc BM},$ $\BLI,$ $\BLII,$ $\GWI,\, \GWII\}$ and for $\conf = 95\%$ (High) and $\conf = 5\%$ (Low). {\bf{(Bottom)}} Shows the corresponding test statistic. If the test statistic $\outperftstat$ is less than $\outperftstat_{c}$, then the outperformance is statistically insignificant. It is clear from the values in the bottom table that the outperformance is statistically insignificant.}\label{table:Noisy}
\end{center}
\end{Table}
\vspace{-.1cm}
\autoref{table:Noisy} shows that when the views are ambiguous and have no relation to the future returns, the outperformance metrics are not statistically significant. This conclusion is expected because there should be no material outperformance (or underperformance) when views have no material information. This conclusion is independent of the degree of confidence as well.
 
 \subsubsection{Incorrect Views}

\autoref{table:Incorrect} shows that when if the investors provide consistently incorrect views with high confidence to the geometric approach, they underperform the benchmark as well as the Black-Litterman model. As explained at the beginning of this section, it is desirable to have a model that underperforms when the views are incorrect and when the confidence parameter is high. Recall that, if the confidence in the view is zero then the geometric model coincides with benchmark and in the presence of negative views, it is desirable to align with the benchmark. 
\begin{Table}[htbp]
\captionsetup{width=.85\linewidth}
\begin{center}
\includegraphics[scale=0.625]{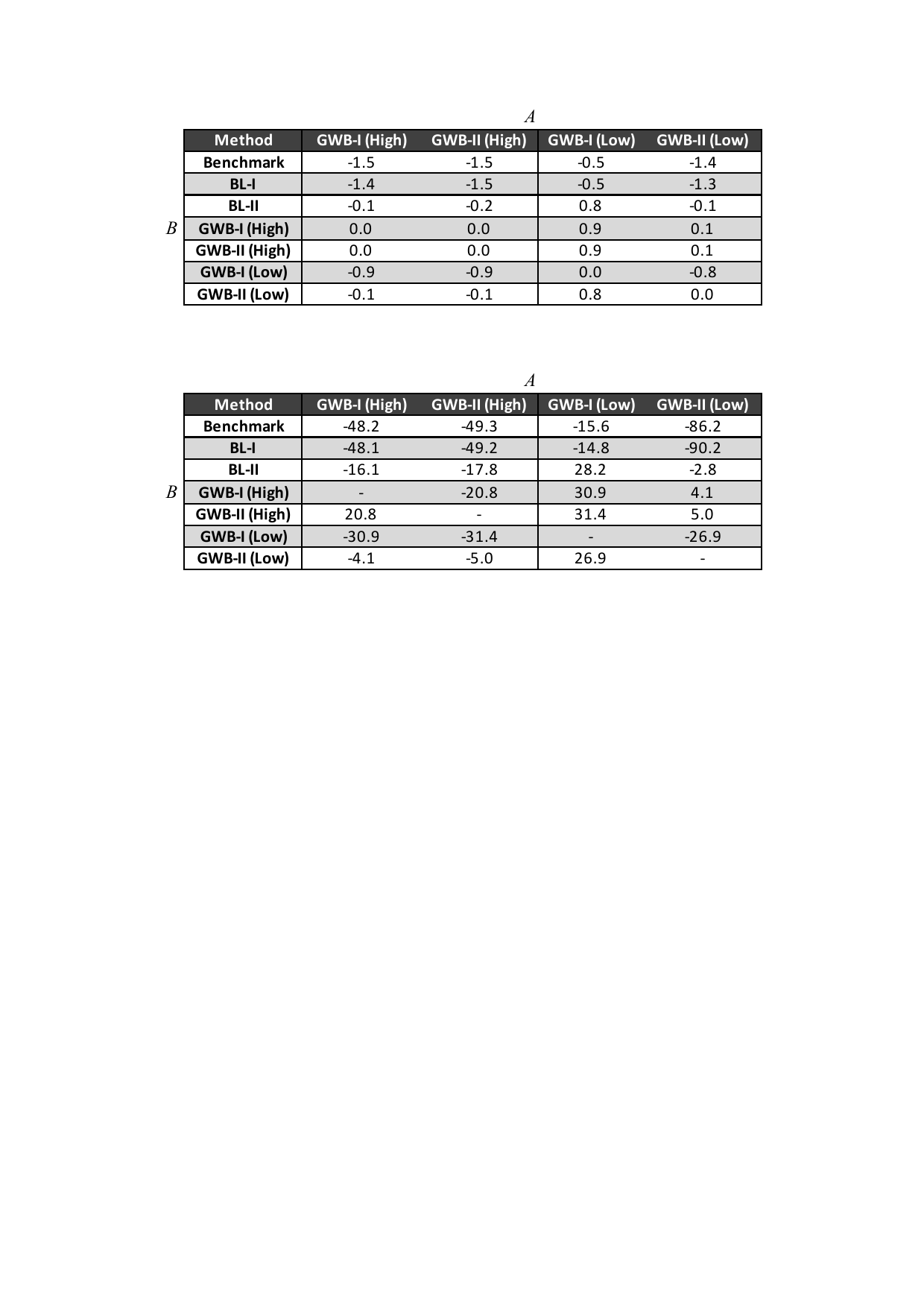}
\includegraphics[scale=0.625]{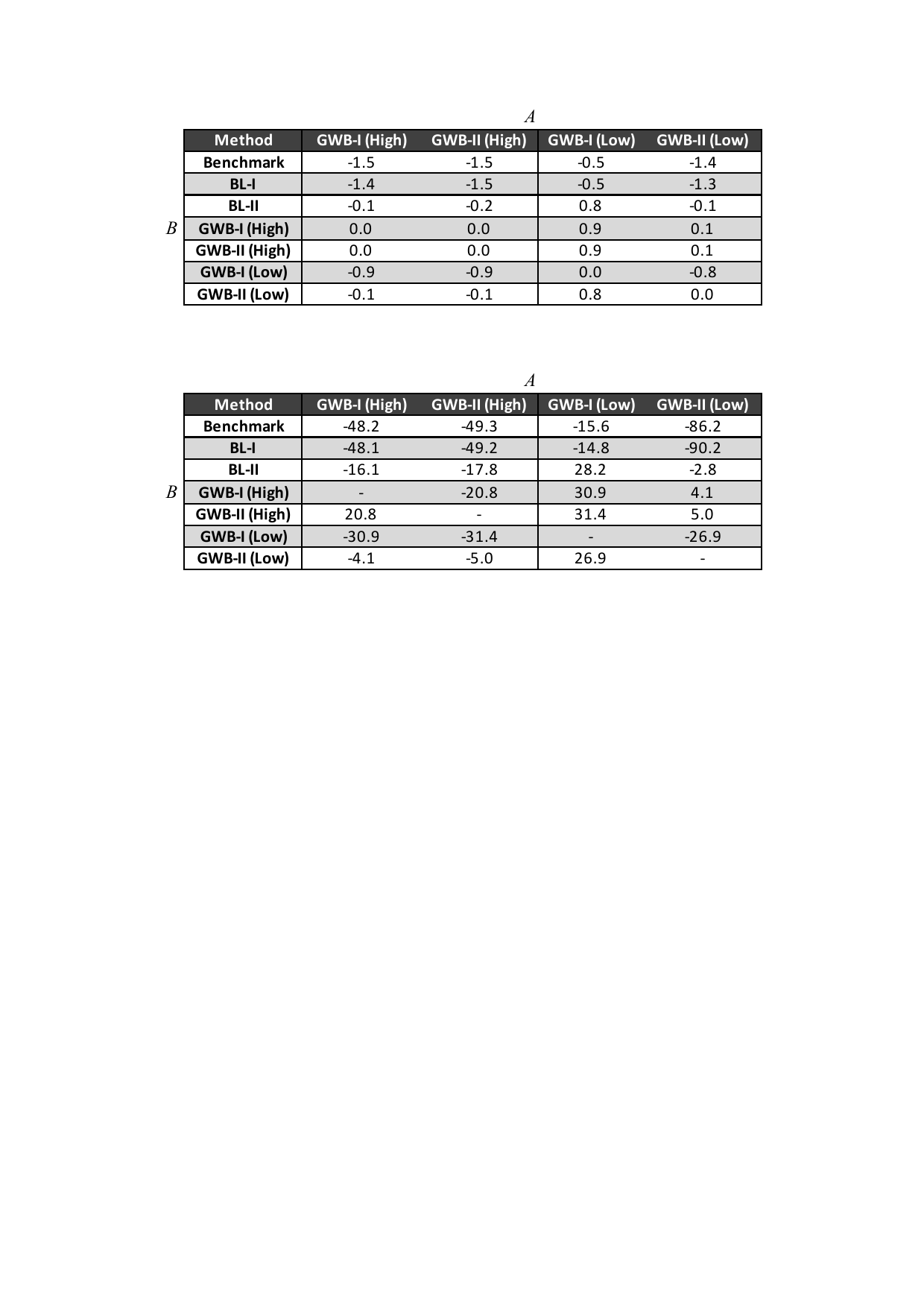}
\caption{{\bf{(Top)}} Shows the outperformance metric $\outperf(A,B)$ for  $A \in \{\GWI,\, \GWII\}$ and $B \in \{\text{\sc BM},$ $\BLI,$ $\BLII,$ $\GWI,\, \GWII\}$ and for $\conf = 95\%$ (High) and $\conf = 5\%$ (Low). {\bf{(Bottom)}} Shows the corresponding test statistic. If $\outperftstat(B,A)$ is greater than $\outperftstat_{c}$ then the underperformance of $A$ compared to $B$ is statistically significant.}\label{table:Incorrect}
\end{center}
\end{Table}
In the geometric approach, the investor is punished lesser for having lower confidence in ``incorrect'' views and in fact not punished if he or she has zero confidence in incorrect views. This observation can be inferred from the two panels in \autoref{table:Incorrect})  In particular, we note that 
$$\outperf\(\GWI(\Low),\GWI (\High)\) > 0, \qquad \outperftstat\(\GWI(\Low),\GWI (\High)\) > \outperftstat_c$$ 
$$\outperf\(\GWII (\Low),\GWII (\High)\) > 0, \qquad \outperftstat\(\GWII (\Low),\GWII (\High)\) > \outperftstat_c$$ 
 \subsubsection{Inference}
 
{\hspace{0.25cm}} From the results of our preliminary evaluation it is clear that the geometric approach behaves as desired in all the three situations when the views are (a) correct (b) ambiguous(c) incorrect. 
The confidence parameter which is absent in the conventional BL models provide additional flexibility to the investors, who can take advantage of this parameter and can (in principle) outperform the benchmark consistently with suitable judicious tuning of the confidence parameter. 

\hspace{-.25cm}The preliminary evaluation was based on unrealistic assumptions and ideal conditions which do not occur in real trading. Hence, it is essential to test the different allocation methodologies on real-world data. The procedure for testing with real data will be described in $\S$\ref{subsec:TestII}.

\subsection{{\sc Stage II Testing {\it \&} Results }\label{subsec:TestII}}
We will now discuss the second stage of the testing methodology which uses real data. We use the historical data of stock prices from {\sc Yahoo Finance} for this purpose. As mentioned earlier, the goal of the second stage of testing is to check how the allocation methodologies perform on real data. However, we do not have the luxury of testing on multiple ``paths'' as the real-world data is just one realization of the governing process. We will present an alternative approach to backtest on ``multiple paths'' which can reduce the risk of overfitting. The second stage of testing methodology is described below:
\begin{itemize}
\item To create multiple samples or ``multiple paths'' for backtesting, we choose $\Nprior$ assets out of a larger universe (denoted by $\Uni$) with $N_\Uni$ (greater than $\Nprior$) assets. This can be done in ${N_\Uni \choose \Nprior}$ ways. For sufficiently large $N_\Uni$ we get a large number of choices which all represent paths in $\Nprior-$dimensional space. Since each choice leads to an $\Nprior-$path, we can label a random selection of $\Nprior$ stocks by $\wp$ where $\wp \in \{1,2,\dots N_\wp\}$. Out of the ${N_\Uni \choose \Nprior}$ possible $\Nprior-$paths we choose $N_\wp$ paths and we will test our allocation methodologies using these $N_\wp$ samples (from real data).  For each $\wp \in \{1,2,\dots N_\wp\}$, we use each of the allocation methodologies $\BLI$, $\BLII$, $\GWI$ and $\GWII$ to construct portfolios of the $\Nprior$ chosen assets (labeled by $\wp$). In the following we describe the inputs to the allocation methodologies and the rebalancing details:
\item We assume that the prior or the reference distribution is determined using equation (\ref{eq:BenchBLWD}). That is, we assume that the benchmark weights are all equal. It is a common practice to use weights determined from the market capitalization as the benchmark weights. However, in a random selection of stocks, using market capitalization based weights could increase the risk of having concentrated benchmark weights. In this paper we will not analyze if capitalization based weights is a better choice for bench mark weights than equal weights. Interesting discussions on this topic can be found in the literature (for {\it e.g.}, \cite{mcap}), but the precise nature of benchmark weights is not crucial for our discussion.

Again we use, $\vmu_\prior = \vmu_\drift$, $\cov_\prior = \cov_\drift$, $\vnu_\view = \vnu_\viewd$ and $\cov_\view = \cov_\viewd$ for the $\BLI$ and $\GWI$ allocation methodologies, while for $\BLII$ and $\GWII$ allocation methodologies $\vmu_\prior = \vmu_\Rsc$, $\cov_\prior = \cov_\Rsc$, $\vnu_\view = \vnu_\viewR$ and $\cov_\view = \cov_\viewR$.

\item We will use the weights of a minimum variance (or volatility) portfolio for generating views as shown below:
\bea
\vec{\mu}_{\view} & = & \gammaR  \cov_\prior \wgts_{\text{\sc mVol}} \\
\vnu_{\view} & = & \viewP \vec{\mu}_\view
\eea
where  $\wgts_{\text{\sc mVol}} = \text{\sc MVO}[\vec{0}, \cov_\prior; \gammaR]$. Note that $\wgts_{\text{\sc mVol}} $ are weights of a long-only minimum volatility portfolio with weights adding up to 1.  
The views covariance matrix is obtained as described below:
\be
\cov_{\view} =  \viewP \cov_\prior \viewP^T
\ee
Note that the above estimates are obtained using the {\bf historical information only} and no forward looking information is used. The rationale behind using $\wgts_{\text{\sc mVol}}$ for generating views is to specify views that reduce the risk of the final portfolio. In the geometric approach, choosing a very high confidence on the views will ensure that the final portfolio lies in the proximity of a minimum volatility portfolio. Hence, we can hope to get portfolios that interpolate between equally weighted portfolio and a minimum volatility portfolio by tuning the confidence parameter.
  \item Using the above methodology for estimating for prior and views and equations  (\ref{eq:BL1muE}), (\ref{eq:BLIImuE}), (\ref{eq:GWImuE}) and (\ref{eq:GWIImuE}), we compute $\(\vmm^{\BLI}_\Est, \vmm^{\BLII}_\Est, \vmm^{\GWI}_\Est, \vmm^{\GWII}_\Est\)$. Similarly we compute the covariances as we did in the preliminary evaluation methodology,  using equations  (\ref{eq:BL1covE}), (\ref{eq:BLIIcovE}), (\ref{eq:GWIcovE}) and (\ref{eq:GWIIcovE}).
\item Using the weights allocation procedure described in Appendix \ref{app:AllocMethod}, we backtest the four methodologies to compute the portfolios' returns and performance characteristics for every choice of $\Nprior$ assets (that is for every $\wp \in \{1,2, \dots, N_\wp\}$). We use a quarterly rebalancing schedule for all the four allocation methodologies. We would like to emphasize that the ``walk-forward'' backtesting is used for this stage of the testing methodology and only historical information is used.
\item We then compute the outperformance metric (difference in Sharpe ratios) and the test statistic $\outperftstat$ using equations (\ref{eq:Outperf}) and (\ref{eq:Outperftstat}) as done in the first stage of testing.
\end{itemize}

 \subsection{\sc{Results of Stage II Testing}}

For the purpose of this study we choose the stocks which are the current constituents of S {\it \&} P 500 having around fifteen years of data as the universe $\Uni$. This has over 350 stocks out of which we choose $\Nprior = 50$ stocks at random. By ensuring the stocks have fifteen years of data we ensure that the universe size does not change with time. All the model parameters are the same as the ones used in the preliminary evaluation. As done in the first stage of testing, we will present the results for two different values of the confidence parameter:(i) $\conf = 95\%$ and (ii) $\conf = 5\%$. All variables and methodologies' names are the same as the ones used in the preliminary evaluation. We now present the result of our testing.

Figure (\ref{fig:real}) shows the distribution of Sharpe ratios for the different allocation methodologies when used on random selection of $\Nprior$ real assets. 
 \begin{figure}[htpb]
  \captionsetup{width=.85\linewidth}
\begin{center}
\includegraphics[scale=0.65]{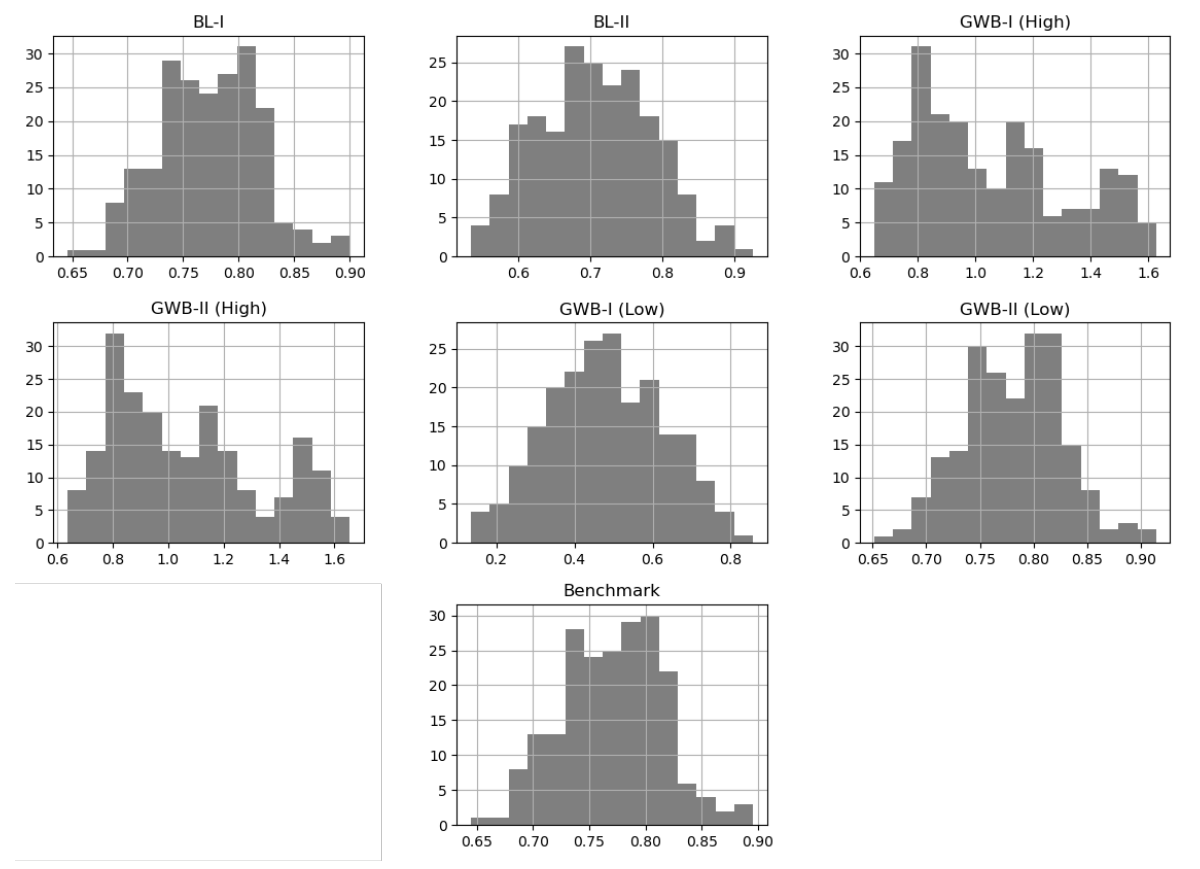} 
\caption{Shows the distribution of Sharpe ratios for the benchmark, $\BLI,\, \BLII,$ $\GWI,\, \GWII$ (High and Low) allocation methodologies.}\label{fig:real}
\end{center}
\end{figure}
It is quite evident from the histogram plots that the geometric approaches outperform the benchmark and the BL models when a high degree of confidence is specified for the views. 
 It is also clear from \autoref{table:Real} (Top) that the $\GWII$ approach performs far better than the conventional BL models and even the $\GWI$ model. 
 \begin{Table}[htpb] 
 \captionsetup{width=.85\linewidth}
 \vspace{-.25cm}
\begin{center}
\includegraphics[scale=0.75]{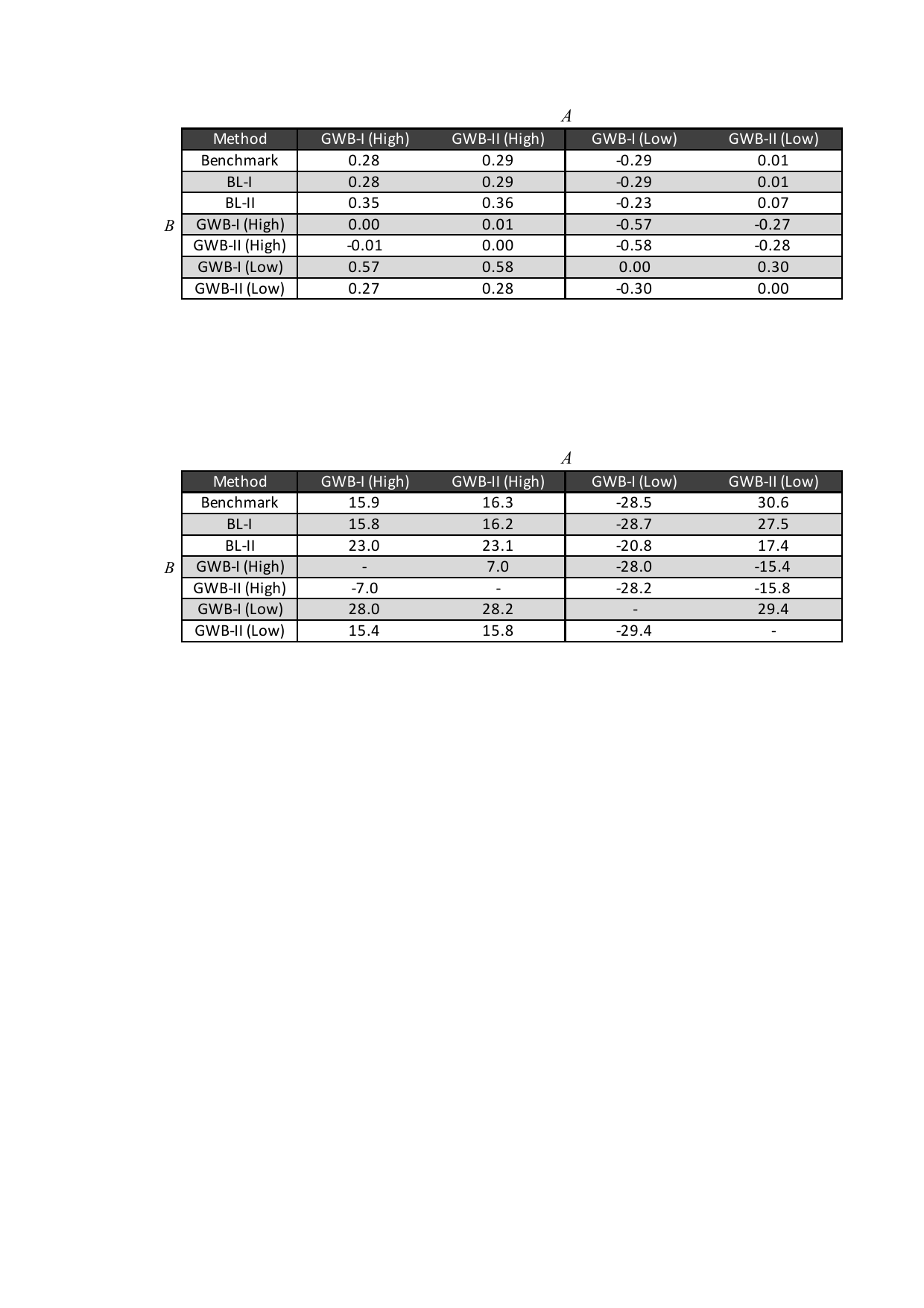}
\includegraphics[scale=0.75]{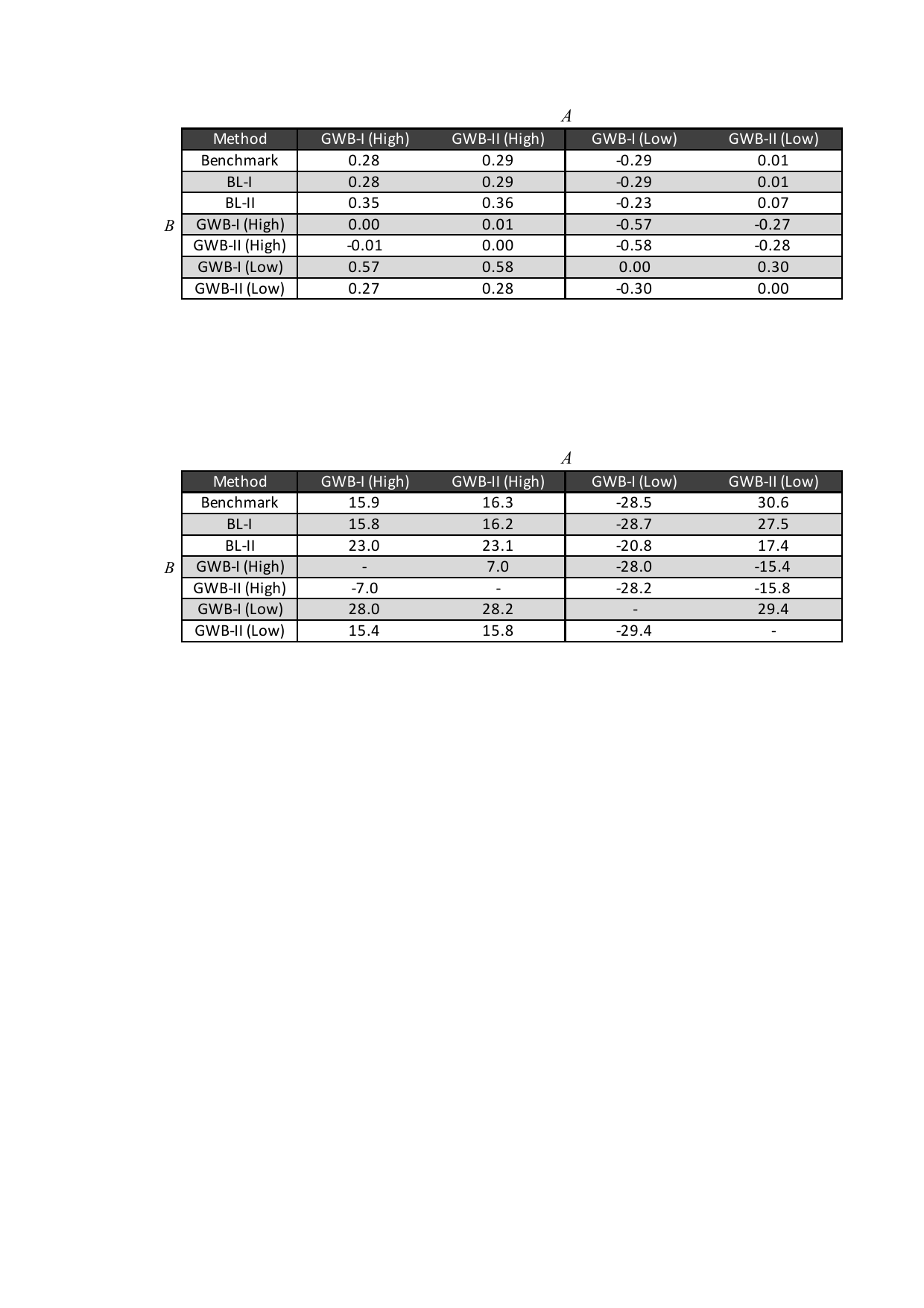}
\caption{{\bf(Top)} Shows the outperformance metric $\outperf(A,B)$ for  $A \in \{\GWI (\High),$ $\GWII (\High),$ $\GWI (\Low),$ $\GWII (\Low)\} $ and $B \in \{\text{\sc BM},$ $\BLI,$ $\BLII,$ $\GWI(\High),$ $\GWII(\High),$ $\GWI(\Low),$ $\GWII(\Low)\}$. {\bf(Bottom)} Shows the corresponding test statistic $\outperftstat(A,B)$. If  $\outperftstat(A,B)$ is lower than $\outperftstat_{c}$ is statistically insignificant. If $\outperftstat(B,A)$ is greater than $\outperftstat_{c}$ then the underperformance of $A$ compared to $B$ is statistically significant. }\label{table:Real}
\end{center}
\vspace{-.50cm}
\end{Table}
The $\GWI$ approach underperforms all the methodologies when a low confidence is specified. In our analysis, $\GWII$ method has outperformed all other approaches in both stages of testing and it is also intuitive. It would be worth exploring this allocation methodology in more detail.
\section{{\sc Conclusions {\it \&} Outlook}}
In this paper, we have presented a geometric approach to incorporating investor views that utilizes ideas from optimal transport theory. Given the growing number of applications of optimal transport theory in machine learning, computer vision, physics {\it etc.}, it is not surprising that optimal transport has utility for portfolio construction. The approach presented in this paper provides an investor the flexibility to specify the confidence in the form of a parameter that does not exist in the conventional BL models. We have provided empirical evidence and theoretical arguments to demonstrate that the geometric approach rewards skillful investors, who can adjust their confidence in their views judicially, more than the conventional BL models. 

From a systematic investing perspective, it would be interesting to build an allocation methodology that tunes the confidence parameter dynamically based on regime shift models that can identify if a view is correct, incorrect or ambiguous. An investor who wishes to incorporate different views with different levels of confidence can do so by using the multi-center GWB \cite{GWB} \ie, by solving the minimizing the following Lagrangian,
\be
\label{eq:concGWB}
\CL_{GWB} = \(\EuScript{D}_{WD}(f_\update , f_\prior)  +\mathop{\sum}_{i=1}^{K} \lambda_i \EuScript{D}_{WD}(\pushPi [f_\update] , f_\view)\)
\ee
where $\viewP^{(i)}$ denotes the views matrix for the $i^{th}$ view and $t_i = {\lambda_i\over 1 + \lambda_i}$ is the confidence associated with that view. The Lagrangian in equation  (\ref{eq:concGWB}) can in principle be minimized numerically, however the authors are not aware of a closed form expression for the GWB when the number centers ($K+1$) is more than two. Formally, the GWB problem in equation (\ref{eq:concGWB}) can also be extended to non-Gaussian distributions. 

Note that the main challenge in minimizing $\label{eq:concGWB}$ for Gaussian distributions is deriving the covariance update. The covariance update rule can be used for forecasting covariance matrices and it could use multiple methods of estimating covariance as views. For example, the covariance update rule (in equation \ref{eq:covUpdateMain}) can be used for finding the barycenter of factor model covariance and historical covariance. 

We believe that the geometric approach presented here has many interesting applications in finance and the methodology presented here will provide uncorrelated approaches to incorporating investor views. 

\noindent
\begin{center}
{\bf Acknowledgements}
\end{center}
We are grateful to our ADIA colleagues Majed Alromaithi,  Dushyant Sharma, Adil Reghai, Oleksiy Kondratiev, Arthur Maghakian, Enzo Busetti, Thomas Schmelzer, Sebastien Lefort, Patrik Karlsson, and other Q-team colleagues and we wish to thank them for their comments and encouragement. 
\vspace{.5 cm}
\begin{center}
\huge{\sc{Appendix}}
\end{center}

\appendix
\section{{\sc Some Useful Lemmas}} \label{sec:Lemma}
We will use the following lemmas in various parts of the papers. These are well known results and proofs can be found in standard linear algebra text books.

\begin{lemma}$\forall ~Z \in {\sym} (\mathbb{R})$,
\be
\label{eq:matrixDer}
{d \over dZ} \tr\( K_1 Z^2 K_2\) = K_1^T K_2^T Z + Z K_1^T K_2^T
\ee
\end{lemma}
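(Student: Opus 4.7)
The plan is to reduce this to a standard matrix-differential calculation and then specialize to the symmetric case. First I would absorb the outer matrices: by the cyclic property of the trace, $\tr(K_1 Z^2 K_2) = \tr(Z^2 M)$, where I set $M := K_2 K_1$. This removes the irrelevant structure of having $Z^2$ sandwiched and leaves a cleaner target.

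Next I would compute the differential. Using the Leibniz rule applied to $d(Z^2) = (dZ)Z + Z(dZ)$ together with linearity and cyclicity of the trace,
\begin{equation*}
d\,\tr(Z^2 M) = \tr\bigl((dZ)\,ZM\bigr) + \tr\bigl(Z(dZ)\,M\bigr) = \tr\bigl((ZM + MZ)\,dZ\bigr).
\end{equation*}
Reading off the gradient under the convention $d f = \tr\bigl((\partial f/\partial Z)^T dZ\bigr)$ gives
\begin{equation*}
\frac{d}{dZ}\tr(Z^2 M) = (ZM + MZ)^T = M^T Z^T + Z^T M^T.
\end{equation*}

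Finally I would specialize: $Z$ is symmetric so $Z^T = Z$, and $M^T = (K_2 K_1)^T = K_1^T K_2^T$, which yields exactly $K_1^T K_2^T\, Z + Z\, K_1^T K_2^T$. As a sanity check I would redo the calculation once in index notation, writing $\tr(Z^2 M) = Z_{ij} Z_{jk} M_{ki}$ and differentiating with respect to $Z_{pq}$ treated as independent entries; this yields $(ZM)_{qp} + (MZ)_{qp}$, which after transposition and substitution reproduces the same answer.

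The only subtlety — rather than an obstacle — is the choice of convention when differentiating with respect to a constrained (symmetric) matrix: some conventions symmetrize the result and introduce a factor of $2$ on the off-diagonal. Here the stated right-hand side is already manifestly symmetric (since $(K_1^T K_2^T Z + Z K_1^T K_2^T)^T = K_2 K_1 Z + Z K_2 K_1$, which agrees when one sets $M = M^T$ in the relevant use), so the naive "treat entries as independent, then impose $Z^T = Z$" convention is the one consistent with the paper, and no extra symmetrization step is needed.
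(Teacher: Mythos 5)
Your proof is correct. The paper itself offers no proof of this lemma --- Appendix A merely remarks that these are well-known results available in standard textbooks --- so there is nothing to compare against; your derivation (cyclic reduction to $\tr(Z^2 M)$ with $M = K_2 K_1$, the product-rule differential $d(Z^2) = (dZ)Z + Z(dZ)$, and reading off the gradient) is the standard one and the index-notation cross-check confirms it. Your closing remark on the symmetric-matrix differentiation convention is also the right call: in the paper's only use of this identity (Appendix E, where $K_1 = K_2 = \cov_\prior^{-\half}$ or $\cov_\view^{-\half}$), the matrix $M = K_2K_1$ is symmetric, so the unsymmetrized gradient already lands in $\text{Sym}(\mathbb{R})$ and the resulting stationarity conditions (the Lyapunov equations (\ref{eq:XcovP})--(\ref{eq:YcovV})) are unaffected by the choice of convention.
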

\begin{lemma}\label{lem:lemII} $\forall Z_1, Z_2 \in {\sym}(\mathbb{R})$,
\be
Z_1^{-\half}\( Z_1^\half Z_2 Z_1^\half\)^\half Z_1^{-\half} = Z_1^{-1} \( Z_1 Z_2 \)^{\half} = \( Z_2 Z_1 \)^\half Z_1^{-1}
\ee
\end{lemma}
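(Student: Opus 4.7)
\textbf{Proof proposal for Lemma \ref{lem:lemII}.} Write $A := Z_1$ and $B := Z_2$ to declutter, where implicitly $A$ is symmetric positive definite (so that $A^{-1/2}$ exists) and $B$ is symmetric positive semidefinite (so that $A^{1/2} B A^{1/2}$ is PSD and admits a symmetric PSD square root in the standard sense). Define
\[
M \;:=\; A^{-1/2}\bigl(A^{1/2} B A^{1/2}\bigr)^{1/2} A^{-1/2}.
\]
The plan is to show, in one direct computation, that both $M A$ and $AM$ are square roots (in the principal sense) of $BA$ and $AB$ respectively, and then invoke uniqueness to conclude.

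First I would rewrite $MA$ and $AM$ by collapsing the $A^{\pm 1/2}$ factors:
\[
MA \;=\; A^{-1/2}\bigl(A^{1/2} B A^{1/2}\bigr)^{1/2} A^{1/2}, \qquad AM \;=\; A^{1/2}\bigl(A^{1/2} B A^{1/2}\bigr)^{1/2} A^{-1/2}.
\]
Squaring each of these and using $\bigl(A^{1/2} B A^{1/2}\bigr)^{1/2}\bigl(A^{1/2} B A^{1/2}\bigr)^{1/2} = A^{1/2} B A^{1/2}$ gives immediately $(MA)^2 = A^{-1/2}(A^{1/2} B A^{1/2}) A^{1/2} = BA$ and $(AM)^2 = AB$.

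Second, I would justify calling $MA$ and $AM$ \emph{the} principal square roots of $BA$ and $AB$. The matrices $BA$ and $AB$ are not symmetric in general, but they are similar to the symmetric PSD matrix $A^{1/2} B A^{1/2}$ (via $A^{\pm 1/2}$), hence diagonalisable with non-negative real spectrum; such a matrix admits a unique square root whose eigenvalues are non-negative. By the displayed factorisations, $MA$ and $AM$ are themselves similar to $(A^{1/2} B A^{1/2})^{1/2}$, so their spectra are non-negative, and uniqueness of the principal square root forces $MA = (BA)^{1/2}$ and $AM = (AB)^{1/2}$. Multiplying on the right (resp.\ left) by $A^{-1}$ yields
\[
M = (BA)^{1/2} A^{-1} \quad\text{and}\quad M = A^{-1} (AB)^{1/2},
\]
which is exactly the claimed chain of equalities.

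The only real obstacle is the bookkeeping around the square-root convention for non-symmetric products of PSD matrices; once one commits to the principal branch (non-negative spectrum), the argument is purely algebraic, and the uniqueness step does all the heavy lifting. No spectral-calculus machinery beyond this is needed.
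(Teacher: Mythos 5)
Your argument is correct in substance, and it is worth pointing out that the paper offers no proof of Lemma \ref{lem:lemII} at all: Appendix \ref{sec:Lemma} simply declares these identities to be ``well known results'' whose proofs ``can be found in standard linear algebra text books.'' So there is no in-paper argument to compare against, only the standard textbook one, which is essentially what you have reconstructed: conjugating the symmetric square root $(Z_1^{1/2}Z_2Z_1^{1/2})^{1/2}$ by $Z_1^{\pm 1/2}$ yields matrices that square to $Z_2Z_1$ and $Z_1Z_2$, and a uniqueness argument identifies them with $(Z_2Z_1)^{1/2}$ and $(Z_1Z_2)^{1/2}$. Your squaring computations $(MZ_1)^2=Z_2Z_1$ and $(Z_1M)^2=Z_1Z_2$ are exactly right, and making the implicit hypotheses explicit ($Z_1$ positive definite, $Z_2$ positive semidefinite --- the bare ``$Z_1,Z_2\in\mathrm{Sym}(\mathbb{R})$'' of the statement is too weak for $Z_1^{-1/2}$ even to exist) is an improvement on the paper's phrasing.

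One imprecision in the uniqueness step deserves a patch: a diagonalisable matrix with non-negative spectrum does \emph{not} in general have a unique square root with non-negative spectrum. The $2\times 2$ zero matrix has the nilpotent square root $\bigl(\begin{smallmatrix}0&1\\0&0\end{smallmatrix}\bigr)$, whose spectrum is $\{0\}$, so ``non-negative eigenvalues'' alone does not single out a branch when $0$ is an eigenvalue. The correct uniqueness class is square roots that are \emph{themselves diagonalisable} with non-negative spectrum (equivalently, similar to a positive semidefinite matrix): if $S$ is such a root of $N$, then the $\lambda$-eigenspace of $S$ must equal the $\lambda^2$-eigenspace of $N$ because $\lambda\mapsto\lambda^2$ is injective on $[0,\infty)$, which determines $S$. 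Your candidates $MZ_1$ and $Z_1M$ are visibly in that class, being explicit conjugates of the symmetric PSD matrix $(Z_1^{1/2}Z_2Z_1^{1/2})^{1/2}$, so the argument closes once the uniqueness claim is stated in this sharper form. In the paper's actual use of the lemma --- the remark following Theorem \ref{thm:Main}, equation (\ref{eq:Asuka}), where $Z_2=\viewP^T\cov_\view\viewP$ is assumed invertible, and the proof of the sign choice in Appendix \ref{app:PropMin} --- the relevant spectra are strictly positive, so this subtlety is harmless there.
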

\begin{lemma} \label{lem:Lyapunov} (a) Solution of a special case of Lyapunov equation: If $A$ is a symmetric invertible matrix then  
$Z ={\alpha \over 2} \CA^{-1}$ is the unique solution  $\(\text{for } Z \in \text{Sym}_N(\mathbb{R}) \)$ of the following equation:
\be
\label{eq:LyapunovEq}
\CA\, Z + Z\, \CA = \alpha \mathbb{I} 
\ee
That is, $\forall \CA \in \text{Sym}_N(\mathbb{R})$
\be
\label{eq:Lyapunov}
\CA\, Z + Z\, \CA = \alpha \mathbb{I}_N  \implies Z ={\alpha \over 2} \CA^{-1}
\ee
(b) If $A$ is a symmetric matrix, then
\be
\label{eq:Lyapunovb}
\CA\, Z + Z\, \CA = \alpha \mathbb{I}_N \implies \CA\, Z ={\alpha \over 2} \eye_N = Z\, \CA
\ee

\end{lemma}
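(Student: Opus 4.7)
The plan is to handle both parts by reducing the Lyapunov equation $AZ+ZA=\alpha\mathbb{I}$ to a diagonal form via the spectral theorem, where it becomes a system of scalar equations that can be read off coordinate-by-coordinate.

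For part (a), I would first verify existence by direct substitution: since $A$ is invertible, $Z=\tfrac{\alpha}{2}A^{-1}$ is symmetric and satisfies $A\cdot\tfrac{\alpha}{2}A^{-1}+\tfrac{\alpha}{2}A^{-1}\cdot A=\tfrac{\alpha}{2}\mathbb{I}+\tfrac{\alpha}{2}\mathbb{I}=\alpha\mathbb{I}$. For uniqueness, I would invoke the spectral theorem to write $A=U\Lambda U^T$ with $\Lambda=\operatorname{diag}(\lambda_1,\dots,\lambda_N)$ and $U$ orthogonal, and set $\tilde Z=U^TZU\in\operatorname{Sym}_N(\mathbb{R})$. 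The equation becomes $\Lambda\tilde Z+\tilde Z\Lambda=\alpha\mathbb{I}$, and reading off the $(i,j)$ entry yields
\begin{equation}
(\lambda_i+\lambda_j)\,\tilde Z_{ij}=\alpha\,\delta_{ij}.
\end{equation}
In the context where $A$ is positive definite (the only relevant case for this paper, since $A=W\cov_\prior W$ is a congruence of a positive definite covariance), we have $\lambda_i+\lambda_j>0$ for all $i,j$, so each entry is forced: $\tilde Z_{ii}=\alpha/(2\lambda_i)$ and $\tilde Z_{ij}=0$ for $i\neq j$. Conjugating back by $U$ gives $Z=\tfrac{\alpha}{2}A^{-1}$, establishing uniqueness.

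For part (b), I would use the same diagonalization, noting that $A$ is merely assumed symmetric (but if a solution exists the diagonal constraint $2\lambda_i\tilde Z_{ii}=\alpha$ forces $\lambda_i\neq 0$ whenever $\alpha\neq 0$, so $A$ is invertible on the range occupied by $\tilde Z$). Then the same coordinate equations give $\tilde Z_{ii}=\alpha/(2\lambda_i)$ and $\tilde Z_{ij}=0$ for $i\neq j$ (using the spectral no-pairing assumption $\lambda_i+\lambda_j\neq 0$ inherent in the positive-definite usage of this lemma). Computing
\begin{equation}
(\Lambda\tilde Z)_{ij}=\lambda_i\tilde Z_{ij}=\tfrac{\alpha}{2}\,\delta_{ij},\qquad
(\tilde Z\Lambda)_{ij}=\lambda_j\tilde Z_{ij}=\tfrac{\alpha}{2}\,\delta_{ij},
\end{equation}
and conjugating back by $U$ yields $AZ=\tfrac{\alpha}{2}\mathbb{I}=ZA$, as claimed.

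The computations are entirely routine; the only conceptual subtlety is that without some spectral hypothesis (no pair of eigenvalues of $A$ summing to zero) the Lyapunov operator $Z\mapsto AZ+ZA$ has a non-trivial kernel and uniqueness fails, which I expect to be the main point to acknowledge. Since the matrices $A$ appearing in Appendix~\ref{app:OptUpdate} arise as $\IPPT\cov_\prior\IPPT$ with $\cov_\prior$ positive definite and $\IPPT$ symmetric positive definite, this spectral hypothesis is automatic in every invocation of the lemma in the main text.
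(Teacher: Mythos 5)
The paper does not actually prove this lemma --- it states that ``proofs can be found in standard linear algebra text books'' --- so there is no in-paper argument to compare against. Your spectral-decomposition proof is the standard textbook route and is correct as far as it goes: conjugating by an orthogonal eigenbasis of $\CA$ reduces the Lyapunov equation to the entrywise system $(\lambda_i+\lambda_j)\tilde Z_{ij}=\alpha\,\delta_{ij}$, and both parts follow by reading off the entries, provided no two eigenvalues of $\CA$ sum to zero.

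Your most valuable observation is that this no-resonance hypothesis is genuinely needed and is not implied by ``symmetric and invertible,'' so the lemma as literally stated is too strong. Concretely, for $\CA=\mathrm{diag}(1,-1)$ and $\alpha=2$, every $Z=\CA^{-1}+c\,E$ with $E$ the symmetric off-diagonal unit matrix satisfies $\CA Z+Z\CA=2\,\eye$, so uniqueness in part (a) fails, and since $\CA Z=\eye+c\,\CA E\neq\eye$ for $c\neq 0$, the conclusion of part (b) fails as well. Your repair --- restricting to the positive-(semi)definite case and checking that this covers every invocation in the paper --- is the right one. One small correction: the matrix $A=\IPPT\cov_\prior\IPPT$ that you cite is the argument of the square-root construction, not of this lemma; the Lyapunov lemma is actually invoked in Appendix \ref{app:OptUpdate} with $\CA=\cov^{-\half}_\prior(\eye_\Nprior-\viewP^T\CM\viewP)\cov^{-\half}_\prior$ and with $\CA=\cov^{-\half}_\view(\lambda\eye_\Nview+\CM)\cov^{-\half}_\view$, which the paper establishes (only a posteriori, in equations (\ref{eq:posWBS}) and (\ref{eq:posGamma})) to be positive definite and positive semi-definite respectively. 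So your spectral hypothesis is indeed met where the lemma is used, though the logic there is slightly circular and deserves the explicit acknowledgment you give it.
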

\begin{lemma} $\forall Z \in \mathbb{R}^{N \times M}$
\be  
\tr\[\(  Z\, Z^T\)^\half\] = \tr\[\(  Z^T\, Z\)^\half\]
\ee
\end{lemma}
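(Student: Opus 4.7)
The plan is to prove this via the singular value decomposition (SVD) of $Z$, which simultaneously diagonalizes both $ZZ^T$ and $Z^T Z$ and makes the two traces manifestly equal to the sum of singular values.

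First I would write $Z = U \Sigma V^T$, where $U \in \mathbb{R}^{N\times N}$ and $V \in \mathbb{R}^{M\times M}$ are orthogonal matrices and $\Sigma \in \mathbb{R}^{N\times M}$ has the singular values $\sigma_1, \sigma_2, \ldots, \sigma_r$ (with $r = \mathrm{rank}(Z) \le \min(N,M)$) on its diagonal and zeros elsewhere. Then direct computation gives
\begin{equation}
Z Z^T = U \Sigma \Sigma^T U^T, \qquad Z^T Z = V \Sigma^T \Sigma V^T,
\end{equation}
where $\Sigma \Sigma^T \in \mathbb{R}^{N\times N}$ and $\Sigma^T \Sigma \in \mathbb{R}^{M\times M}$ are both diagonal and positive semi-definite, with the same nonzero diagonal entries $\sigma_1^2,\ldots,\sigma_r^2$ (padded by zeros to the appropriate size).

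Next I would take the principal (positive semi-definite) square root, whose existence and uniqueness is guaranteed by the spectral theorem since $ZZ^T$ and $Z^T Z$ are symmetric and positive semi-definite. Conjugation by orthogonal matrices commutes with the functional calculus, so
\begin{equation}
(ZZ^T)^{1/2} = U\, (\Sigma \Sigma^T)^{1/2}\, U^T, \qquad (Z^T Z)^{1/2} = V\, (\Sigma^T \Sigma)^{1/2}\, V^T.
\end{equation}
The trace is invariant under orthogonal conjugation (and more generally cyclic), so
\begin{equation}
\tr\bigl[(ZZ^T)^{1/2}\bigr] = \tr\bigl[(\Sigma \Sigma^T)^{1/2}\bigr] = \sum_{i=1}^{r} \sigma_i = \tr\bigl[(\Sigma^T \Sigma)^{1/2}\bigr] = \tr\bigl[(Z^T Z)^{1/2}\bigr],
\end{equation}
which is the claimed identity.

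There is no real obstacle here; the only subtlety worth flagging is the rectangular case $N \neq M$, where $\Sigma\Sigma^T$ and $\Sigma^T\Sigma$ have different sizes but the same nonzero spectrum, so their square roots yield the same sum of singular values. An entirely equivalent alternative is to invoke the standard fact that $AB$ and $BA$ share the same nonzero eigenvalues (with multiplicity) applied to $A=Z$, $B=Z^T$, together with the observation that the nonzero eigenvalues of the p.s.d.\ square root are the positive square roots of the nonzero eigenvalues; I would mention this one-line variant as a remark.
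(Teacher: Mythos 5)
Your proof is correct and complete; the paper itself offers no proof of this lemma, merely stating in the appendix that these are well-known results found in standard linear algebra textbooks. The SVD argument you give (together with the one-line variant via the coincidence of the nonzero spectra of $Z Z^T$ and $Z^T Z$) is exactly the standard textbook justification being alluded to, so there is nothing to reconcile.
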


\section{{\sc Lagrangian Form of the Constrained Optimization Problem}} \label{app:LagForm}
This appendix can be skipped by readers who are familiar with Slater conditions and its connection to existence of Lagrange multipliers in a convex optimization problem.
\begin{prop} \label{prop:LagForm}
Let us consider the following optimization problems (with $ \vecX \in \mathbb{R}^d$):
\begin{itemize}
\item Constrained optimization problem:
\bea
\mathop{\min}_{\vecX } \Psi(\vecX), & \\
\nonumber
{\text{\normalfont subject to,}}\qquad \qquad & \\
\label{eq:ConstSetLag}
\vartheta(\vecX) \le \vartheta_0 
\eea
\item Lagrangian form (with $\lambda \ge 0$):
\be
\mathop{\min}_{\vecX} \[\Psi(\vecX) + \lambda \vartheta(\vecX) \]\qquad
\label{eq:LagFormApp}
\ee
\end{itemize}
where $\Psi$ and $\vartheta$ are convex. The above two formulations are equivalent if $\vartheta_0 = 0$ is the {\it only} value for which the constrain set (\ref{eq:ConstSetLag}) is feasible but not strictly feasible.
\end{prop}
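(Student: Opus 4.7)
The plan is to leverage standard convex duality. Since $\Psi$ and $\vartheta$ are convex, the problem (\ref{eq:ConstSetLag}) is a convex program, and the hypothesis in the proposition is a restatement of Slater's constraint qualification: for every $\vartheta_0 > 0$ the feasible set has non-empty interior, i.e., there exists $\tilde{\vecX}$ with $\vartheta(\tilde{\vecX}) < \vartheta_0$. Under this qualification, strong duality will convert the constrained form to the Lagrangian form, and a short elementary argument recovers the reverse implication.

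The key steps, in order, would be: (i) rephrase the hypothesis as Slater's condition for all relevant $\vartheta_0 > 0$; (ii) apply the strong duality theorem for convex programs (for instance, Boyd--Vandenberghe \S5.2.3 or Rockafellar's \emph{Convex Analysis}, \S28) to produce, for any minimizer $\vecX^*$ of the constrained problem, a multiplier $\lambda^* \ge 0$ such that $\vecX^*$ also minimizes $\Psi(\vecX) + \lambda^*(\vartheta(\vecX) - \vartheta_0)$ over $\vecX$; (iii) observe that the constant $-\lambda^*\vartheta_0$ does not affect the minimizer, yielding the Lagrangian form (\ref{eq:LagFormApp}); (iv) for the converse, take a minimizer $\vecX^*$ of the Lagrangian form, set $\vartheta_0 := \vartheta(\vecX^*)$, and verify directly that for any $\vecX'$ with $\vartheta(\vecX') \le \vartheta_0$, the optimality inequality $\Psi(\vecX') + \lambda\,\vartheta(\vecX') \ge \Psi(\vecX^*) + \lambda\,\vartheta(\vecX^*)$ combined with $\lambda \ge 0$ gives $\Psi(\vecX') \ge \Psi(\vecX^*)$, so $\vecX^*$ solves the constrained problem at level $\vartheta_0$.

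The main obstacle is step (i)--(ii): unpacking the somewhat cryptic hypothesis that ``$\vartheta_0 = 0$ is the unique value at which feasibility fails to be strict'' into the usable statement that $\inf_{\vecX} \vartheta(\vecX) = 0$ and that every $\vartheta_0 > 0$ is strictly feasible. Once this translation is in place, the invocation of strong duality is standard and produces the saddle-point/Lagrange multiplier from non-emptiness of the subdifferential of the value function $p^\star(\vartheta_0) := \inf\{\Psi(\vecX) : \vartheta(\vecX) \le \vartheta_0\}$ at interior points of its domain. A secondary, routine point is existence of minimizers on both sides; in the Wasserstein setting of (\ref{eq:WDDist}) this follows from lower semicontinuity and coercivity of the objective and can be handled without difficulty.
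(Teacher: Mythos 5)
Your proposal follows essentially the same route as the paper's own proof: the forward direction via Slater's condition and strong duality (dropping the constant $-\lambda^{\star}\vartheta_0$), and the converse by setting $\vartheta_0 = \vartheta(\vecX^{\star})$ for a minimizer of the Lagrangian form. Your step (iv) actually spells out the elementary verification of the converse more explicitly than the paper does, but the argument is the same.
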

\begin{proof}
The statement and a sketch of the proof can be found in \cite{Tibshirani}.  An equivalent form of the proposition can also be found in \cite{Kloft} (in particular, Proposition 12 of \cite{Kloft}). The following proof is a very minor modification of the proof in \cite{Kloft} and this proposition itself is reasonably well known in the the literature on convex optimization. We present the proof here for the convenience of the readers not familiar with this topic.

\noindent
(i) Let $\vecX^{\star}$ be the optimal solution of the constrained optimization problem with $\vartheta_0 \neq 0$. Since $\vartheta_0 = 0$ is the only value for 
which the constrain set (\ref{eq:ConstSetLag}) is feasible but not strictly feasible, the constraint set (\ref{eq:ConstSetLag}) is strictly feasible for $\vartheta_0 \neq 0$. This implies that the
 {\it Slater conditions} are satisfied and strong-duality holds good \cite{Bertsekas, BoydII} ($\Psi$ and $\vartheta$ are convex functions). In this case, we have
 \be
 \vecX^{\star} = \mathop{\text{argmin}}_{\vecX} \(  \Psi(\vecX) + \gamma^{\star} (\vartheta(\vecX) - \vartheta_0) \)
 \label{eq:StrongMaxMin}
 \ee
 where $\gamma^{\star}$ is obtained as follows:
  \be
 \gamma^{\star} = \mathop{\text{argmax}}_{\gamma} \[\mathop{\min}_{\vecX} \(  \Psi(\vecX) + \gamma (\vartheta(\vecX) - \vartheta_0) \)\]
 \label{eq:optGam}
 \ee
Recall that the duality gap is zero when strong-duality holds good and hence solving (\ref{eq:StrongMaxMin}-\ref{eq:optGam}) is equivalent to solving the original constrained optimization problem. Further, 
 $\gamma^{\star} \vartheta_0$ is just a constant term while solving for $\vecX^{\star}$ in (\ref{eq:StrongMaxMin}) and can be dropped. The problem in (\ref{eq:StrongMaxMin}) is equivalent to solving (\ref{eq:LagFormApp})
 with $\lambda = \gamma^{\star}$.  When $\vartheta_0 =0$, the primal problem by itself is equivalent to (\ref{eq:LagFormApp}).  Hence, we have shown that if  $\vecX^{\star}$ is an optimal solution of (\ref{eq:ConstSetLag})
 for some $\vartheta_0 $, there exists a $\lambda \ge 0$ for which it is optimal in (\ref{eq:LagFormApp}). \\
 \noindent
 (ii) We will now show that the reverse statement is also true {\it i.e.},  if  $\vecX^{\star}$ is an optimal solution of (\ref{eq:LagFormApp})
 for some $\lambda \ge 0$, there exists a $\vartheta_0  $ for which it is optimal in  (\ref{eq:ConstSetLag}). By choosing $\vartheta_0 = \vartheta(\vecX^{\star})$ we will have $\vecX^{\star}$ to be optimal in (\ref{eq:ConstSetLag}) as well.  
 This completes the proof of this proposition.
\end{proof}

\section{{\sc Push-forward of a measure}} \label{app:push}

Readers familiar with the notion of {\it push-forward} of a measure can skip this section of the paper.

This appendix provides a formal definition for the push-forward of a measure along a measurable map and a proposition that provides a method for computing the push-forward of a measure. 
We compute the distribution associated with the push-forward of a Gaussian measure along a linear map as an example application of the proposition. There are much simpler techniques to compute this push-forward (as discussed in the main text), but the method described below can be generalized to arbitrary maps and distributions and hence presented here. The advertised definition and proposition are presented below.
\begin{definition}
\label{def:Push}
Given a measure space $(X_1, \sigmaF_1, \meas_1)$, a measurable space $(X_2, \sigmaF_2)$ and a measurable map $\measF: X_1 \mapsto X_2$, the push-forward of $\meas_1$, $\pushF \meas_1$ is defined to be a measure on $\sigmaF_2$.
\end{definition}
\vspace{-.35cm}
\noindent
The following proposition provides a practical definition of push-forward which is useful for computations:
\begin{prop}\label{prop:PushF}
Let $(X_1, \sigmaF_1, \meas_1)$ be a measure space, $(X_2, \sigmaF_2)$ a measurable space, $\measF: X_1 \mapsto X_2$ a measurable map and a $\sigmaF_2$-measurable and integrable function on $X_2$, then  $\pushF \meas_1$ satisfies the following:
\be
\int_{X_2} g(x_2) d(\pushF \meas_1) = \int_{X_1} g(\measF(x_1)) d \meas_1
\label{eq:PropBog}
\ee
\end{prop}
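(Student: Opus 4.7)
The plan is to prove the identity by the standard ``measure theoretic ladder'' argument, starting from the very definition of the pushforward measure and climbing from indicators, to simple functions, to nonnegative measurable functions, to integrable functions. The key point is that Definition \ref{def:Push} stipulates that $\pushF \meas_1$ is the measure on $\sigmaF_2$ given by $(\pushF \meas_1)(B) = \meas_1(\measF^{-1}(B))$ for every $B \in \sigmaF_2$; this well-defines a measure because $\measF$ is measurable, so $\measF^{-1}(B) \in \sigmaF_1$, and the countable additivity of $\meas_1$ transfers directly. Everything in the proposition then reduces to unwinding this definition through the usual approximation machinery.

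First I would verify the identity on indicator functions. For $B \in \sigmaF_2$ and $g = \mathbf{1}_B$, the left-hand side evaluates to $(\pushF \meas_1)(B)$, while the right-hand side is $\int_{X_1} \mathbf{1}_B(\measF(x_1)) \, d\meas_1 = \int_{X_1} \mathbf{1}_{\measF^{-1}(B)}(x_1)\, d\meas_1 = \meas_1(\measF^{-1}(B))$, and these coincide by the very definition of $\pushF\meas_1$. Next, I would extend to a nonnegative simple function $g = \sum_{i=1}^n c_i \mathbf{1}_{B_i}$ by linearity of the integral on both sides, noting that $g \circ \measF = \sum_i c_i \mathbf{1}_{\measF^{-1}(B_i)}$ is again a simple function on $(X_1, \sigmaF_1)$ because $\measF$ is measurable.

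Then I would handle an arbitrary nonnegative $\sigmaF_2$-measurable function $g$ by choosing an increasing sequence of nonnegative simple functions $g_n \uparrow g$ pointwise on $X_2$. Composing with $\measF$ gives $g_n \circ \measF \uparrow g \circ \measF$ pointwise on $X_1$, with each $g_n \circ \measF$ a nonnegative simple function (and $g \circ \measF$ measurable, as the composition of measurable maps). Applying the monotone convergence theorem on the left against $\pushF\meas_1$ and on the right against $\meas_1$, together with the already-established identity for each $g_n$, yields \eqref{eq:PropBog} for nonnegative $g$. Finally, for a general $\sigmaF_2$-measurable integrable $g$, I would split $g = g^+ - g^-$, apply the nonnegative case to each piece (the integrability hypothesis guarantees both integrals are finite, so no $\infty - \infty$ issue arises), and conclude by linearity.

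There is no real obstacle here; the only subtlety worth flagging is verifying that the pushforward map in Definition \ref{def:Push} is actually a well-defined measure on $\sigmaF_2$ (measurability of $\measF$ makes $\measF^{-1}$ preserve countable unions and disjointness, hence $\sigma$-additivity), and that $g \circ \measF$ is $\sigmaF_1$-measurable so that the right-hand integral makes sense at each stage of the ladder. Both of these are immediate from the measurability assumptions in the hypothesis.
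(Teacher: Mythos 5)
Your proposal is correct and complete. The paper itself does not actually supply an argument here: its ``proof'' consists of a citation to Bogachev's \emph{Measure Theory}, and your ladder argument (indicators $\to$ simple functions $\to$ nonnegative measurable functions via monotone convergence $\to$ integrable functions via $g = g^+ - g^-$) is precisely the standard proof one finds in that reference, so there is nothing to object to on substance. One point in your favor worth noting: the paper's Definition of the push-forward merely declares $\pushF \meas_1$ ``to be a measure on $\sigmaF_2$'' without stating which one, and you correctly supply the missing content, namely $(\pushF \meas_1)(B) = \meas_1(\measF^{-1}(B))$ for $B \in \sigmaF_2$, together with the observation that measurability of $\measF$ makes this a genuine ($\sigma$-additive) measure. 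Without that explicit formula the base case of the induction on indicators would have nothing to anchor to, so flagging and verifying it is exactly the right instinct.
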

\begin{proof}
The proof of this proposition can be found in \cite{Bogachev} and the proposition is sometimes known as the change-of-variables theorem \cite{McCann}.
\end{proof}
\vspace{-.35cm}
As an example application, we will compute the distribution associated with the push-forward of a Gaussian measure along a linear map $\viewP$, using Proposition (\ref{prop:PushF}). 
The precise calculations are described below:

Let $\vecX \sim \CCN\(\vmu, \cov \)$, $f_\CiX$ be the multi-variate normal distribution associated with $\vecX$ and $\viewP$ be the linear map $\viewP: \vecX \mapsto \vecY$ such that $\vecY = \viewP \vecX$.{\footnote{In equation (\ref{eq:PropBog}), we set $x_1$ to $\vecX$, $x_2$ to $\vecY$, $\measF$ to $\viewP$. The probability density associated with the measure $\meas_1$ is $f_\CiX$.}} The push-forward  $\pushP[f_{\CiX}](\vecY) $ is computed by choosing $g(\vecY')$ as the indicator function $\mathbf{1}_{\vecY' < \vecY}$ and differentiating both sides of equation (\ref{eq:PropBog}) with respect to $\vecY$ as shown below:
\be
\pushP[f_{\CiX}](\vecY) = \int {d^N \vecX \over \sqrt{(2\pi)^N \det \cov} }\,  \(e^{-{\half} (\vecX -\vmu )^T \cov^{-1} (\vecX -\vmu ) }\, \delta\(  \viewP \vecX-\vecY\, \)\) 
\label{eq:PushDelta}
\ee
If $\viewP$ is a invertible matrix (hence a square matrix), the above integrand is straightforward and evaluates to the Gaussian distribution associated with $\CCN\(\viewP \vmu,\,  \viewP \cov \viewP^T \)$.  We will now show that the distribution associated with the push-forward measure has the same form even when $\viewP$ is not a square matrix. This can be done by introducing the Fourier representation of the Dirac delta in equation (\ref{eq:PushDelta}) as shown below:
\be
\pushP[f_{\CiX}](\vecY) = \int {d^N\vec{\mcv}\over (2 \pi)^N} \,  e^{-\iota \vec{\mcv}^T (\vecY - \viewP \vecX) } \int d^N \vecX {\(e^{-\half (\vecX -\vmu )^T \cov^{-1} (\vecX -\vmu ) }\)\over \sqrt{(2\pi)^N \det \cov} } 
\ee
Note that the integrand is still quadratic in $\vecX$. Hence the integral over $\vecX$ is a simple Gaussian integral and can be evaluated by reorganizing the integrand as shown below:
\be
\pushP[f_{\CiX}](\vecY) = \int {d^N\vec{\mcv}\over (2 \pi)^N} \,  e^{-\iota \vec{\mcv}^T (\vecY - \viewP \vmu) -\half \vec{\mcv}^T \viewP \cov \viewP^T \vec{\mcv}  } \int d^N \vecX {\(e^{-\half \(\vecX-(\vmu + \iota \cov \viewP^T 
\vec{\mcv})\)^T \cov^{-1} \(\vecX -(\vmu + \iota \cov \viewP^T \vec{\mcv})\)  }\)\over \sqrt{(2\pi)^N \det \cov} }
\ee
The integral over $\vecX$ is a straightforward Gaussian integral and we obtain the following simplified expression for the push-forward distribution. 
\be
\pushP[f_{\CiX}](\vecY) = \int {d^N\vec{\mcv}\over (2 \pi)^N} \,  e^{-\iota \vec{\mcv}^T (\vecY - \viewP \vmu) -\half \vec{\mcv}^T \viewP \cov \viewP^T \vec{\mcv}  }
\label{eq:pushChar}
\ee
The expression in equation (\ref{eq:pushChar}) is the inverse Fourier transform of the characteristic function of $\CCN\(\viewP \vmu, \right.$ $\left. \viewP \cov \viewP^T \)$. Note that the above derivation is applicable even when $ \viewP \cov \viewP^T$ is singular or when $\viewP$ is degenerate. 

\section{{\sc Wasserstein Distance Between Two Gaussian Measures}} \label{app:WassDist}

The following details appear in Lemma 2 of \cite{Givens} and we present it here again for the sake of clarity and also to emphasize that the Wasserstein distance is well defined even if the views matrix $\viewP$ is degenerate. 

The $L_2$ Wasserstein Distance $\scW_{2}$ between two distributions $g_1$ and $g_2$, is defined as follows:
\be
\scW_2^2(g_1,g_2) =  \mathop{\min}_{\gamma\in\mathscr{G}(g_1,g_2)} \mathbb{E}_{\vecX,\vecY\sim \gamma}\left[\| \vecX - \vecY\|^2\right]
\label{eq:WDDef}
\ee
where $\mathscr{G}(g_1,g_2)$ denotes the set of all joint probability distributions whose marginals are $g_1$ and $g_2$.
In the following, we will assume that $g_1$ and $g_2$ are Gaussian distributions unless otherwise specified. We will also assume that $g_2$ is  non-degenerate while $g_1$ is allowed to be degenerate.{\footnote{ The proof can be modified to allow both $g_1$ and $g_2$ to be degenerate.}} 

 Now, let us rewrite equation (\ref{eq:WDDef}) as follows:
\bea
\nonumber
\mathbb{E}_{\gamma}\left[\| \vecX - \vecY\|^2\right] = \|\vmu_1 - \vmu_2\|^2_2 & +&  \mathbb{E}_\gamma\left[ (\vecX - \vmu_1)^T (\vecX - \vmu_1)  +  (\vecY - \vmu_2)^T (\vecY - \vmu_2) \right. \\
& &  \left.   - 2 (\vecY -\vmu_2)^T(\vecX - \vmu_1)\right]
\eea
where $\vec{\mu}_i$ denotes the mean of the Gaussian distribution $g_i$. Further,  
$$\mathbb{E}_{\gamma}\left[\| \vecX - \vecY\|^2\right] = \|\vmu_1 - \vmu_2\|^2_2 + \tr(\cov_1 + \cov_2 - 2K)$$
where $\cov_i$  is the covariance matrix associated with the Gaussian $g_i$ and $K = \mathbb{E}_{\gamma}[ (\vecY -\vmu_2)^T(\vecX - \vmu_1)] $.  From the assumption that $g_2$ is non-degenerate it follows that $\cov_2 \in \text{Sym}^{++}_{N}(\mathbb{R})$ and is invertible. The covariance matrix can be obtained by evaluating the Hessian of the characteristic function which well defined even when the views matrix is degenerate. 

Let us introduce the matrix $\mathfrak{C}$ which is defined as follows:
\be
\mathfrak{C} = \[ \begin{matrix}  \mathbb{E}_{\gamma}[ (\vecX -\vmu_1)(\vecX - \vmu_1)^T]  &  \mathbb{E}_{\gamma}[ (\vecX -\vmu_1)(\vecY - \vmu_2)^T] \\
 \mathbb{E}_{\gamma}[ (\vecY -\vmu_2)^T(\vecX - \vmu_1)] &  \mathbb{E}_{\gamma}[ (\vecY -\vmu_2)(\vecY - \vmu_2)^T]\\
\end{matrix} 
\] =  \[ \begin{matrix}  \cov_1 &K^T \\
K&  \cov_2\\
\end{matrix} 
\]
\ee
This matrix $\mathfrak{C}$ is clearly positive definite and hence the Schur complement $\mathfrak{C}/ \cov_2$ is positive semi-definite. That is,
\be
\cov_1 - K^T\cov_2^{-1}K \succeq 0
\ee
Note that the matrix $\cov_2$ needs to be invertible so that $\mathfrak{C}/ \cov_2$ is well-defined. This follows from our assumption that $g_2$ is non-degenerate. Let us denote $\cov_1 - K^T\cov_2^{-1}K $ by $\CS$. Then we have 
\be 
K^T\cov_2^{-1}K = \cov_1 - \CS
\label{eq:KRicatti}
\ee
 Let us denote the diagonalization of  $ \cov_1 - \CS$ as follows:
\be
[ \cov_1 - \CS]_{ij} = \sum_{ik}\sum_{kj} U_{ik} \Lambda^2_{kk} U_{jk}~ \text{that is, } \cov_1 - \CS = U \Lambda^2 U^T
\label{eq:DiagSchur}
\ee
where $\Lambda^2$ denotes the diagonal matrix of eigenvalues and $U$ denotes the matrix of the corresponding eigenvectors. If $\text{rank}( \cov_1 - \CS ) = r < N$ , then $\Lambda^2=\text{diag}(\lambda_1^2, \lambda_2^2, \dots \lambda_r^2)\oplus 0_{N - r}$ and $U = [U_r, U_{N-r}]$. Equation (\ref{eq:DiagSchur}) can now be written as follows:
\be
 \cov_1 - \CS = U_r \Lambda_r^2 U_r^T
\label{eq:decSchur}
\ee
where $\Lambda_r^2 = \text{diag}(\lambda_1^2, \lambda_2^2, \dots \lambda_r^2)$. 
Using equation (\ref{eq:KRicatti}) and (\ref{eq:decSchur}), we get
\be
K^T \cov_2^{-1} K = U_r \Lambda_r^2 U_r^T \implies \( \cov_2^{-\half} K \Lambda_r^{-1} U_r  \)^T  \( \cov_2^{-\half} K \Lambda_r^{-1} U_r  \) = \mathbb{I}_{r}
 \implies K = \cov_2^{\half} \CO_r \Lambda_r U_r^T
\ee
for some $\CO_r$ is an $N \times r$ matrix such that $\CO^T_r \CO_r = \mathbb{I}_r$. Note that this is an orthogonality condition on $\CO_r$ in $N$ dimensions. We can lift $\CO_r$ to an $N-$dimensional orthogonal matrix $\CO$ and obtain the following condition:
\be
K = \cov_2^{\half} \CO \Lambda U^T
\ee
We have used the fact that $\Lambda = \Lambda_r \oplus 0_{N-r}$ to obtain the above equation. Now we can work with $\CO$ which is an $N \times N$ matrix such that $\CO^T \CO = \mathbb{I}_N$. 

To find the minimum value of the objective defined in equation (\ref{eq:WDDef}), we need to minimize $-2\Tr(K)$ subject to the condition $\CO \CO^T = \mathbb{I}$.  We will introduce a matrix Lagrange multiplier ${\mathcal H}$ to enforce the orthogonality condition on the matrix $\CO$. The modified objective function with the Lagrange multiplier is given by
\be
{\cal L} = - 2 \Tr[  \CO^T  \cov_2^{\half}  U \Lambda ] + \Tr[{\mathcal H} .(\CO^T \CO - \mathbb{I})]
\ee
After solving for $\CO$ and the Lagrange multiplier $\mathcal{H}$ we get,
\be
\CO = {\mathcal{H}}^{-1}  \cov_2^{\half}  U \Lambda,  \quad {\mathcal H} = \(\( \cov_2^{\half}  U \Lambda\) \(\cov_2^{\half}  U \Lambda\)^T\)^{\half}
\ee
Substituting for $K$ in the definition of Wasserstein distance we get,
\be
\scW_2^2(g_1,g_2) =  \mathop{\min}_{\CS} \left[ \|\vmu_1 - \vmu_2\|^2_2 + \tr\(\cov_1 + \cov_2 - 2\(\cov_2^{\half} \( \cov_1 - \CS\) \cov_2^{\half}\)^{\half}\)\right]
\ee
The minimum value is achieved when $\CS=0$ since $\CS$ is a positive definite matrix. Therefore,
\be
\scW_2^2(g_1,g_2) =  \|\vmu_1 - \vmu_2\|^2_2 + \tr\(\cov_1 + \cov_2 - 2\(\cov_1^{\half} \cov_2 \cov_1^{\half}\)^{\half}\)
\label{eq:appWassDist}
\ee
We would like to emphasize that the above derivation is valid even when the distribution $g_1$ is degenerate. Hence, the above derivation is applicable even when the push-forward of the prior distribution is degenerate. It is possible to modify the above derivation to show that $\scW_2^2(g_1,g_2)$ when both $g_1$ and $g_2$ are degenerate, by changing $\cov_2 \rightarrow \cov_2 + \delta^2 \mathbb{I}$ and finally take the limit as $\delta \rightarrow 0$. 
In the extreme case when $\cov_1 = \cov_2 = 0$, $\scW_2^2(g_1,g_2)$ is equivalent to the distance between two point masses. However, for the analysis in the rest of the paper we will assume any form of degeneracy arises only from the degeneracy of covariance matrices of the form $\viewP \cov \viewP^T$ for some $\cov \in \text{Sym}^{++}_\Nprior(\mathbb{R})$. This could arise from $\viewP$ having identical rows for example.

\section{{\sc Proof of Theorem \ref{thm:Main}: Main Result}}\label{app:OptUpdate}
In this section, we present the proof of theorem \ref{thm:Main} which is the main result of this paper.
\subsection{Details of Computing Optimal Updates}
\begin{proof}
The Wasserstein distance between Gaussian distributions can be written as sum of Euclidean distance between the drifts and the Bures distance between covariance matrices. Hence the cost function in equation (\ref{eq:WDDist}) can be written as:
\be
{\CL}_{GWB} = {\CL}_{\dr} \left[  \vmm_\update; \vmu_\prior, \vnu_{\view}, \viewP \right] + {\CL}_{\covar}\[ \cov_\update; \cov_\prior,  \cov_\view, \viewP  \]
\ee
where
\be
 {\CL}_{\dr} \left[  \vmm_\update; \vmu_\prior, \vnu_{\view}, \viewP \right] = \| \vmm_\update - \vmu_\prior \|^2 + \lambda \| \viewP \vmm_\update - \vnu_{\view}\|^2 
\ee
and
\begin{multline}
\label{eq:LCovar}
\qquad \qquad{\CL}_{\covar}\[ \cov_\update; \cov_\prior,  \cov_\view, \viewP  \]= \tr\( \cov_\prior + \cov_\update - 2 \(\cov^\half_\prior \cov_\update \cov^\half_\prior \)^\half\) +\\
 \lambda \tr\( \cov_\view +\viewP \cov_\update\viewP^T - 2 \(\cov^\half_\view \viewP \cov_\update \viewP^T \cov^\half_\view \)^\half\) \qquad \qquad
\end{multline}
Minimizing ${\CL}_{GWB} $ with respect to $\vmm_\update$ and $\cov_\update$ boils down to minimizing ${\CL}_\dr$ with respect to $\vmm_\update$ and $\CL_\covar$ with respect to $\cov_\update$.  Minimizing ${\CL}_\dr$  with respect to $\vmm_\update$ is rather straightforward and yields the following equation:
\be
(\vmm_\update - \vmu_\prior) + \lambda \viewP^T(\viewP \vmm_\update - \vnu_{\view}) = 0
\ee
Simple algebraic manipulation of the above equation yields the expression in equation (\ref{eq:optDrift}). Minimization of ${\CL}_{\covar}$ is slightly more involved and rest of this appendix is dedicated to finding the optimal $\cov_\update$. 
To minimize ${\CL}_{\covar}$ , it seems convenient to employ the following change of variables:
\bea
\label{eq:covToX}
X &=& \(\cov^{\half}_\prior \cov_\update \cov^{\half}_\prior\)^\half \implies \cov_\update = \cov^{-\half}_{\prior} X^2\, \cov^{-\half}_{\prior} \\
Y &=& \(\cov^{\half}_\view \viewP \cov_\update \viewP^T \cov^{\half}_\view\)^\half \implies \viewP \cov_\update \viewP^T = \cov^{-\half}_{\view} Y^2\, \cov^{-\half}_{\view} 
\eea
where $X,~Y \in \text{Sym}(\mathbb{R})$. After the change of variables, the minimization of $\CL_\covar$ can then be recast into the constrained optimization problem:
\bea
\nonumber
\qquad (X_\optU,Y_\optU)&=&\mathop{\text{argmin}}_{X,Y \in {\text{Sym}(\mathbb{R)}}} \Bigg[\tr\( \cov_\prior + \cov^{-\half}_\prior X^2 \, \cov^{-\half}_\prior - 2X\)
+\\
& &\qquad \qquad \lambda \tr\( \cov_\view + \cov^{-\half}_\view Y^2 \, \cov^{-\half}_\view - 2Y\) \Bigg] \\
\nonumber
{\text{subject to}}  \qquad \qquad & & \\
\label{const:XY}
\qquad \cov^{-\half}_\view Y^2 \cov^{-\half}_\view &=& \viewP \cov^{-\half}_\prior X^2 \cov^{-\half}_\prior \viewP^T
\eea
To solve the constrained minimization problem, we introduce a matrix Lagrange multiplier $\CM$ and minimize the following modified cost function:
\bea 
\nonumber
{{\CL}}\[X,Y, \CM\]  & = & \Bigg[\tr\( \cov_\prior + \cov^{-\half}_\prior X^2 \, \cov^{-\half}_\prior - 2X\)  +
  \lambda \tr\( \cov_\view + \cov^{-\half}_\view Y^2 \, \cov^{-\half}_\view - 2Y\) \Bigg] \\
  \label{eq:XYLagrangian}
& + & \tr\[ \CM.\( \cov^{-\half}_\view Y^2 \cov^{-\half}_\view - \viewP \cov^{-\half}_\prior X^2 \cov^{-\half}_\prior \viewP^T\) \]
\eea
The Lagrange multiplier matrix $\CM$ is symmetric since $\cov^{-\half}_\view Y^2 \cov^{-\half}_\view - \viewP \cov^{-\half}_\prior X^2 \cov^{-\half}_\prior \viewP^T$ is symmetric.
 The  modified cost function in (\ref{eq:XYLagrangian}) is minimized by setting the gradients with respect to $X$ and $Y$ to zero and the Lagrange multiplier $\CM$ is obtained by enforcing the constraint in (\ref{const:XY}).
The gradient of ${{\CL}}\[X,Y, \CM\] $ with respect to $X$ and $Y$ is computed using the identity in equation (\ref{eq:matrixDer}). Setting these gradients to zero we get,
\bea
\label{eq:XcovP}
X \(\cov^{-1}_\prior - \cov^{-\half}_\prior \viewP^T \CM \viewP \cov^{-\half}_\prior\) +  \(\cov^{-1}_\prior - \cov^{-\half}_\prior \viewP^T \CM \viewP \cov^{-\half}_\prior\) X &=& 2 \mathbb{I}_{\Nprior} \\
\label{eq:YcovV}
Y \(\lambda \cov^{-1}_\view + \cov^{-\half}_\view  \CM \cov^{-\half}_\view\) + \(\lambda \cov^{-1}_\view + \cov^{-\half}_\view  \CM \cov^{-\half}_\view\) Y \quad ~ &=& 2 \lambda \mathbb{I}_{\Nview}
\eea
The above equations are special cases of Lyapunov equation. If $(\eye_\Nprior - \viewP^T \CM \viewP)$ is invertible then the solution of (\ref{eq:XcovP}) can be written as shown below:
\be
\label{eq:Xsol}
X =  \(\cov^{-1}_\prior - \cov^{-\half}_\prior \viewP^T \CM \viewP \cov^{-\half}_\prior\)^{-1} = \cov^{\half}_\prior(\eye_\Nprior - \viewP^T \CM \viewP)^{-1} \cov^{\half}_\prior
\ee
We have used equation (\ref{eq:Lyapunov}) or Lemma \ref{lem:Lyapunov}(a) to obtain the above solution. 
It must be clear from the definition of $X$ that $(\eye_\Nprior - \viewP^T \CM \viewP)$ is invertible iff $\cov_\prior$ and $\cov_\update$ are invertible. Hence, the validity of this assumption can be verified only after solving for $\cov_\update$. We will show at the end of this section that $\cov_\update$ is indeed invertible and hence invertibility of $(\eye_\Nprior - \viewP^T \CM \viewP)$ is justified. If $(\lambda \eye_\Nview  + \CM )$ is invertible, then the Lyapunov equation (\ref{eq:YcovV}) yields $Y = \lambda\cov^{\half}_\view(\lambda \eye_\Nview  + \CM )^{-1} \cov^{\half}_\view$ as the unique solution, however invertibility of $(\lambda \eye_\Nview  + \CM )$ is not justified if the views matrix $\viewP$ is degenerate. Fortunately, we can derive the optimal update for covariance $\cov_\update$ without inverting $(\lambda \eye_\Nview  + \CM )$. Using Lemma \ref{lem:Lyapunov}(b) we get,
\be
\label{eq:Ysol}
\cov^{-\half}_\view \(\lambda \eye_\Nview +   \CM \) \cov^{-\half}_\view Y =  \lambda \eye_\Nview = Y \cov^{-\half}_\view \(\lambda \eye_\Nview +   \CM \) \cov^{-\half}_\view 
\ee
Now, to determine the optimal values of $X$ and $Y$ we need to determine $\CM$ in equation (\ref{eq:Xsol}) and determine $Y$ using the constraint in equation (\ref{const:XY}). In fact, $\CM$, or rather $\viewP^T \CM \viewP$ is determined by enforcing the constraint in equation (\ref{const:XY}). Using equation (\ref{eq:Ysol}) we get,
\be
\label{eq:PrePTMP}
(\lambda \eye_\Nview  + \CM ) \cov^{-\half}_\view Y^2  \cov^{-\half}_\view (\lambda \eye_\Nview  + \CM ) = \lambda^2  \cov_\view 
\ee
By making use of the constraint in (\ref{const:XY}), we first eliminate $\cov^{-\half}_\view Y^2  \cov^{-\half}_\view$ and we then make use of equation (\ref{eq:Xsol}) in the resulting expression to get,
\be
\label{eq:XYConstPTMP}
(\lambda \eye_\Nview  + \CM ) \viewP (\eye_\Nprior - \viewP^T \CM \viewP)^{-1} \cov_\prior (\eye_\Nprior - \viewP^T \CM \viewP)^{-1} \viewP^T  (\lambda \eye_\Nview  + \CM ) = \lambda^2  \cov_\view 
\ee
\noindent
It is convenient to introduce a matrix ${\cal U}$ such that,
\be
(\lambda \eye_\Nview  + \CM ) \viewP (\eye_\Nprior - \viewP^T \CM \viewP)^{-1}  = \lambda \cov^{\half}_{\view} {\cal U} \cov^{-\half}_\prior
\label{eq:Udef}
\ee
The precise properties of the matrix ${\cal U}$ is not very important here as this will be eliminated in the following steps.
Multiplying both sides of equation (\ref{eq:Udef}) by $\viewP^T$, we get the following result:
\be
\label{eq:GPrel}
\(\mathbb{I}_\Nprior + G \) \viewP^T {\cal M} \viewP = \(G - \lambda \viewP^T \viewP\)  \quad \implies \viewP^T {\cal M} \viewP = \(\eye_\Nprior + G\)^{-1} \(G - \lambda \viewP^T \viewP \), 
\ee
where, $G$ is defined as follows
\be
\label{eq:Gdef}
 G =  \lambda \viewP^T \cov^{\half}_{\view} {\cal U} \cov^{-\half}_\prior
 \ee
Equation (\ref{eq:GPrel}) provides an expression for $\viewP^T {\cal M} \viewP$ in terms of the matrix $G$. The matrix $G$ contains an unknown unitary matrix and we will now find an alternate expression for $G$ by using equation (\ref{eq:GPrel}) and the constraint ${\cal M} = {\cal M}^T$. Using the fact that $\viewP^T \CM \viewP = (\viewP^T \CM \viewP)^T $ whenever $\CM = \CM^T$ in equation (\ref{eq:GPrel}) we get,
\be
\(\mathbb{I}_{\Nprior} + G \) \( G^T - \lambda \viewP^T \viewP\) = \(G - \lambda \viewP^T \viewP\) \( \eye_\Nprior + G^T \)
\ee
From the above condition we can conclude that $G$ can be written as
\be
\label{eq:defS}
G = S. W, \qquad {\text {where}}~ S = S^T,\quad W =  \(\WW\)^{-1}
\ee
We can compute $X$ from (\ref{eq:Xsol}) if $(\eye_\Nprior - \viewP^T \CM \viewP)$ is known. Using equations (\ref{eq:GPrel}) and (\ref{eq:defS}) we get,
\be
\label{eq:defSII}
\eye_\Nprior - \viewP^T \CM \viewP = (\eye_\Nprior + G)^{-1} \(\eye_\Nprior + \lambda \viewP^T \viewP\) = \( \eye + S.W\)^{-1} W^{-1}
\ee
We will now describe the procedure to determine $S$ in the above equation. By using the definition of $G$ in equation (\ref{eq:Gdef}), equations (\ref{eq:Udef}) and (\ref{eq:XYConstPTMP}) we get,
\be
\label{eq:Gconst}
G \cov_\prior G^T = \lambda^2 \viewP^T \cov_\view \viewP
\ee
After defining $A = W \cov_\prior W$, equation (\ref{eq:Gconst}) can now be written as follows:
\be
S.A.S = \lambda^2 \viewP^T \cov_\view \viewP  \implies (A^{\half} S A^{\half} )(A^{\half} S A^{\half}) =  \lambda^2 A^{\half} \viewP^T \cov_\view \viewP A^{\half}
\ee
Hence $S$ is given by,
\be
\label{eq:Ssol}
S = \mathcal{s} \lambda A^{-\half} \(A^{\half} \viewP^T \cov_\view \viewP A^{\half}\)^{\half} A^{-\half}
\ee
where $\mathcal{s} = \pm 1$. 
Using equations (\ref{eq:Ssol}), (\ref{eq:defSII}), (\ref{eq:Xsol}) and (\ref{eq:covToX}) we get,
\be
\label{eq:covS}
\cov({\cal s}) =  \(\IPPT+ \BS\)\cov_\prior\(\IPPT+ \BS\)
\ee
where, $\BS= \BS^T $ and it is given by
\be
\label{eq:BisWSW}
\BS = \mathcal{s} \lambda \IPPT A^{-\half} \(A^{\half} \viewP^T \cov_\view \viewP A^{\half}\)^{\half} A^{-\half}  W = W. S. W
\ee
In Appendix \ref{app:PropMin}, we show that ${\cal s}=1$ for ${\cal L}_{\covar}$ to be a minimum at $\cov_\update = \cov({\mathcal{s}}) $. 
Hence, 
\be
\cov_{\optU} =  \(\IPPT+ \EuScript{B}\)\cov_\prior\(\IPPT+ \EuScript{B}\)
\ee
where, $\EuScript{B} = \EuScript{B}^T $ and it is given by
\be
\EuScript{B} = \lambda \IPPT A^{-\half} \(A^{\half} \viewP^T \cov_\view \viewP A^{\half}\)^{\half} A^{-\half}  W , \qquad
A = W \cov_\prior W
\ee
This completes the proof the Theorem (\ref{thm:Main}).
\end{proof}
\subsection{{\sc Proof of } ${\cal s}=1$}\label{app:PropMin}

We present a heuristic argument for the proof first to provide an intuition behind the proof which requires tedious algebra.
To fix ${\cal s}$ we evaluate ${\cal L}_{\covar}$ at $\cov_\update = \cov({\mathcal{s}}) $  and minimize with respect to ${\cal s}$. We will show that  ${\cal s} $ should be $\mathcal{s} = 1$ for ${\cal L}_{\covar}$ to be minimized. For the purpose of this heuristic argument we will assume $\viewP = \eye_{\Nprior}$, $\cov_{\prior,\view} = \text{\sc{Diag}}(\sigma^2_{\prior,\view})$. In this case ${\CL}_{\covar}$ can be written as:
\be
\CL_{\cal s} = \(\sigma_\prior - {\sigma_\prior + \lambda {\cal s} \sigma_\view \over 1 + \lambda} \)^2 + \lambda \(\sigma_\view - {\sigma_\prior + \lambda {\cal s} \sigma_\view \over 1 + \lambda} \)^2
\ee
After a little bit of algebra, we can infer that $\CL_{\cal s} $ is minimized when ${\cal s}=1$. The same conclusion can be reached for a general $\cov_{\prior, \view}$ and $\viewP$, but the algebra is more tedious and we present the proof for a general $\cov_{\prior, \view}$ and $\viewP$ below.

\begin{proof} We will prove this result for the case when the views matrix is not degenerate and $\viewP^T \cov_\view \viewP$ is not degenerate. The proof for a general views matrix can be modified by introducing a regulating parameter $\delta$ and then taking the limit $\delta \rightarrow 0$. 
Or alternatively, the proof can be modified by introducing the Moore-Penrose inverse wherever necessary. 

We will first prove that $W + \BS$ is positive definite.
We know from that $X$ is symmetric and positive definite by definition. Hence $X$ can be written as $\Upsilon\,\Upsilon^T$ for some $\Upsilon$. 
Then, it follows from equation (\ref{eq:Xsol}) that $\(\eye_\Nprior - \viewP^T \CM \viewP\)^{-1}$ is also positive definite.
Using equation (\ref{eq:defSII}) we get,
\be
\label{eq:posWBS}
\(\eye_\Nprior - \viewP^T \CM \viewP\)^{-1} = W + \BS \quad  \implies W + \BS \succ 0
\ee
Let $Q = \viewP^T \cov_\view \viewP$ for convenience. Using Lemma (\ref{lem:lemII}) repeatedly, $S$ can be written as shown below
\be
\label{eq:StoGamma}
S = \mathcal{s} \lambda A^{-\half} \(A^{\half} Q A^{\half}\)^{\half} A^{-\half} 
= \mathcal{s} \lambda W^{-1} \cov^{-\half}_\prior\(\cov^{\half}_\prior W.Q.W \cov^{\half}_\prior\)^{\half}\cov^{-\half}_\prior W^{-1} = \mathcal{s} \lambda \Gamma^{-1}
\ee
where,
\be
\label{eq:Gamma}
\Gamma = W \cov^{\half}_\prior\(\cov^{\half}_\prior W.Q.W \cov^{\half}_\prior\)^{-\half}\cov^{\half}_\prior W
\ee
Similarly, we know that $Y$ is positive semi-definite. Then from equation (\ref{eq:Ysol}), we can conclude that $\lambda \eye_\Nview + \CM$ is also positive semi-definite.{\footnote{Equation (\ref{eq:Ysol}) implies 
\be
\vec{z}^T  \(\lambda \eye_\Nview +   \CM \) \cov^{-\half}_\view Y \cov^{-\half}_\view \(\lambda \eye_\Nview +   \CM \)  \vec{z} =  \lambda \vec{z}^T  \(\lambda \eye_\Nview +   \CM \)  \vec{z}, \qquad \text { for any } \vec{z} \in \mathbb{R}^\Nview
\ee
\hspace{.5cm} Note that LHS is greater than or equal to zero because $Y$ is positive semi-definite. Hence $\(\lambda \eye_\Nview +   \CM \) \succcurlyeq 0 $.  }}
Using equation (\ref{eq:GPrel}) we get,
\be
\label{eq:LamIPlusM}
\viewP^T\(\lambda \eye_\Nview + \CM\)\viewP = \(\eye_\Nprior + G\)^{-1} G W^{-1}
\ee
Now, from the definition of $S$ in equation (\ref{eq:defS}), equations (\ref{eq:StoGamma}) and (\ref{eq:LamIPlusM}) we get,
\be
\label{eq:posGamma}
\viewP^T\(\lambda \eye_\Nview + \CM\)\viewP  =  \lambda \(\mathcal{s}\Gamma +  \lambda W\)^{-1} \implies \(\mathcal{s} \Gamma +  \lambda W\) \succcurlyeq 0
\ee
That is, $ \(\mathcal{s} \Gamma +  \lambda W\)$ is positive semi-definite.{\footnote{Note that, the positive semi-definiteness holds good even if the inverse is replaced by Moore-Penrose inverse in the degenerate case. In the non-degenerate case, $ \(\mathcal{s} \Gamma +  \lambda W\) \succ 0$. }} Now, from equations (\ref{eq:covS}), (\ref{eq:StoGamma}) and (\ref{eq:BisWSW}) we get,
\be
\label{eq:covSGamma}
\cov(\mathcal{s})= (W + s \lambda W \Gamma^{-1} W) \cov_\prior  (W + \mathcal{s} \lambda W \Gamma^{-1} W) = (\mathcal{s} \Gamma +  \lambda W) \viewP^T \cov_\view \viewP (\mathcal{s} \Gamma +  \lambda W) 
\ee
We have used the fact that $\mathcal{s}^2 = 1$ to obtain the above equation. From equations (\ref{eq:covS}) and (\ref{eq:covSGamma}) we get,
\be
\label{eq:sqrtI}
\tr\(\(\cov^{\half}_\prior \covs \cov^{\half}_\prior \)^{\half}\) = \tr\( \cov^{\half}_\prior ( W + \BS)  \cov^{\half}_\prior \)
\ee
\be
\label{eq:sqrtII}
\tr\(\(\cov^{\half}_\view \viewP \covs \viewP^T \cov^{\half}_\view \)^{\half}\) = \tr\( \cov^{\half}_\view \viewP (\mathcal{s} \Gamma +  \lambda W)  \viewP^T \cov^{\half}_\view \)
\ee
In the equation (\ref{eq:sqrtI}), positive square root was chosen positive definiteness of $ W + \BS$ proved in equation (\ref{eq:posWBS}) and in equation (\ref{eq:sqrtII}), it was chosen using positive definiteness of $\(\mathcal{s} \Gamma +  \lambda W\)$ proved in equation (\ref{eq:posGamma}). Using equations (\ref{eq:sqrtI}) and (\ref{eq:sqrtII}) to simplify ${\cal L}_{\covar}$ in (\ref{eq:LCovar}) we get,{\footnote{The algebra is slightly tedious, but if we only focus on the $\mathcal{s}$ dependent terms, the task of simplifying becomes less laborious. 

\hspace{.3cm}Lemma \ref{lem:lemII} was used again.}}
\be
\label{eq:SimplLCovar}
{\CL}_{\covar}\[ \covs; \cov_\prior,  \cov_\view, \viewP  \] = {\CL}^{(0)}_{\covar} - \mathcal{s} \tr\( \viewP^T \cov_\view \viewP W\) = {\CL}^{(0)}_{\covar} - \mathcal{s} \tr\( \cov^{\half}_\view \viewP W \viewP^T \cov^{\half}_\view \)
\ee
where ${\CL}^{(0)}_{\covar}$ is a term independent of $\mathcal{s}$ and we have used ${\mathcal{s}}^2=1$ to obtain the above expression. Since $\( \cov^{\half}_\view \viewP W \viewP^T \cov^{\half}_\view \)$ is positive definite, 
$${\CL}_{\covar}\[ \cov(\mathcal{s} = 1); \cov_\prior,  \cov_\view, \viewP  \]< {\CL}_{\covar}\[ \cov(\mathcal{s}=-1); \cov_\prior,  \cov_\view, \viewP  \]$$
Hence ${\CL}_{\covar}\[ \covs; \cov_\prior,  \cov_\view, \viewP  \]$ is minimized at $\mathcal{s}=1$.
\end{proof}
\section{{\sc Allocation Methodologies Summary}} \label{app:AllocMethod}
In the following, we present the details of the four allocation methodologies $\BLI\, \BLII,\, \GWI,$ and $\GWII$:
\subsection{$\text{\sc BL}_{\text{I}}$ Allocation Method} \label{app:BLImethod}
  \begin{center}
  \scalebox{0.7}{
    \begin{minipage}{\linewidth}
\noindent
\rule{1.05\textwidth}{0.5pt}
\noindent
\rule{1.05\textwidth}{0.5pt}
\begin{algorithm}[H]
\begin{center}
\vspace{-.5cm}
\TitleOfAlgo{$\text{\sc BL}_{\text{I}}$ Allocation Method}
\end{center}

\noindent
{\bf Input}: {$\vmu_\drift$, $\hcov_\Rsc$, $\viewP$, $\nu_\view$, $\cov_\view$, $\tau$, $\gammaR$}

\vspace{.1cm}
{\bf Method}:
\begin{itemize}
 \item 
 Using equation (\ref{BLcov}) we compute $ \cov^{(\vRet)}_{\BL}$:
 \be
  \cov^{(\vRet)}_{\BL} =  \( \(\tau \hat{\cov}_\Rsc\)^{-1} +   \viewP^T \cov_\view^{-1} \viewP\)^{-1}
 \ee
 \item {\sc Covariance Update}: Using equation (\ref{BLCUpdate}) we set $\cov^{(\text{\sc BLI})}_\Est$ to $ \hat{\cov}_{\subRet|\view}$: 
 \be
\cov^{(\BLI)}_\Est \leftarrow  \hat{\cov}_{\subRet|\view}  = \hat{\cov}_{\Rsc} +  \cov^{(\vRet)}_{\BL}
\label{eq:BL1covE}
\ee
\item {\sc Drift Update}: From equation (\ref{BLmu}) and \ref{BLCUpdate}):
\be
\vmm^{(\BLI)}_\Est \leftarrow\vmu_{\BL} = \cov^{(\vRet)}_{\BL}\( \(\tau \hat{\cov}_\Rsc\)^{-1} \vmu_\drift+ \viewP^T\cov_\view^{-1} \vnu_{\view} \)
\label{eq:BL1muE}
\ee
\item {\sc Optimal Weights}: We compute optimal weights with the \BLmodelI \, update as follows:
\be
\wBLI = \text{{\sc MVO}}\[\vmm^{(\BLI)}_\Est , \cov^{(\BLI)}_\Est; \gammaR\]
\label{eq:WBLI}
\ee
\end{itemize}
{\bf Result}: Weights $\wBLI$ computed in equation (\ref{eq:WBLI}).
\end{algorithm}
\rule{1.05\textwidth}{0.5pt}
\end{minipage}
}
\end{center}

  \subsection{$\text{\sc BL}_{\text{II}}$ Allocation Method}\label{app:BLIImethod}
    \begin{center}
  \scalebox{0.7}{
    \begin{minipage}{\linewidth}
\noindent
\rule{1.05\textwidth}{0.5pt}
\noindent
\rule{1.05\textwidth}{0.5pt}
\begin{algorithm}[H]
\begin{center}
\vspace{-.5cm}
\TitleOfAlgo{$\text{\sc BL}_{\text{II}}$ Allocation Method}
\end{center}

\noindent
{\bf Input}: {$\hvRet$, $\hcov_\Rsc$, $\viewP$, $\nu_\viewR$, $\cov_\viewR$, $\gammaR$}

\vspace{.1cm}
{\bf Method}:
\begin{itemize}
 \item {\sc Covariance Update}: Using equation (\ref{BLMcov}) we compute $\cov^{(\BLII)}_\Est$ 
 \be
 \label{eq:BLIIcovE}
\cov^{(\BLII)}_\Est \leftarrow \cov^{(\vec{R})}_{\BL'} = \(  \hcov^{-1}_\Rsc+   \viewP^T \cov_\viewR^{-1} \viewP\)^{-1}
\ee
\item {\sc Drift Update}: Corrections to the drift is computed from equation (\ref{BLmu}):
\be
 \label{eq:BLIImuE}
 \vmm^{(\BLII)}_\Est \leftarrow \vmu^{(\vec{R})}_{\BL'} = \( \hat{\cov}^{-1}_\Rsc +   \viewP^T \cov_\viewR^{-1} \viewP\)^{-1}\( \hcov^{-1}_\Rsc \, \hvRet+ \viewP^T\cov_\viewR^{-1} \vnu_{\view} \)
\ee

\item {\sc Optimal Weights}: We compute optimal weights with the \BLmodelII \, update as follows:
\be
\wBLII = \text{{\sc MVO}}\[\vmm^{(\BLII)}_\Est , \cov^{(\BLII)}_\Est; \gammaR\]
\label{eq:WBLII}
\ee
\end{itemize}
{\bf Result}: Weights $\wBLII$ computed in equation (\ref{eq:WBLII}).
\end{algorithm}
\rule{1.05\textwidth}{0.5pt}
\end{minipage}
}
\end{center}

  \subsection{$\text{\sc GWB}_{\text{I}}$ Allocation Method}\label{app:GWBImethod}
  \begin{center}
  \scalebox{0.7}{
    \begin{minipage}{\linewidth}
\noindent
\rule{1.05\textwidth}{0.5pt}
\noindent
\rule{1.05\textwidth}{0.5pt}
\begin{algorithm}[H]
\begin{center}
\vspace{-.5cm}
\TitleOfAlgo{$\text{\sc GWB}_{\text{I}}$ Allocation Method}
\end{center}

\noindent
{\bf Input}: {$\vmu_\drift$, $\hcov_\Rsc$, $\viewP$, $\nu_\viewd$, $\cov_\viewd$, $\tau$, $\gammaR$, $\lambda$}

\vspace{.1cm}
{\bf Method}:
\begin{itemize}
 \item Drift Update:
 \be
 \IPPT = \(\eye_\Nprior + \viewP^T \viewP\)^{-1}
 \ee
\be
 \label{eq:GWImuE}
 \vmm^{(\GBI)}_\Est \leftarrow \vmm_{\GBI} = \IPPT \( \vmu_\drift + {\lambda} \viewP^T  \vnu_{\viewd} \)
\ee
 \item Covariance Update:
 \be
 A_\drift  = \tau \IPPT \hcov_\Rsc \IPPT
 \ee
 \be
 \EuScript{B}_\viewd = \lambda \IPPT A_\drift^{-\half} \(A_\drift^{\half} \viewP^T \cov_\viewd \viewP A_\drift^{\half}\)^{\half} A_\drift^{-\half}  W
 \ee
\be
 \label{eq:GWIcovE}
 \cov^{(\GBI)}_\Est \leftarrow \cov_\GBI = \hcov_\Rsc + \tau\(\IPPT+ \EuScript{B}_{\viewd}\)\hcov_\Rsc\(\IPPT+ \EuScript{B}_{\viewd}\)
  \ee

\item {\sc Optimal Weights}: We compute optimal weights with the \GWBI \, update as follows:
\be
\wGWI = \text{{\sc MVO}}\[\vmm^{(\GBI)}_\Est , \cov^{(\GBI)}_\Est; \gammaR\]
\label{eq:GWBIapp}
\ee
\end{itemize}
{\bf Result}: Weights $\wGWI$ computed in equation (\ref{eq:GWBIapp}).
\end{algorithm}
\rule{1.05\textwidth}{0.5pt}
\end{minipage}
}
\end{center}

  \subsection{$\text{\sc GWB}_{\text{II}}$ Allocation Method}\label{app:GWBIImethod}
  \begin{center}
  \scalebox{0.7}{
    \begin{minipage}{\linewidth}
\noindent
\rule{1.05\textwidth}{0.5pt}
\noindent
\rule{1.05\textwidth}{0.5pt}
\begin{algorithm}[H]
\begin{center}
\vspace{-.5cm}
\TitleOfAlgo{$\text{\sc GWB}_{\text{II}}$ Allocation Method}
\end{center}

\noindent
{\bf Input}: {$\hvRet$, $\hcov_\Rsc$, $\viewP$, $\nu_\viewR$, $\cov_\viewR$,  $\gammaR$, $\lambda$}

\vspace{.1cm}
{\bf Method}:
\begin{itemize}
 \item Drift Update:
 \be
 \IPPT = \(\eye_\Nprior + \viewP^T \viewP\)^{-1}
 \ee
\be
 \label{eq:GWIImuE}
 \vmm^{(\GBII)}_\Est \leftarrow \vmm_{\GBII} = \IPPT \( \hvRet+ {\lambda} \viewP^T  \vnu_{\viewR} \)
\ee
 \item Covariance Update:
 \be
 A_\Rsc =  \IPPT \hcov_\Rsc \IPPT
 \ee
 \be
 \EuScript{B}_\viewR = \lambda \IPPT A_\Rsc^{-\half} \(A_\Rsc^{\half} \viewP^T \cov_\Rsc \viewP A_\Rsc^{\half}\)^{\half} A_\Rsc^{-\half}  W
 \ee
\be
 \label{eq:GWIIcovE}
 \cov^{(\GBII)}_\Est \leftarrow \cov_\GBII = \(\IPPT+ \EuScript{B}_{\viewR}\)\hcov_\Rsc\(\IPPT+ \EuScript{B}_{\viewR}\)
  \ee

\item {\sc Optimal Weights}: We compute optimal weights with the \GWBI \, update as follows:
\be
\wGWII = \text{{\sc MVO}}\[\vmm^{(\GBII)}_\Est , \cov^{(\GBII)}_\Est; \gammaR\]
\label{eq:GWBIIapp}
\ee
\end{itemize}
{\bf Result}: Weights $\wGWII$ computed in equation (\ref{eq:GWBIIapp}).
\end{algorithm}
\rule{1.05\textwidth}{0.5pt}
\end{minipage}
}
\end{center}

\bibliographystyle{unsrt}

\end{document}